\pdfoutput=1
\newcommand{\printProof}{}
\documentclass[sigconf,screen,nonacm]{acmart}
\newcommand{\tikzset}[1]{}
\newcounter{externalfigure}
\NewDocumentEnvironment{tikzpicture}{O{} +b}
  {\ifcase\value{externalfigure}%
     \includegraphics{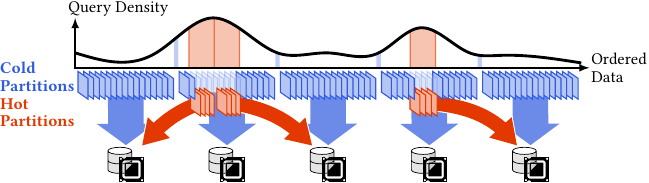}\or%
     \includegraphics{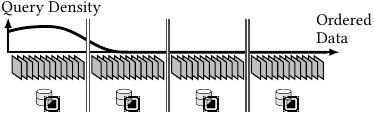}\or%
     \includegraphics{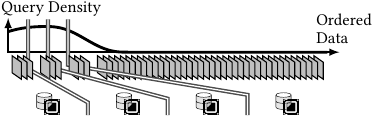}\or%
     \includegraphics{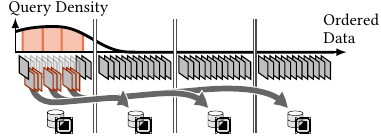}\or%
     \includegraphics{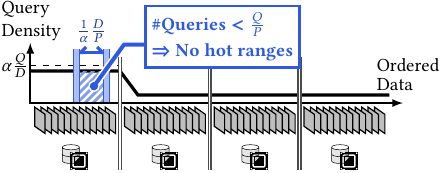}\or%
     \includegraphics{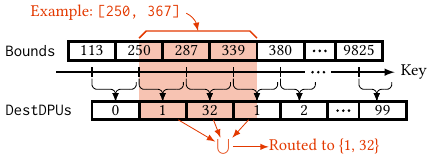}\or%
     \includegraphics{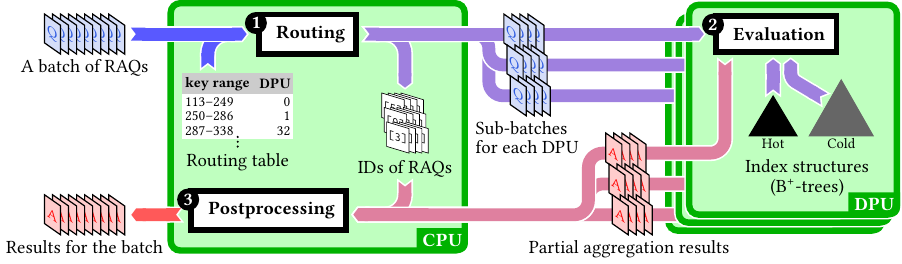}\else%
     \errmessage{tikz-external: more tikzpictures than external figures}\fi%
   \stepcounter{externalfigure}}
  {}

\AtBeginDocument{%
  }

\usepackage{amsmath}
\usepackage{amsfonts}
\usepackage{enumitem}
\usepackage{subcaption}
\usepackage{cleveref}
\usepackage{ifthen}
\usepackage{bm}
\usepackage{siunitx}
\usepackage{mathtools}
\usepackage{url}

\usepackage[ruled,lined,linesnumbered,nofillcomment,noend]{algorithm2e}
\DontPrintSemicolon
\SetKwInOut{Input}{Input}
\SetKwInOut{Output}{Output}
\SetKwRepeat{DoWhile}{do}{while}
\SetKwFunction{HotColdPartition}{HotColdPartition}
\SetKwFunction{FindHot}{FindHot}
\SetKwFunction{FindHotGreedy}{FindHotGreedy}
\SetKwFunction{RouteBatchedRAQ}{RouteBatchedRAQ}
\SetKwFunction{EvalSubBatch}{EvalSubBatch}
\SetKwFunction{PostProcessBatchedRAQ}{PostProcessBatchedRAQ}
\SetKwFunction{NQrys}{NQrys}
\SetKwFunction{SizeOf}{Size}
\SetKwFunction{KRng}{KRng}
\SetKwProg{Function}{Function}{}{end}
\renewcommand{\SetProgSty}[1]{\renewcommand{\ProgSty}[1]{\textnormal{\csname#1\endcsname{##1}}\unskip}}
\SetArgSty{textnormal}
\SetProgSty{textnormal}
\SetFuncArgSty{textnormal}
\SetCommentSty{textnormal}
\SetAlgoSkip{medskip}
\newcommand{\relmiddle}[1]{\mathrel{}\middle#1\mathrel{}}

\crefname{figure}{Fig.}{Figs.}
\Crefname{figure}{Figure}{Figures}
\crefname{table}{Table}{Tables}
\crefname{section}{Section}{Sections}
\crefname{appendix}{Appendix}{Appendices}
\crefname{algocf}{Algorithm}{Algorithms}
\crefname{theorem}{Theorem}{Theorems}
\crefname{lemma}{Lemma}{Lemmas}

\ifdefined\printProof%
\includecomment{Proof-in-Appendix}
\else%
\excludecomment{Proof-in-Appendix}
\fi%

\newcommand{\lineref}[1]{line~\ref{#1}}

\newcommand{\argmax}{\mathop{\arg\!\max}\limits}

\newlength{\notationsymwidth}

\newcommand{\BpForest}{B${}^\text{+}$-Forest\xspace}
\newcommand{\BpTree}{B${}^\text{+}$-tree\xspace}
\newcommand{\BpTrees}{B${}^\text{+}$-trees\xspace}
\newcommand{\PimTree}{PIM-tree\xspace}

\newcounter{extfootnote}

\definecolor{WhiteBlue}{HTML}{f9fdff}
\definecolor{LightBlue}{HTML}{dbf2ff}
\definecolor{Blue}{HTML}{2d5ae0}
\definecolor{DarkBlue}{HTML}{143447}

\definecolor{WhiteOrange}{HTML}{fffcf9}
\definecolor{LightOrange}{HTML}{ffeee0}
\definecolor{Orange}{HTML}{e33900}
\definecolor{DarkOrange}{HTML}{4f2604}

\definecolor{Purple}{HTML}{9400d3}
\definecolor{WhitePurple}{HTML}{faf0ff}
\definecolor{DarkPurple}{HTML}{441c58}

\definecolor{Green}{HTML}{009e73}
\definecolor{WhiteGreen}{HTML}{eefffa}
\definecolor{DarkGreen}{HTML}{144437}

\ifdefined\printProof
\else
\copyrightyear{2026}
\acmYear{2026}
\setcopyright{cc}
\setcctype{by}
\acmConference[ICPP '26]{Proceedings of the 55th International Conference on Parallel Processing}{September 28-October 01, 2026}{Singapore, Singapore}
\acmBooktitle{Proceedings of the 55th International Conference on Parallel Processing (ICPP '26), September 28-October 01, 2026, Singapore, Singapore}
\acmDOI{10.1145/3832810.3832916}
\acmISBN{979-8-4007-2657-6/2026/09}
\startPage{1}
\fi

\begin{document}


\author{Takato Hideshima}
\email{hideshima-takato182@g.ecc.u-tokyo.ac.jp}
\orcid{0009-0001-8078-3898}
\affiliation{%
  \institution{The University of Tokyo}
  \city{Tokyo}
  \country{Japan}
}

\author{Shigeyuki Sato}
\orcid{0000-0002-1496-1422}
\email{sato.shigeyuki@acm.org}
\affiliation{%
  \institution{The University of Electro-Communications}
  \city{Tokyo}
  \country{Japan}
}

\author{Tomoharu Ugawa}
\orcid{0000-0002-3849-8639}
\email{tugawa@acm.org}
\affiliation{%
  \institution{The University of Tokyo}
  \city{Tokyo}
  \country{Japan}
}


\begin{CCSXML}
<ccs2012>
   <concept>
       <concept_id>10002951.10002952.10003190.10003195.10010836</concept_id>
       <concept_desc>Information systems~Key-value stores</concept_desc>
       <concept_significance>500</concept_significance>
       </concept>
   <concept>
       <concept_id>10011007.10010940.10010941.10010949.10010950.10010955</concept_id>
       <concept_desc>Software and its engineering~Distributed memory</concept_desc>
       <concept_significance>500</concept_significance>
       </concept>
   <concept>
       <concept_id>10002951.10002952.10003190.10010840</concept_id>
       <concept_desc>Information systems~Main memory engines</concept_desc>
       <concept_significance>500</concept_significance>
       </concept>
 </ccs2012>
\end{CCSXML}

\ccsdesc[500]{Information systems~Key-value stores}
\ccsdesc[500]{Software and its engineering~Distributed memory}
\ccsdesc[500]{Information systems~Main memory engines}

\tikzset{
 bbox/.style={inner sep=0pt, line width=0pt, fit={#1}},
 ellipsis/.style={dash pattern=on 0pt off 2*#1, dash phase=#1, line width=#1, line cap=round},
}

\title{Query Density-Driven Partitioning for Spatiotemporal Load Balancing on Processing-in-Memory Systems}

\begin{abstract}
Processing-in-Memory (PIM) systems, which consist of many processors
with small local memory, have recently emerged as commercial products
and attracted much attention as a means of overcoming the memory wall,
particularly in the context of in-memory database technology.
The state-of-the-art PIM-oriented index \PimTree has been demonstrated
to achieve asymptotically good spatiotemporal load balancing---query
loads and data sizes are balanced among processors---for skewed queries,
by trading spatial locality.  Unfortunately, such a sacrifice of spatial
locality hinders the PIM-oriented processing of range-aggregate queries.
To achieve both spatiotemporal load balancing and efficiently executing
range-aggregate queries on PIM systems, we develop a query
density-driven key-range partitioning scheme.  It balances query density
among PIM processors, allowing us to strike a balance between query load
and data size via a parameter.  We then develop \BpForest, a
PIM-oriented \BpTree variant based on our partitioning scheme.
 Experimental results demonstrated that it exhibits
higher skew resistance than a \BpTree based on
space-constrained, query-load-balanced, density-unaware partitioning,
and performance comparable to \PimTree in point-get queries, as well as
efficient support for range-aggregate queries.
\end{abstract}

\keywords{Processing-in-Memory, Key-value store, Range-aggregate query, Spatiotemporal load balancing, Query density-driven partitioning}

\maketitle

\section{Introduction}

\newlength{\DPUSize}
\newcommand{\drawDPU}[3][\DPUSize]{
 \setlength{\DPUSize}{#1}
 \coordinate (tmp) at #2;
 \coordinate (memory bottom left) at ($(tmp) + (\DPUSize * -15/30, \DPUSize * -3/30)$);
 \coordinate (memory bottom right) at ($(memory bottom left) + (\DPUSize * 20/30, 0)$);
 \coordinate (memory top left) at ($(memory bottom left) + (0, \DPUSize * 16/30)$);

 \fill[white] let \p1 = ($(memory bottom right) - (memory bottom left)$),
                  \p2 = ($(memory top left) - (memory bottom left)$) in
  [x radius={\x1 / 2}, y radius = {\DPUSize * 3/30}]
   (memory top left) arc[start angle=180, end angle=-180];
 \fill[black!20] let \p1 = ($(memory bottom right) - (memory bottom left)$),
                     \p2 = ($(memory top left) - (memory bottom left)$) in
  [x radius={\x1 / 2}, y radius = {\DPUSize * 3/30}]
   (memory bottom right) arc[start angle=0, end angle=-60]
   -- ++(0, \y2)
   arc[start angle=-60, end angle=0]
   -- cycle;
 \fill[black!10] let \p1 = ($(memory bottom right) - (memory bottom left)$),
                     \p2 = ($(memory top left) - (memory bottom left)$) in
  [x radius={\x1 / 2}, y radius = {\DPUSize * 3/30}]
   (memory bottom left) arc[start angle=-180, end angle=-60]
   -- ++(0, \y2)
   arc[start angle=-60, end angle=-180]
   -- cycle;
 \draw[line width={\DPUSize * 0.02}] let \p1 = ($(memory bottom right) - (memory bottom left)$),
                                   \p2 = ($(memory top left) - (memory bottom left)$) in
  [x radius={\x1 / 2}, y radius = {\DPUSize * 3/30}]
   (memory top left) -- (memory bottom left)
   arc[start angle=-180, end angle=0]
   -- (memory top left -| memory bottom right)
   ($(memory bottom left)!.5!(memory top left)$) arc[start angle=-180, end angle=0]
   (memory top left) arc[start angle=-180, end angle=90] coordinate (memory north)
   arc[start angle=90, end angle=180];
 \node[inner sep=0pt] (core) at (memory bottom right) {\includegraphics[width={0.667\DPUSize}]{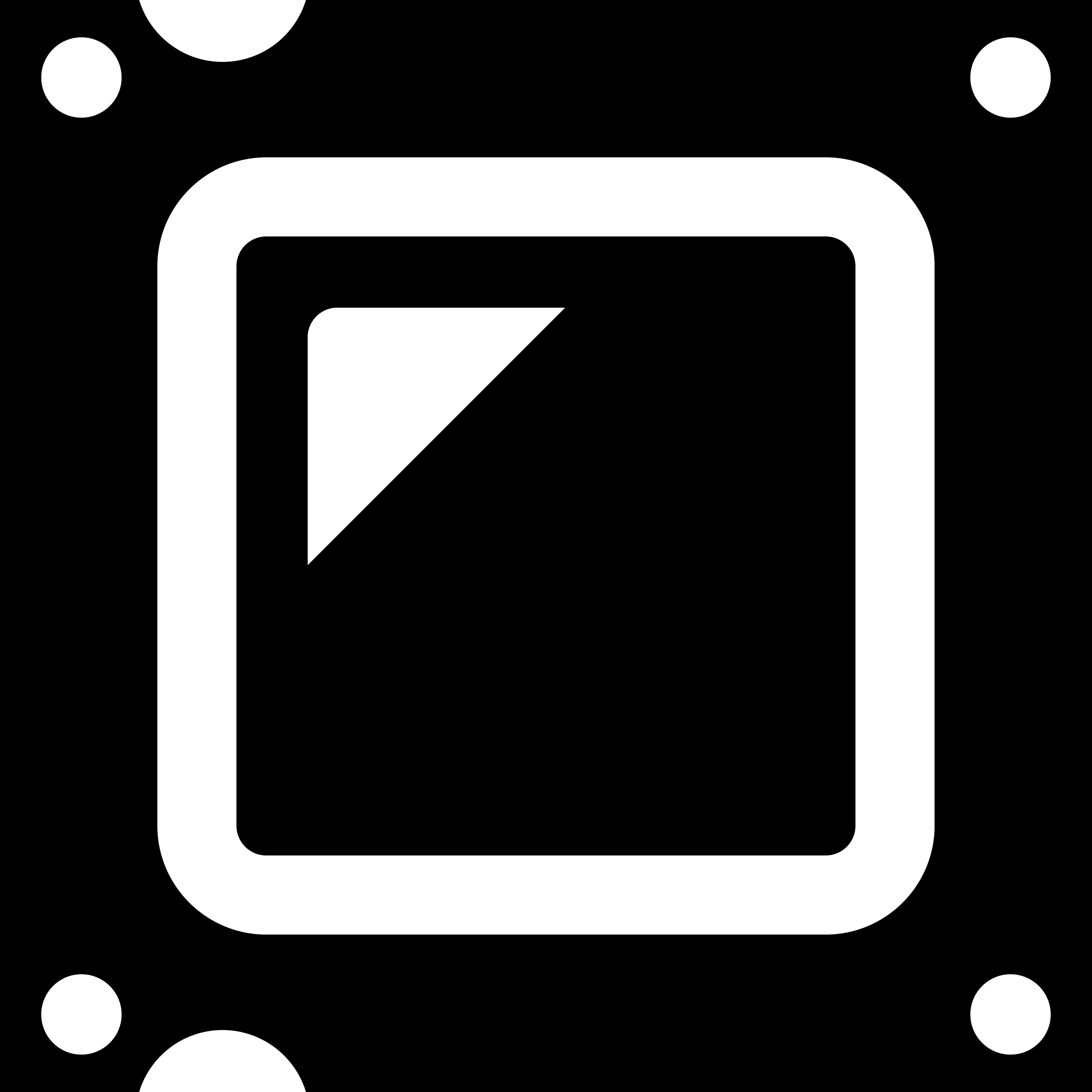}};

 \node[bbox={(memory north) (memory top left) (core.south east)}] (#3) {};
}
\newlength{\ItemWidth}
\newlength{\ItemHeight}
\newlength{\ItemSpacing}
\newlength{\BasePartSpacing}
\newlength{\HotPartSpacing}
\begin{figure*}[tb]
 \begin{minipage}[b]{0.7\linewidth}
  \centering
  \small
  \begin{tikzpicture}
   \setlength{\ItemWidth}{1.4ex}
   \setlength{\ItemHeight}{3.9ex}
   \setlength{\ItemSpacing}{0.8ex}
   \setlength{\BasePartSpacing}{2.2ex}
   \setlength{\HotPartSpacing}{1.8ex}

   \coordinate (key min);
   \coordinate (key max) at ($(key min) + (\ItemSpacing * 75 + \BasePartSpacing * 5, 0)$);
   \coordinate (load zero) at (key min);
   \coordinate (load max) at ($(load zero) + (0, 7ex)$);

   \coordinate (cold range0 key min) at (key min);
    \coordinate (cold range0 key max) at ($(key min)!.2!(key max)$);
   \coordinate (cold range1 key min) at (cold range0 key max);
    \coordinate (cold range1 key max) at ($(key min)!.4!(key max)$);
   \coordinate (cold range2 key min) at (cold range1 key max);
    \coordinate (cold range2 key max) at ($(key min)!.6!(key max)$);
   \coordinate (cold range3 key min) at (cold range2 key max);
    \coordinate (cold range3 key max) at ($(key min)!.8!(key max)$);
   \coordinate (cold range4 key min) at (cold range3 key max);
    \coordinate (cold range4 key max) at (key max);

   \coordinate (hot0 begin) at ($(cold range1 key min)!.125!(cold range1 key max)$);
    \coordinate (hot0 end) at ($(cold range1 key min)!.375!(cold range1 key max)$);
   \coordinate (hot1 begin) at (hot0 end);
    \coordinate (hot1 end) at ($(cold range1 key min)!.625!(cold range1 key max)$);
   \coordinate (hot2 begin) at ($(cold range3 key min)!.3125!(cold range3 key max)$);
    \coordinate (hot2 end) at ($(cold range3 key min)!.5625!(cold range3 key max)$);

   \newcommand{\plotQueryDist}[2][]{
    #2
     let \p1 = (key min), \p2 = ($(key max |- load max) - (key min)$) in
     plot[samples=100, variable=\t, domain={-0.001:0.161}] (\t * \x2 + \x1, {(+118.922 * \t*\t*\t +0.000 * \t*\t -3.044 * \t +0.300) * \y2 + \y1}) #1
     plot[samples=100, variable=\t, domain={0.159:0.271}] (\t * \x2 + \x1, {(-496.223 * \t*\t*\t +295.270 * \t*\t -50.288 * \t +2.820) * \y2 + \y1}) #1
     plot[samples=100, variable=\t, domain={0.269:0.401}] (\t * \x2 + \x1, {(+464.405 * \t*\t*\t -482.839 * \t*\t +159.802 * \t -16.088) * \y2 + \y1}) #1
     plot[samples=100, variable=\t, domain={0.399:0.501}] (\t * \x2 + \x1, {(-388.981 * \t*\t*\t +541.224 * \t*\t -249.823 * \t +38.528) * \y2 + \y1}) #1
     plot[samples=100, variable=\t, domain={0.499:0.601}] (\t * \x2 + \x1, {(+455.957 * \t*\t*\t -726.183 * \t*\t +383.880 * \t -67.089) * \y2 + \y1}) #1
     plot[samples=100, variable=\t, domain={0.599:0.681}] (\t * \x2 + \x1, {(-969.951 * \t*\t*\t +1840.452 * \t*\t -1156.101 * \t +240.907) * \y2 + \y1}) #1
     plot[samples=100, variable=\t, domain={0.679:0.781}] (\t * \x2 + \x1, {(+742.694 * \t*\t*\t -1653.343 * \t*\t +1219.680 * \t -297.603) * \y2 + \y1}) #1
     plot[samples=100, variable=\t, domain={0.779:0.841}] (\t * \x2 + \x1, {(-537.785 * \t*\t*\t +1342.978 * \t*\t -1117.451 * \t +310.051) * \y2 + \y1}) #1
     plot[samples=100, variable=\t, domain={0.839:1.001}] (\t * \x2 + \x1, {(+25.502 * \t*\t*\t -76.507 * \t*\t +74.917 * \t -23.812) * \y2 + \y1}) #1
    ;
   }
   \begin{scope}
    \clip (hot0 begin) rectangle (hot1 end |- load max)
          (hot2 begin) rectangle (hot2 end |- load max);
    \plotQueryDist[coordinate (tmp) -- (tmp |- key min) -| cycle]{\fill[Orange!30]}
   \end{scope}
   \begin{scope}
    \clip ($(hot0 begin) + (-0.1ex, 0)$) rectangle ($(hot0 begin |- load max) + (0.1ex, 0)$)
          ($(hot1 begin) + (-0.1ex, 0)$) rectangle ($(hot1 begin |- load max) + (0.1ex, 0)$)
          ($(hot1 end) + (-0.1ex, 0)$) rectangle ($(hot1 end |- load max) + (0.1ex, 0)$)
          ($(hot2 begin) + (-0.1ex, 0)$) rectangle ($(hot2 begin |- load max) + (0.1ex, 0)$)
          ($(hot2 end) + (-0.1ex, 0)$) rectangle ($(hot2 end |- load max) + (0.1ex, 0)$);
    \plotQueryDist[coordinate (tmp) -- (tmp |- key min) -| cycle]{\fill[Orange!70]}
   \end{scope}
   \begin{scope}
    \clip ($(cold range0 key max) + (-0.3ex, 0)$) rectangle ($(cold range0 key max |- load max) + (0.3ex, 0)$)
          ($(cold range1 key max) + (-0.3ex, 0)$) rectangle ($(cold range1 key max |- load max) + (0.3ex, 0)$)
          ($(cold range2 key max) + (-0.3ex, 0)$) rectangle ($(cold range2 key max |- load max) + (0.3ex, 0)$)
          ($(cold range3 key max) + (-0.3ex, 0)$) rectangle ($(cold range3 key max |- load max) + (0.3ex, 0)$)
          ($(cold range4 key max) + (-0.3ex, 0)$) rectangle ($(cold range4 key max |- load max) + (0.3ex, 0)$);
    \plotQueryDist[coordinate (tmp) -- (tmp |- key min) -| cycle]{\fill[Blue!30]}
   \end{scope}
   \plotQueryDist{\draw[line width=0.4ex]}

   \draw[line width=0.2ex, -latex, line cap=rect]
    (key min) -- ($(key max) + (1.2ex, 0)$) coordinate (tmp);
   \node[inner sep=0.2ex, anchor=west] (item axis label) at (tmp) {\shortstack[l]{Ordered\\Data}};

   \draw[line width=0.2ex, -latex, line cap=rect]
    (load zero) -- (load max);
   \node[inner sep=0.2ex, anchor=south west] (query axis label) at ($(load max) + (-1.2ex, 0)$) {Query Density};

   \coordinate (cold range height) at ($(key min) + (0, -0.5ex)$);
   \coordinate (hot range height) at ($(cold range height) + (0, -\ItemHeight + \ItemWidth * 0.7 + \ItemSpacing * 0.7)$);

   \foreach \i in {0, ..., 4} {
    \coordinate (cold range\i\space begin) at ($(cold range height -| cold range\i\space key min) + (\ItemSpacing * 0.5, 0)$);
     \coordinate (cold range\i\space end) at ($(cold range height -| cold range\i\space key max) + (\ItemSpacing * 1.5 - \BasePartSpacing, 0)$);
   }
   \coordinate (hot range0 begin) at ($(hot range height -| cold range1 begin)!.125!(hot range height -| cold range1 end)$);
    \coordinate (hot range0 end) at ($(hot range height -| cold range1 begin)!.375!(hot range height -| cold range1 end) + (\ItemSpacing * 2 / 3 - \HotPartSpacing * 2 / 3, 0)$);
   \coordinate (hot range1 begin) at ($(hot range height -| cold range1 begin)!.375!(hot range height -| cold range1 end) + (\HotPartSpacing / 2 - \ItemSpacing / 2, 0)$);
    \coordinate (hot range1 end) at ($(hot range height -| cold range1 begin)!.625!(hot range height -| cold range1 end) + (\ItemSpacing / 6 - \HotPartSpacing / 6, 0)$);
   \coordinate (hot range2 begin) at ($(hot range height -| cold range3 begin)!.3125!(hot range height -| cold range3 end)$);
    \coordinate (hot range2 end) at ($(hot range height -| cold range3 begin)!.5625!(hot range height -| cold range3 end)$);

   \foreach \i in {0, ..., 4} {
    \coordinate (cold\i\space assign base) at ($(cold range\i\space key min |- hot range height)!.5!(cold range\i\space key max |- hot range height) + (0, \ItemSpacing * -0.7)$);

    \drawDPU[5ex]{($(cold\i\space assign base) + (0, -\ItemHeight - 2.5ex - 4ex)$)}{proc\i}

    \begin{scope}[on behind layer]
     \draw[Blue!70, line width=5ex, -{Triangle[width=8ex, length=3ex]}] (cold\i\space assign base) -- ($(proc\i.north) + (0, 0.3ex)$);
    \end{scope}
   }

   \begin{scope}[
    cold item/.style={shift={#1}, yslant=-0.7, draw=Blue, fill=Blue!35, line width=0.15ex},
    hot item/.style={shift={#1}, yslant=-0.7, draw=Orange, fill=Orange!40, line width=0.15ex},
   ]
    \foreach \i in {15, ..., 0} {
     \path[cold item={($(cold range4 begin)!{1.0 / 16 * \i}!(cold range4 end)$)}] (0, 0) rectangle (\ItemWidth, -\ItemHeight + \ItemWidth * 0.7);
    }

    \foreach \i in {15, ..., 9} {
     \path[cold item={($(cold range3 begin)!{1.0 / 16 * \i}!(cold range3 end)$)}] (0, 0) rectangle (\ItemWidth, -\ItemHeight + \ItemWidth * 0.7);
    }
    \foreach \i in {8, ..., 5} {
     \path[cold item={($(cold range3 begin)!{1.0 / 16 * \i}!(cold range3 end)$)}, opacity=0.25] (0, 0) rectangle (\ItemWidth, -\ItemHeight + \ItemWidth * 0.7);
    }
    \foreach \i in {3, ..., 0} {
     \path[hot item={($(hot range2 begin)!{1.0 / 4 * \i}!(hot range2 end)$)}] (0, 0) rectangle (\ItemWidth, -\ItemHeight + \ItemWidth * 0.7)
      coordinate (hot2 item\i);
    }
    \foreach \i in {4, ..., 0} {
     \path[cold item={($(cold range3 begin)!{1.0 / 16 * \i}!(cold range3 end)$)}] (0, 0) rectangle (\ItemWidth, -\ItemHeight + \ItemWidth * 0.7);
    }

    \foreach \i in {15, ..., 0} {
     \path[cold item={($(cold range2 begin)!{1.0 / 16 * \i}!(cold range2 end)$)}] (0, 0) rectangle (\ItemWidth, -\ItemHeight + \ItemWidth * 0.7);
    }

    \foreach \i in {15, ..., 10} {
     \path[cold item={($(cold range1 begin)!{1.0 / 16 * \i}!(cold range1 end)$)}] (0, 0) rectangle (\ItemWidth, -\ItemHeight + \ItemWidth * 0.7);
    }
    \foreach \i in {9, ..., 2} {
     \path[cold item={($(cold range1 begin)!{1.0 / 16 * \i}!(cold range1 end)$)}, opacity=0.25] (0, 0) rectangle (\ItemWidth, -\ItemHeight + \ItemWidth * 0.7);
    }
    \foreach \i in {3, ..., 0} {
     \path[hot item={($(hot range1 begin)!{1.0 / 4 * \i}!(hot range1 end)$)}] (0, 0) rectangle (\ItemWidth, -\ItemHeight + \ItemWidth * 0.7)
      coordinate (hot1 item\i);
    }
    \foreach \i in {3, ..., 0} {
     \path[hot item={($(hot range0 begin)!{1.0 / 4 * \i}!(hot range0 end)$)}] (0, 0) rectangle (\ItemWidth, -\ItemHeight + \ItemWidth * 0.7)
      coordinate (hot0 item\i);
    }
    \foreach \i in {1, ..., 0} {
     \path[cold item={($(cold range1 begin)!{1.0 / 16 * \i}!(cold range1 end)$)}] (0, 0) rectangle (\ItemWidth, -\ItemHeight + \ItemWidth * 0.7);
    }

    \foreach \i in {15, ..., 0} {
     \path[cold item={($(cold range0 begin)!{1.0 / 16 * \i}!(cold range0 end)$)}] (0, 0) rectangle (\ItemWidth, -\ItemHeight + \ItemWidth * 0.7);
    }
   \end{scope}

   \foreach \i in {0, ..., 2} {
    \node[bbox={(hot range\i\space begin) (hot\i\space item3)}] (hot range\i) {};
   }
   \begin{scope}[
    on behind layer,
    hot assignment/.style={Orange, line width=2ex, -{Triangle[width=4ex, length=3.464ex, flex=1]}, shorten >= 0.3ex,
                           preaction={draw=white, line width=2.3ex, -{Triangle[width=4.520ex, length=3.914ex, flex=1]}, shorten <= -0.15ex, shorten >= 0ex}}
   ]
    \draw[hot assignment, out=200, in=50] (hot range0.center) to (proc0);
    \draw[hot assignment, out=-10, in=130] (hot range1.center) to (proc2);
    \draw[hot assignment, out=-5, in=125] (hot range2.center) to (proc4);
   \end{scope}

   \begin{scope}[inner sep=0pt, font=\bfseries]
    \node[text=Orange, anchor=north east] (hot range label) at ($(hot range height -| key min) + (0, -1.4ex)$) {\shortstack[l]{
     Hot \\
     Partitions}};
    \node[text=Blue, anchor=south west] at ($(hot range label.north west) + (0, 0.9ex)$) {\shortstack[l]{
     Cold \\
     Partitions}};
   \end{scope}
  \end{tikzpicture}
  \caption{Overview of query density-driven partitioning.}
  \label{fig:hot/cold-part}
  \end{minipage}
 \begin{minipage}[b]{0.28\linewidth}
  \centering
  \includegraphics[width=0.7\linewidth]{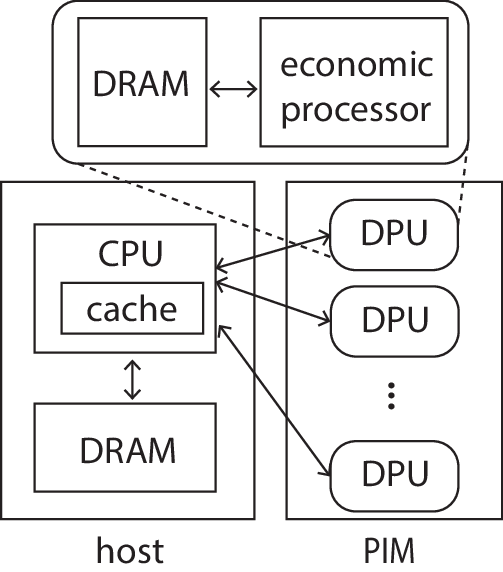}
  \caption{Structure of UPMEM PIM.}
  \label{fig:UPMEM}
 \end{minipage}
\end{figure*}

The ``memory wall'' problem has become increasingly challenging for
data-intensive computing because physical data movement incurs
inevitable costs.  One natural, promising approach to
breaking the memory wall is to distribute data across the small local memory
of many processors and process it in parallel with reduced data
movement.  This idea is commonly shared by Massively Parallel Processor
Array (MPPA)~\cite{butts07:MPPA,dupont15:Kalray,hatta25:PEZY-SC4s} and
Processing-in-Memory (PIM)~\cite{asifuzzaman23:survey_pim}
architectures, which tackle the memory wall.

Since the first real PIM system UPMEM PIM~\cite{UPMEM} emerged, a
variety of PIM
applications~\cite{nider21:UPMEM,gomez-luna22:PrIM,falevoz24:energy_efficiecy}
have been studied.  One of the most promising applications is in-memory
databases~\cite{kepe19:db_pim,bernhardt23:pimDB}, particularly key-value
stores, for two major reasons.  They are easy to exploit parallelism by
batching queries as in recent batch-parallel data
structures~\cite{wang22:ParGeo,wheatman24:CPMA,kang21:PIM_model,kang22:PIM-tree}.
Computation with database indexes is data-driven; i.e., computations
arise only at relevant parts of the entire data.
For in-memory key-value stores on UPMEM PIM, various PIM-oriented index
data
structures~\cite{choe19:cds_ndp,choe22:HybriDS,kang21:PIM_model,kang22:PIM-tree,kang23:PIM-trie,hua24:RADAR,cui25:PIMLex}
were developed, supporting point and range queries.

The main challenge in in-memory key-value stores on UPMEM PIM, or many
processors with small local memory, has been resistance to skewed
queries, where data accesses of workloads are non-uniform, biased as in
real-world workloads~\cite{cao20:RocksDB_workloads}.  To fully leverage
systems like UPMEM PIM, not only query loads on processors but also the
data sizes assigned to them must be balanced even under skewed queries.
We call this requirement \emph{spatiotemporal} load balancing.

The state-of-the-art data structure
\PimTree~\cite{kang22:PIM-tree} has achieved asymptotically good
spatiotemporal load balancing.
A key trick of \PimTree for spatiotemporal load balancing is random
scattering of small (specifically, cache line-granular) chunks of the
key space over PIM processors.  It works asymptotically very well for
point queries and range-scan queries because queries to the same chunk
can be fused, and frequently accessed chunks are evenly distributed with
high probability.  The resultant spatiotemporal balance satisfies the
resource constraints of UPMEM PIM.

Unfortunately, random chunk scattering sacrifices spatial
locality.  Range queries with long key ranges result in lying over many
processors, thereby amplifying the communication between CPUs and PIM
processors.  Such a sacrifice of spatial locality becomes more severe for range-aggregate
queries because ranges that fit into a single processor can be
efficiently aggregated with minimal communication, which is the 
PIM-friendly situation.  To make matters
worse, query fusion, which is feasible for range-scan queries,
does not work well with user-defined aggregate
functions~\cite{cohen06:udaf,foufoulas23:udf}.
 It thus remains an open challenge to achieve spatiotemporal load balance
under query skew while preserving the spatial locality required for
efficient range-aggregate query processing.

To this end, we present a novel workload-driven key-range partitioning
scheme for ordered-key indexes.  It partitions the key range to balance
\emph{query density}, i.e., the number of hit queries per data amount,
across PIM processors.  High-density, i.e., \emph{hot}, ranges are
identified and distributed so that each PIM processor has at most one
hot range and the key-value pairs therein, as illustrated in
\cref{fig:hot/cold-part}.  The remaining ranges, i.e., cold ranges, and
the key-value pairs therein are assigned to PIM processors in a
block-partitioning manner.  Supposing that skewed queries make hot
ranges few and narrow, a dominant part of the data is distributed
contiguously and mostly evenly, preserving spatial locality, while query
loads are balanced.  Our partitioning algorithm also has a tuning
parameter that allows us to find a sweet spot of the trade-off between
query load balance and data size balance empirically.  Moreover, our query
density-driven approach is index-agnostic and applicable to any
ordered-key index.

Then, we develop \BpForest, an adaptation of our partitioning algorithm
to \BpTrees on top of UPMEM PIM by following the PIM-oriented hybrid
design~\cite{choe22:HybriDS}, where the upper part of the \BpTree
resides as a routing table on the CPU side, and the lower part, i.e., a
forest of subtrees, is distributed to the PIM side.  Our algorithm for
executing batched general range-aggregate queries on \BpTrees follows
this structure straightforwardly; it performs query routing on the CPU
side by following the routing table, query execution for subtrees on the
PIM side, and the final aggregation on the CPU side.  It is
simple yet efficiently meets the resource constraints of UPMEM PIM,
thanks to the high degree of spatial locality of our partitioning
scheme.

We experimentally evaluate \BpForest against
 two baselines: a distributed \BpTree based
on query-load-balanced yet density-unaware variable-sized block
partitioning (i.e., classic chains-on-chains
partitioning~\cite{pinar04:optimal_1d_part}) under space constraints; and
\PimTree, the state-of-the-art PIM-oriented index.
\BpForest has achieved more robust load balancing under space
constraints for skewed queries than the density-unaware partitioned
\BpTree, demonstrating the effectiveness of query density-driven
partitioning for skew resistance.  \BpForest has achieved performance
comparable to \PimTree in an apples-to-apples comparison on point-get
queries, yet has been able to handle range-aggregate queries
efficiently, unlike \PimTree, and benefit from larger batches of
queries.

Our main contributions are summarized as follows:
\begin{itemize}
 \item We have presented a query-driven key-range partitioning scheme
       for ordered-key indexes, for spatiotemporal load balancing
       (\cref{sec:load_balance}).  It balances query
       density among PIM processors by distributing high query-density
       ranges, offering a tuning parameter to control the balance of
       query load and data size.

 \item We have developed \BpForest by applying our partitioning scheme
       to \BpTrees, following the PIM-oriented hybrid
       design~\cite{choe22:HybriDS} on top of UPMEM PIM
       (\cref{sec:bp-forest}).  Our algorithm for batched
       range-aggregate queries to \BpForest is simple yet efficient,
       thanks to the high degree of spatial locality of our partitioning
       scheme.

 \item We have experimentally demonstrated that \BpForest exhibits higher
       skew resistance than a distributed \BpTree based on
       query-load-balanced yet density-unaware variable-sized block
       partitioning~\cite{pinar04:optimal_1d_part} (\cref{sec:eval}). It also performs
       comparably to the state-of-the-art PIM-oriented index
       \PimTree~\cite{kang22:PIM-tree} in point-get queries, despite its ability to handle
       range-aggregate queries efficiently.
\end{itemize}

\subsubsection*{Scope and Limitations}
This paper assumes offline full rebalancing based on the presented partitioning
scheme for brevity.  A sophisticated online partial rebalancing method is
addressed in our concurrent paper~\cite{hideshima26:partial_resharding}
for dynamic workload changes.
Our approach does not depend on UPMEM PIM and is naturally
amenable to a wide range of MPPA architectures, whereas \BpForest is
not feasible for lockstep PIM architectures, such as
HBM-PIM~\cite{lee21:HBM-PIM}.

\section{UPMEM PIM}
\label{sec:upmem}

The computational model of UPMEM PIM consists of a host and a large
number of PIM processors called DPUs operating in parallel, as illustrated
in \cref{fig:UPMEM}.  Each DPU consists of only 64~MiB DRAM and an
economic \SI{400}{MHz} processor.  UPMEM PIM has a conventional
heterogeneous configuration: the host utilizes the DPUs as accelerators.
The CPU and all DPUs do not share memory, and every
communication is explicitly initiated by the CPU and allowed only
between the CPU and the DPUs.  In typical use cases, DPUs perform
homogeneous computations, i.e., the same kernel code is simultaneously
invoked for all DPUs.

In the computational model of UPMEM PIM, the CPU and DPUs have different
strengths.  The CPU is advantageous to
high-locality computations owing to the processor cache; the DPUs are
advantageous for low-locality parallel computations because data
accesses dominate the total time, and their throughput benefits from
parallel accesses.  A practical choice of concurrency control is only
barrier synchronization such that the CPU waits for all DPUs to
terminate.  See \cite{gomez-luna22:PrIM} for the details.

Considering the above, spatiotemporal load balancing is necessary to
fully leverage UPMEM PIM.
Spatial load balance is required to handle
such large-scale data that is suitable for parallel processing on DPUs,
because of low-capacity DRAM per DPU.
Temporal load balance among DPUs is also required
 to achieve the scalability derived from the
parallelism of DPUs.

\section{Batched Range-Aggregate Queries}
\label{sec:range-aggregate}

This paper deals with user-defined range-aggregate queries (RAQs). An
RAQ consists of a key range $r_q$ and an aggregator $(f,\oplus)$ and
calculates $\bigoplus f(k,v)$ for all key-value pairs $(k,v)$ such that
$k \in r_q$. As an instance of RAQs, a range-count-if query counts items
in a specified key range that meet a specified condition: with an
aggregator $(f, +)$ such that $f(k, 2i) = 1$ and $f(k, 2i+1) = 0$ for $i
\in \mathbb{Z}$, it counts the number of items whose values are even.
For aggregator $(f,\oplus)$, we call $\oplus$ the reduction operator.
We assume that the reduction operator is associative and has an
identity.  RAQs are a generalization of range-scan queries, where a
query of range $r_q$ returns all key-value pairs $(k,v)$ such that $k
\in r_q$.

We address batch-parallel queries, i.e., parallel processing of a batch
of RAQs.  It is widely used as a representation of parallel tasks for
various data structures, such as k-d trees~\cite{wang22:ParGeo}, skip
lists~\cite{kang21:PIM_model,kang22:PIM-tree}, array-based
B-trees~\cite{wheatman24:CPMA}, and tries~\cite{kang23:PIM-trie}.
Following these prior studies, we assume a batch to consist of
RAQs of the same type.

Key-range partitioning is a common approach to load balancing of
distributed key-value stores.  It divides key-value pairs into
non-overlapping sets (i.e., partitions) over the key space, and
distributes them over workers, providing key-range assignments.  Each
partition typically stores key-value pairs in key order.  Queries are
dispatched to the workers according to the
key-range assignments.  For point queries, each worker can execute
dispatched queries as-is within its own partitions, resulting in
high-performance query processing.  For RAQs, each worker can similarly
execute given queries as-is, except that the postprocessing of partial
aggregation results of RAQs spanning multiple workers is required.

We assume that the number of items involved in the key range of an RAQ
is about 100, according to the report~\cite{cao20:RocksDB_workloads},
where most range accesses in the real-world workloads at Facebook
involved fewer than 100 items.  Long-range queries have distinct
effects on load balancing because they can span many workers, amplifying
the queries.  Handling range queries uniformly regardless of range
length is impractical.  We focus on skewed range queries with a typical
range length.

Although coarser-grained key-range partitioning suppresses range query
amplification more, UPMEM PIM requires spatiotemporal load balancing.
Meeting both at once is still an open problem.

\section{Query Density-Driven Partitioning}
\label{sec:load_balance}

\subsection{Key Ideas}
\label{sec:load_balance:idea}

Our primary observation is that  \emph{query density}, i.e., the number of hit
queries per data size for given queries and data items, is very
important for spatiotemporal load balancing for skewed queries.  A significant
gap in query density among DPUs translates to an imbalance in query load
or data size.

A gap in query density among DPUs can be algorithmically suppressed by
partitioning a given dataset into fine-grained pieces and scattering
them randomly over DPUs, as in existing PIM-oriented data
structures~\cite{kang21:PIM_model,kang22:PIM-tree}, because \emph{hot}
ranges of high density and \emph{cold} ranges of low density will be
mixed within the data assigned to each DPU.  Unfortunately, as mentioned
in \cref{sec:range-aggregate}, fine-grained partitioning is not suitable
for batched RAQs.

Our key idea to achieve both coarse-grained partitioning and query
density balance is to pinpoint, divide, and distribute hot ranges and
key-value pairs therein over DPUs, as illustrated in
\cref{fig:example-partitions:hot_cold}.  Now, we consider assigning
 a small number of partitions to each DPU because single-partition
schemes, as in classic one-dimensional
partitioning~\cite{pinar04:optimal_1d_part,lieber14:scalable_1d_part},
are difficult to achieve even density in the presence of high skewness,
as illustrated in \cref{fig:example-partitions}.

\begin{figure}[tb]
 \small
 \setlength{\ItemWidth}{1.3ex}
 \setlength{\ItemHeight}{3.3ex}
 \setlength{\ItemSpacing}{0.8ex}
 \setlength{\BasePartSpacing}{2.4ex}
 \setlength{\HotPartSpacing}{1.6ex}

 \newcommand{\plotQueryDist}[2][]{
  #2
   let \p1 = (key min), \p2 = ($(key max |- load max) - (key min)$) in
   plot[samples=100, variable=\t, domain={-0.001:0.471}] (\t * \x2 / 4 + \x1, {(+0.674 * \t*\t*\t -1.539 * \t*\t +1.000 * \t +0.800) * \y2 + \y1}) #1
   plot[samples=100, variable=\t, domain={ 0.469:1.501}] (\t * \x2 / 4 + \x1, {(-4.637 * \t*\t*\t*\t*\t +22.560 * \t*\t*\t*\t -39.607 * \t*\t*\t +30.171 * \t*\t -10.351 * \t +2.318) * \y2 + \y1}) #1
   plot[samples=100, variable=\t, domain={ 1.499:4.001}] (\t * \x2 / 4 + \x1, {0}) #1
  ;
 }
 \begin{minipage}{\linewidth}
  \centering
  \begin{tikzpicture}
   \coordinate (key min);
   \coordinate (key max) at ($(key min) + (\ItemSpacing * 44 + \BasePartSpacing * 4, 0)$);
   \coordinate (load zero) at (key min);
   \coordinate (load max) at ($(load zero) + (0, 3.6ex)$);

   \coordinate (range0 key min) at (key min);
    \coordinate (range0 key max) at ($(key min)!.25!(key max)$);
   \coordinate (range1 key min) at (range0 key max);
    \coordinate (range1 key max) at ($(key min)!.5!(key max)$);
   \coordinate (range2 key min) at (range1 key max);
    \coordinate (range2 key max) at ($(key min)!.75!(key max)$);
   \coordinate (range3 key min) at (range2 key max);
    \coordinate (range3 key max) at (key max);

   \plotQueryDist{\draw[line width=0.4ex]}

   \draw[line width=0.2ex, -latex, line cap=rect]
    (key min) -- ($(key max) + (1.7ex, 0)$) coordinate (tmp);
   \node[inner sep=0.2ex, anchor=south west] (item axis label) at ($(tmp) + (-3.5ex, 1ex)$) {\shortstack[l]{Ordered\\Data}};

   \draw[line width=0.2ex, -latex, line cap=rect]
    (load zero) -- ($(load max) + (0, 1.2ex)$) coordinate (tmp);
   \node[inner sep=0.2ex, anchor=south west] (query axis label) at ($(tmp) + (-1.2ex, 0)$) {Query Density};

   \coordinate (range height) at ($(key min) + (0, -0.5ex)$);

   \coordinate (range0 begin) at ($(range height -| key min) + (\BasePartSpacing * 0.5 - \ItemWidth * 0.5, 0)$);
   \coordinate (range0 end) at ($(range0 begin) + (\ItemSpacing * 11, 0)$);
   \foreach \i [count=\prev from 0] in {1, ..., 3} {
    \coordinate (range\i\space begin) at ($(range\prev\space end) + (\BasePartSpacing, 0)$);
    \coordinate (range\i\space end) at ($(range\i\space begin) + (\ItemSpacing * 11, 0)$);
   }

   \foreach \i in {0, ..., 3} {
    \drawDPU[3.2ex]{($(key min |- range height)!{0.25 * \i + 0.125}!(key max |- range height) + (0, - \ItemHeight - 1.6ex - 1.5ex)$)}{proc\i}
   }
   \foreach \i in {0, ..., 2} {
    \draw[black!70, double, line width=0.2ex] (range\i\space key max |- proc\i.south) -- ($(range\i\space key max |- load max) + (0, 1ex)$);
   }

   \begin{scope}[plain item/.style={shift={#1}, yslant=-0.7, draw=black, fill=black!25, line width=0.15ex}]
    \foreach \i in {3, ..., 0} {
     \foreach \j in {11, ..., 0} {
      \path[plain item={($(range\i\space begin)!{1.0 / 11 * \j}!(range\i\space end)$)}] (0, 0) rectangle (\ItemWidth, -\ItemHeight + \ItemWidth * 0.7);
     }
    }
   \end{scope}
  \end{tikzpicture}
  \subcaption{Data-balanced one-dimensional partitioning.}
  \label{fig:example-partitions:equal-data}
 \end{minipage}

 \begin{minipage}{\linewidth}
  \centering
  \begin{tikzpicture}
   \coordinate (key min);
   \coordinate (key max) at ($(key min) + (\ItemSpacing * 44 + \BasePartSpacing * 4, 0)$);
   \coordinate (load zero) at (key min);
   \coordinate (load max) at ($(load zero) + (0, 3.6ex)$);

   \coordinate (range0 key min) at (key min);
    \coordinate (range0 key max) at ($(key min)!.0625!(key max)$);
   \coordinate (range1 key min) at (range0 key max);
    \coordinate (range1 key max) at ($(key min)!.125!(key max)$);
   \coordinate (range2 key min) at (range1 key max);
    \coordinate (range2 key max) at ($(key min)!.1875!(key max)$);
   \coordinate (range3 key min) at (range2 key max);
    \coordinate (range3 key max) at (key max);

   \plotQueryDist{\draw[line width=0.4ex]}

   \draw[line width=0.2ex, -latex, line cap=rect]
    (key min) -- ($(key max) + (1.7ex, 0)$) coordinate (tmp);
   \node[inner sep=0.2ex, anchor=south west] (item axis label) at ($(tmp) + (-3.5ex, 1ex)$) {\shortstack[l]{Ordered\\Data}};

   \draw[line width=0.2ex, -latex, line cap=rect]
    (load zero) -- ($(load max) + (0, 1.2ex)$) coordinate (tmp);
   \node[inner sep=0.2ex, anchor=south west] (query axis label) at ($(tmp) + (-1.2ex, 0)$) {Query Density};

   \coordinate (range height) at ($(key min) + (0, -0.5ex)$);

   \coordinate (range0 begin) at ($(range height -| key min) + (\BasePartSpacing * 0.5 - \ItemWidth * 0.5, 0)$);
    \coordinate (range0 end) at ($(range0 begin) + (\ItemSpacing * 2, 0)$);
   \coordinate (range1 begin) at ($(range0 end) + (\BasePartSpacing, 0)$);
    \coordinate (range1 end) at ($(range1 begin) + (\ItemSpacing * 2, 0)$);
   \coordinate (range2 begin) at ($(range1 end) + (\BasePartSpacing, 0)$);
    \coordinate (range2 end) at ($(range2 begin) + (\ItemSpacing * 2, 0)$);
   \coordinate (range3 begin) at ($(range2 end) + (\BasePartSpacing, 0)$);
    \coordinate (range3 end) at ($(range3 begin) + (\ItemSpacing * 38, 0)$);

   \path[name path=upper-line] ($(key min) + (0, \ItemHeight * -0.8)$) -- ($(key max) + (0, \ItemHeight * 0.6)$);
   \path[name path=lower-line] ($(key min) + (0, -\ItemHeight - 4.3ex)$) -- ($(key max) + (0, -\ItemHeight - 2.15ex)$);
   \begin{scope}[
    plain item/.style={shift={#1}, yslant=-0.7, draw=black, fill=black!25, line width=0.15ex},
   ]
    \foreach \iDPU/\maxIItem [evaluate=\iDPU as \rightDPU using int(\iDPU + 1)] in {3/38, 2/2, 1/2, 0/2} {
     \ifthenelse{\iDPU = 3}{}{
      \path[name path=key-delim-hpos] (range\iDPU\space key max) -- (range\iDPU\space key max |- proc\rightDPU.south);
      \coordinate (tmp) at ($(key min |- range height)!{0.25 * \iDPU + 0.25}!(key max |- range height)$);
      \path[name path=dpu-delim-hpos] (tmp) -- (tmp |- proc\rightDPU.south);

      \path [name intersections={of=upper-line and key-delim-hpos, by={upper-intsc}}];
      \path [name intersections={of=lower-line and dpu-delim-hpos, by={lower-intsc}}];
      \draw[black!70, double, line width=0.2ex]
       ($(range\iDPU\space key max |- load max) + (0, 1ex)$)
       -- (upper-intsc) -- (lower-intsc)
       -- (tmp |- proc\rightDPU.south);
     }
     \foreach \iItem in {\maxIItem, ..., 0} {
      \path[plain item={($(range\iDPU\space begin)!{1.0 / \maxIItem * \iItem}!(range\iDPU\space end)$)}] (0, 0) rectangle (\ItemWidth, -\ItemHeight + \ItemWidth * 0.7);
     }

     \drawDPU[3.2ex]{($(key min |- range height)!{0.25 * \iDPU + 0.125}!(key max |- range height) + (0, - \ItemHeight - 1.6ex - 2ex)$)}{proc\iDPU}
    }
   \end{scope}
  \end{tikzpicture}
  \subcaption{Query-balanced one-dimensional partitioning.}
  \label{fig:example-partitions:equal-query}
 \end{minipage}

 \begin{minipage}{\linewidth}
  \centering
  \begin{tikzpicture}
   \coordinate (key min);
   \coordinate (key max) at ($(key min) + (\ItemSpacing * 44 + \BasePartSpacing * 4, 0)$);
   \coordinate (load zero) at (key min);
   \coordinate (load max) at ($(load zero) + (0, 3.6ex)$);

   \coordinate (base0 key min) at (key min);
    \coordinate (base0 key max) at ($(key min)!.25!(key max)$);
   \coordinate (base1 key min) at (base0 key max);
    \coordinate (base1 key max) at ($(key min)!.5!(key max)$);
   \coordinate (base2 key min) at (base1 key max);
    \coordinate (base2 key max) at ($(key min)!.75!(key max)$);
   \coordinate (base3 key min) at (base2 key max);
    \coordinate (base3 key max) at (key max);

   \coordinate (cold height) at ($(key min) + (0, -0.5ex)$);
    \coordinate (hot height) at ($(cold height) + (0, -\ItemHeight + \ItemWidth * 0.7 + \ItemSpacing * 0.9)$);

   \coordinate (base0 begin) at ($(cold height -| key min) + (\BasePartSpacing * 0.5 - \ItemWidth * 0.5, 0)$);
    \coordinate (base0 end) at ($(base0 begin) + (\ItemSpacing * 11, 0)$);
   \coordinate (base1 begin) at ($(base0 end) + (\BasePartSpacing, 0)$);
    \coordinate (base1 end) at ($(base1 begin) + (\ItemSpacing * 11, 0)$);
   \coordinate (base2 begin) at ($(base1 end) + (\BasePartSpacing, 0)$);
    \coordinate (base2 end) at ($(base2 begin) + (\ItemSpacing * 11, 0)$);
   \coordinate (base3 begin) at ($(base2 end) + (\BasePartSpacing, 0)$);
    \coordinate (base3 end) at ($(base3 begin) + (\ItemSpacing * 11, 0)$);

   \coordinate (hot0 begin) at ($(hot height -| base0 begin) + (\ItemSpacing * 1, 0)$);
    \coordinate (hot0 end) at ($(hot0 begin) + ({(\ItemSpacing * 8 - \HotPartSpacing * 2) / 6 * 2}, 0)$);
   \coordinate (hot1 begin) at ($(hot0 end) + (\HotPartSpacing, 0)$);
    \coordinate (hot1 end) at ($(hot1 begin) + ({(\ItemSpacing * 8 - \HotPartSpacing * 2) / 6 * 2}, 0)$);
   \coordinate (hot2 begin) at ($(hot1 end) + (\HotPartSpacing, 0)$);
    \coordinate (hot2 end) at ($(hot2 begin) + ({(\ItemSpacing * 8 - \HotPartSpacing * 2) / 6 * 2}, 0)$);

   \begin{scope}
    \clip ($(key min)!.023!(key max)$) rectangle ($(key min |- load max)!.212!(key max |- load max)$);
    \plotQueryDist[coordinate (tmp) -- (tmp |- key min) -| cycle]{\fill[Orange!30]}
   \end{scope}
   \begin{scope}
    \clip ($(key min)!.023!(key max)$) coordinate (tmp) ($(tmp) + (-0.15ex, 0)$) rectangle ($(tmp |- load max) + (0.15ex, 0)$)
          ($(key min)!.086!(key max)$) coordinate (tmp) ($(tmp) + (-0.15ex, 0)$) rectangle ($(tmp |- load max) + (0.15ex, 0)$)
          ($(key min)!.145!(key max)$) coordinate (tmp) ($(tmp) + (-0.15ex, 0)$) rectangle ($(tmp |- load max) + (0.15ex, 0)$)
          ($(key min)!.212!(key max)$) coordinate (tmp) ($(tmp) + (-0.15ex, 0)$) rectangle ($(tmp |- load max) + (0.15ex, 0)$);
    \plotQueryDist[coordinate (tmp) -- (tmp |- key min) -| cycle]{\fill[Orange!70]}
   \end{scope}
   \plotQueryDist{\draw[line width=0.4ex]}

   \draw[line width=0.2ex, -latex, line cap=rect]
    (key min) -- ($(key max) + (1.7ex, 0)$) coordinate (tmp);
   \node[inner sep=0.2ex, anchor=south west] (item axis label) at ($(tmp) + (-3.5ex, 1ex)$) {\shortstack[l]{Ordered\\Data}};

   \draw[line width=0.2ex, -latex, line cap=rect]
    (load zero) -- ($(load max) + (0, 1.2ex)$) coordinate (tmp);
   \node[inner sep=0.2ex, anchor=south west] (query axis label) at ($(tmp) + (-1.2ex, 0)$) {Query Density};

   \coordinate (DPU height) at ($(cold height) + (0, \ItemSpacing * -0.7 -\ItemHeight - 1.6ex - 4ex)$);
   \foreach \i in {1, ..., 3} {
    \drawDPU[3.2ex]{($(key min |- DPU height)!{0.25 * \i + 0.125}!(key max |- DPU height) + (\ItemWidth * 0.5, 0)$)}{proc\i}
    \draw[black!70, double, line width=0.2ex] (base\i\space key min |- proc\i.south) --($ (base\i\space key min |- load max) + (0, 1ex)$);
   }

   \path[name path=hot-migration] ($(hot0 begin) + (0, -\ItemHeight - 2ex)$) -- ($(base3 end) + (0, -\ItemHeight)$);
   \foreach \iHot [evaluate=\iHot as \iDPU using int(\iHot + 1)] in {2, ..., 0} {
    \coordinate (hot from) at ($(hot\iHot\space end) + (-0.25ex, -\ItemHeight + 1ex)$);
    \coordinate (hot to) at (proc\iDPU.north west);
    \path[name path=hot-out] (hot from) -- +(1ex, -4ex);
    \path[name path=out-cont2-hpos] ($(hot from) + (3ex, 1ex)$) -- ++(0, -4ex);
    \path[name path=out-jct-hpos] ($(hot from) + (5ex, 1ex)$) -- ++(0, -4ex);
    \path[name path=in-jct-guide] ($(hot to) + (6ex, 0)$) -- ++(-24ex, 4ex);
    \path[name path=in-cont1-guide] ($(hot to) + (16ex, 0) + (-8.5ex, 1ex)$) -- ++(-25.5ex, 3ex);
    \path[name path=hot-in] (hot to) -- ++(-2ex, 8ex);
    \path[name intersections={of=hot-migration and hot-out, by={out-cont1}}];
    \path[name intersections={of=hot-migration and out-cont2-hpos, by={out-cont2}}];
    \path[name intersections={of=hot-migration and out-jct-hpos, by={out-jct}}];
    \path[name intersections={of=hot-migration and in-jct-guide, by={in-jct}}];
    \path[name intersections={of=hot-migration and in-cont1-guide, by={in-cont1}}];
    \path[name intersections={of=hot-migration and hot-in, by={in-cont2}}];
    \draw[line width=0.8ex, black!60, -{Triangle[width=2ex, length=2ex, bend]},
          shorten >= 0.335ex, preaction={shorten >= 0ex, draw=white, line width=1.1ex, -{Triangle[width=2.485ex, length=2.485ex, bend]}}]
     (hot from) ..controls (out-cont1) and (out-cont2).. (out-jct)
     -- (in-jct) ..controls (in-cont1) and (in-cont2).. (hot to);
   }

   \drawDPU[3.5ex]{($(key min |- DPU height)!{0.125}!(key max |- DPU height) + (\ItemWidth * 0.5, 0)$)}{proc0}

   \begin{scope}[
    plain item/.style={shift={#1}, yslant=-0.7, draw=black, fill=black!25, line width=0.15ex},
   ]
    \foreach \i in {11, ..., 0} {
     \path[plain item={($(base3 begin)!{1.0 / 11 * \i}!(base3 end)$)}] (0, 0) rectangle (\ItemWidth, -\ItemHeight + \ItemWidth * 0.7);
    }

    \foreach \i in {11, ..., 0} {
     \path[plain item={($(base2 begin)!{1.0 / 11 * \i}!(base2 end)$)}] (0, 0) rectangle (\ItemWidth, -\ItemHeight + \ItemWidth * 0.7);
    }

    \foreach \i in {11, ..., 0} {
     \path[plain item={($(base1 begin)!{1.0 / 11 * \i}!(base1 end)$)}] (0, 0) rectangle (\ItemWidth, -\ItemHeight + \ItemWidth * 0.7);
    }

    \foreach \i in {11, ..., 10} {
     \path[plain item={($(base0 begin)!{1.0 / 11 * \i}!(base0 end)$)}] (0, 0) rectangle (\ItemWidth, -\ItemHeight + \ItemWidth * 0.7);
    }
    \foreach \i in {9, ..., 1} {
     \path[opacity=0.25, plain item={($(base0 begin)!{1.0 / 11 * \i}!(base0 end)$)}] (0, 0) rectangle (\ItemWidth, -\ItemHeight + \ItemWidth * 0.7);
    }
    \foreach \i in {2, ..., 0} {
     \path[plain item={($(hot2 begin)!{1.0 / 2 * \i}!(hot2 end)$)}, draw=Orange!70!black] (0, 0) rectangle (\ItemWidth, -\ItemHeight + \ItemWidth * 0.7);
    }
    \foreach \i in {2, ..., 0} {
     \path[plain item={($(hot1 begin)!{1.0 / 2 * \i}!(hot1 end)$)}, draw=Orange!70!black] (0, 0) rectangle (\ItemWidth, -\ItemHeight + \ItemWidth * 0.7);
    }
    \foreach \i in {2, ..., 0} {
     \path[plain item={($(hot0 begin)!{1.0 / 2 * \i}!(hot0 end)$)}, draw=Orange!70!black] (0, 0) rectangle (\ItemWidth, -\ItemHeight + \ItemWidth * 0.7);
    }
    \foreach \i in {0} {
     \path[plain item={($(base0 begin)!{1.0 / 11 * \i}!(base0 end)$)}] (0, 0) rectangle (\ItemWidth, -\ItemHeight + \ItemWidth * 0.7);
    }
   \end{scope}
  \end{tikzpicture}
  \subcaption{Query density-driven partitioning.}
  \label{fig:example-partitions:hot_cold}
 \end{minipage}
 \caption{Query density-driven partitioning (\subref{fig:example-partitions:hot_cold}) versus one-dimensional partitioning schemes (\subref{fig:example-partitions:equal-data}, \subref{fig:example-partitions:equal-query}) for skewed queries.}
 \label{fig:example-partitions}
\end{figure}

\subsection{Overview of Partitioning}
\label{sec:load_balance:overview}

On the basis of the key ideas aforementioned, we design our
query density-driven offline key-range partitioning scheme.  Its
entire algorithm is summarized in \cref{algo:hot-cold}, where the
notations used in algorithmic descriptions are summarized in
\cref{tab:notation}.  As seen from \cref{algo:hot-cold}, our partitioning
algorithm assumes two things.  One is that input key-value pairs are
sorted in key order and uniformly divided into data chunks $c_i$, where
the chunk size is a parameter to tune spatial locality.  The size
information is given as \SizeOf at the algorithmic level.  The other is
that a reference workload is given as \NQrys, which is supposed to count
the number of hit queries by converting range queries to point ones.

\cref{algo:hot-cold} proceeds as follows.  First, it divides the key
space into the same number of ranges as DPUs $P$ so that the data size
is evenly distributed, forming \emph{base partitions}
(\lineref{algo:hot-cold:base}).  If the workload is not so skewed and
all their ranges are cold, the partitioning process is complete.  If hot
ranges inhabit base partitions, the imbalance of query loads will arise,
which shall be addressed in the following steps.  Second, it identifies
hot ranges within each base partition and splits hot partitions out of
it, which is modeled as \FindHot in \lineref{algo:hot-cold:hot}.  After
removing the hot partitions, the remaining ranges should consist only of
cold ranges with sufficiently low query counts.  These are now cold
partitions (\lineref{algo:hot-cold:cold}).  Lastly, it distributes the
hot partitions.  A DPU that has a relatively lower query load on its
cold partitions receives at most one hot partition
(\lineref{algo:hot-cold:pairing:begin}--\ref{algo:hot-cold:pairing:end}). To
realize this assignment, \FindHot has to ensure that the number of
extracted hot ranges is at most equal to the number of DPUs.

By design, the number of partitions is limited to at most $3P$ because
at most $P$ hot partitions are extracted from $P$ base partitions.  It
will bring smaller memory footprints for query routing, suppressing
CPU cache misses.

In the following subsections, we specify \FindHot, which is the
technical heart of query density-driven partitioning.

\begin{table}[bt]
 \caption{Symbols used for algorithmic descriptions.}
 \label{tab:notation}
 \centering
 \settowidth{\notationsymwidth}{$\NQrys(S)$}
 \begin{tabular}{lp{\dimexpr\linewidth-\notationsymwidth-4\tabcolsep\relax}} \toprule
  Symbol & Definition \\ \midrule
  $\{ c_i \}_{i=1}^{N}$ & $N$ data chunks $c_i$ in key order. \\
  $\NQrys(S)$ & Function that takes a set of data chunks $S$ and returns the number of queries hitting $S$. \\
  $\SizeOf(S)$ & Function that takes a set of data chunks $S$ and returns the total size of data in $S$. \\
  $P$ & Number of DPUs. \\
  $\mathcal{N}_P$ & A set of DPU numbers $1, \ldots, P$.\\
  $Q$ & Number of queries in the reference workload, i.e., $\NQrys\left(\{ c_i \}_{i=1}^{N}\right)$. \\
  $D$ & Total size of data, i.e., $\SizeOf\left(\{ c_i \}_{i=1}^{N}\right)$. \\
  \bottomrule
 \end{tabular}
\end{table}

\begin{algorithm}[bt]
 \caption{Construction of hot/cold partitions.}
 \label{algo:hot-cold}
 \Input{Data chunks $E = \{ c_i \}_{i=1}^{N}$}
 \Output{Routing table $T\colon \mathcal{N}_P \to \mathcal{P}(E)$}
 \Function{$\HotColdPartition(E)$} {
  $B \gets \left\{\left\{c_j\right\}_{j=s_i}^{e_i}\right\}_{i=1}^{P}$, with $\SizeOf\left(\{c_j\}_{j=s_i}^{e_i}\right)$ balanced\; \label{algo:hot-cold:base}
  $H \gets \bigcup\{\FindHot(b) \mid b \in B\}$\;  \label{algo:hot-cold:hot}
  $C \gets \left\{d \mapsto b \setminus \bigcup H \relmiddle| d \in \mathcal{N}_P \land b = \left\{c_j\right\}_{j=s_d}^{e_d}\right\}$\;  \label{algo:hot-cold:cold}
  $T \gets \emptyset$\;
  \For {$d \mapsto c \in C$ in ascending order of $\NQrys(c)$} {  \label{algo:hot-cold:pairing:begin}
    $h \gets \argmax_{h' \in H}\NQrys(h')$ \KwSty{if} $H \ne \emptyset$ \KwSty{otherwise} $\emptyset$\;
    $H \gets H \setminus h$\;
    $T \gets T \cup \{d \mapsto h \cup c\}$\; \label{algo:hot-cold:pairing:end}
  }
  \Return{$T$}\;
 }
\end{algorithm}

\subsection{Greedy Hot Range Selection}
\label{sec:load_balance:greedy}

\begin{algorithm}[bt]
 \caption{Greedy hot range selection.}
 \label{algo:hot_select}
 \Input{A base partition $b = \{ c_i \}_{i=s}^e$}
 \Output{A set of hot ranges $h$ from $b$}
 \Function{$\FindHot(b)$} {
  $(h, l) \gets (\emptyset, s)$\;
  \For{$r \gets s$ \KwSty{to} $e$}{
   $l' \gets \max \left\{ l' \in (l,r] \relmiddle| \SizeOf\left(\{ c_i \}_{i=l'}^r\right) \ge \frac{1}{\alpha}\frac{D}{P} \right\}$\;  \label{algo:hot_select:window}
   $l \gets \max \{l, l'\}$\;  \label{algo:hot_select:l}
   \If{$\NQrys\left(\{ c_i \}_{i=l}^r\right) \ge \frac{Q}{P}$  \label{algo:hot_select:thres}} {
    $h \gets h \cup \left\{\{ c_i \}_{i=l}^r\right\}$\;  \label{algo:hot_select:hot}
    $l \gets r + 1$\;  \label{algo:hot_select:next_scan}
   }
  }
  \Return{$h$}\;
 }
\end{algorithm}

As mentioned in \cref{sec:load_balance:idea}, we should identify hot
ranges of high density.  Now, the window width used to calculate density
is a crucial parameter that should be adjusted for a trade-off,
depending on the given workload.  Narrow windows bring small hot partitions,
resulting in data size balance, whereas wide
windows bring more hot partitions to move, resulting in query load
balance.  To allow us to select a sweet spot in the trade-off, we
introduce a parameter $\alpha$, the ratio of the data size of a base
partition to the maximum width of a hot partition.  It is a positive
integer, with a larger value indicating a greater emphasis on data size
balancing and a value closer to 1 indicating a greater emphasis on query
load balancing.
We consider the optimal $\alpha$ to be the minimum $\alpha$ that allows
the data size to fit in memory.

\cref{algo:hot_select} shows a simple greedy algorithm based on such
$\alpha$ for \FindHot.  Note that $\max S$ produces $-\infty$ if $S =
\emptyset$ throughout this paper.
This algorithm scans the base partition by the
windows of size $\frac{1}{\alpha}\frac{D}{P}$
(\lineref{algo:hot_select:window}) and extracts data chunks as hot
partitions if the number of queries targeting those data chunks exceeds
the threshold $\frac{Q}{P}$, where $Q$ is the total number of queries in
the reference workload (\lineref{algo:hot_select:thres}).

This greedy hot range selection algorithm conforms to the basic design
of query density-driven partitioning because the number of hot ranges is at most
$P$. Besides, it has the following properties:
\begin{itemize}
 \item The data size of each hot partition is limited to
       $\frac{1}{\alpha}\frac{D}{P} + M_d$, where $M_d$ is the maximum
       size of data per chunk, i.e., $M_d = \max_{1 \le i \le N}
       \SizeOf\left( \{c_i\} \right)$;
 \item The query load on each hot partition is limited to $\frac{Q}{P} +
       M_q$, where $M_q$ is the maximum number of queries targeting a
       single data chunk, i.e., $M_q = \max_{1 \le i \le N} \NQrys\left(
       \{c_i\} \right)$;
 \item The query load for cold partitions remaining in each DPU is
       suppressed below $\alpha\frac{Q}{P}$.
\end{itemize}
The query load on each DPU is thus controlled to be less than
$(\alpha+1)\frac{Q}{P} + M_q$, and the data size is done to be
less than $\left(\frac{1}{\alpha} + 1\right)\frac{D}{P} + M_d$.
\ifdefined\printProof
The proof is in \cref{sec:proof:greedy}.
\else
The proof shall be in an extended version of this
paper.\footnote{The extended version is available at
\url{https://doi.org/10.48550/arXiv.XXXX.XXXXX}.}%
\setcounter{extfootnote}{\value{footnote}}%
\fi

\subsection{Double-Scan Hot Range Selection}
\label{sec:load_balance:double}

\begin{figure}[bt]
 \setlength{\ItemWidth}{1.3ex}
 \setlength{\ItemHeight}{3.3ex}
 \setlength{\ItemSpacing}{0.8ex}
 \setlength{\BasePartSpacing}{2.4ex}
 \setlength{\HotPartSpacing}{1.6ex}

 \centering
 \begin{tikzpicture}
  \coordinate (key min);
  \coordinate (key max) at ($(key min) + (\ItemSpacing * 44 + \BasePartSpacing * 4, 0)$);
  \coordinate (load zero) at (key min);
  \coordinate (load max) at ($(load zero) + (0, 4ex)$);

  \coordinate (range0 key min) at (key min);
   \coordinate (range0 key max) at ($(key min)!.25!(key max)$);
  \coordinate (range1 key min) at (range0 key max);
   \coordinate (range1 key max) at ($(key min)!.5!(key max)$);
  \coordinate (range2 key min) at (range1 key max);
   \coordinate (range2 key max) at ($(key min)!.75!(key max)$);
  \coordinate (range3 key min) at (range2 key max);
   \coordinate (range3 key max) at (key max);

  \draw[line width=0.4ex]
   let \p1 = (key min), \p2 = ($(key max |- load max) - (key min)$) in
   (\x1, \y2 + \y1) -- (0.25 * \x2 + \x1, \y2 + \y1)
   -- (0.3 * \x2 + \x1, 0.25 * \y2 + \y1)
   -- (\x2 + \x1, 0.25 * \y2 + \y1);

  \node[inner sep=0.2ex, anchor=east] (thres density) at ($(load max) + (0, 0.7ex)$) {$\alpha\frac{Q}{D}$};
  \draw[line width=0.2ex, dashed] (thres density.east) -- (thres density.center -| range0 key max);

  \coordinate (window min) at ($(range0 key min)!.517!(range0 key max)$);
  \coordinate (window max) at ($(range0 key min)!.85!(range0 key max)$);
  \path[pattern color=Blue!60, pattern={Lines[angle=45, line width=0.15em, distance=0.3em]}] (window min) rectangle (window max |- load max);
  \draw[Blue, line width=0.4ex, {Straight Barb[right, length=0.7ex]}-{Straight Barb[left, length=0.7ex]}] ($(window min |- thres density.center) + (0, 1.6ex)$) -- ($(window max |- thres density.center) + (0, 1.6ex)$)
   coordinate[midway] (tmp);
  \node[text=Blue!70!black, inner sep=0.2ex, anchor=south] (window label) at ($(tmp) + (0, 1ex)$) {$\frac{1}{\alpha}\frac{D}{P}$};
  \draw[Blue!40, line width=0.6ex, preaction={draw=Blue, line width=0.8ex}]
   (window min) -- ($(window min |- thres density.center) + (0, 2ex)$)
   (window max) -- ($(window max |- thres density.center) + (0, 2ex)$);
  \draw[Blue, {Circle[length=1ex]}-, line width=0.5ex, preaction={draw=white, {Circle[length=1.4ex]}-, shorten <= -0.2ex, line width=0.9ex}]
   ($(window min |- thres density.center)!.5!(window max |- thres density.center) + (0, -3ex)$)
   -- ($(range0 key max |- load max) + (0.5ex, 3.3ex)$) -- ++(2.5ex, 0)
   coordinate (message anchor);

  \draw[line width=0.2ex, -latex, line cap=rect]
   (key min) -- ($(key max) + (1.7ex, 0)$) coordinate (tmp);
  \node[inner sep=0.2ex, anchor=south west] (item axis label) at ($(tmp) + (-3.5ex, 1.3ex)$) {\shortstack[l]{Ordered\\Data}};

  \draw[line width=0.2ex, -latex, line cap=rect]
   (load zero) -- ($(load max) + (0, 3.5ex)$) coordinate (tmp);
  \node[inner sep=0.2ex, anchor=south west] (query axis label) at ($(tmp -| thres density.west) + (0, 0)$) {\shortstack[l]{Query\\Density}};

  \coordinate (range height) at ($(key min) + (0, -0.5ex)$);

  \coordinate (range0 begin) at ($(range height -| key min) + (\BasePartSpacing * 0.5 - \ItemWidth * 0.5, 0)$);
  \coordinate (range0 end) at ($(range0 begin) + (\ItemSpacing * 11, 0)$);
  \foreach \i [count=\prev from 0] in {1, ..., 3} {
   \coordinate (range\i\space begin) at ($(range\prev\space end) + (\BasePartSpacing, 0)$);
   \coordinate (range\i\space end) at ($(range\i\space begin) + (\ItemSpacing * 11, 0)$);
  }

  \foreach \i in {0, ..., 3} {
   \drawDPU[3.2ex]{($(key min |- range height)!{0.25 * \i + 0.125}!(key max |- range height) + (0, - \ItemHeight - 1.6ex - 1.5ex)$)}{proc\i}
  }
  \foreach \i in {0, ..., 2} {
   \draw[black!70, double, line width=0.2ex] (range\i\space key max |- proc\i.south) -- ($(range\i\space key max |- thres density.center) + (0, 1ex)$);
  }

  \begin{scope}[plain item/.style={shift={#1}, yslant=-0.7, draw=black, fill=black!25, line width=0.15ex}]
   \foreach \i in {3, ..., 0} {
    \foreach \j in {11, ..., 0} {
     \path[plain item={($(range\i\space begin)!{1.0 / 11 * \j}!(range\i\space end)$)}] (0, 0) rectangle (\ItemWidth, -\ItemHeight + \ItemWidth * 0.7);
    }
   }
  \end{scope}

  \node[draw=Blue, text=Blue, fill=white, line width=0.35ex, font=\bfseries, anchor=south west] at ($(message anchor) + (0, -3ex)$) {\shortstack[l]{
   \#Queries $\bm < \frac{Q}{P}$ \\
   $\bm\Rightarrow$ No hot ranges
  }};
 \end{tikzpicture}
 \caption{Too few hot ranges selected in \cref{algo:hot_select}: no candidate window contains more queries than the threshold, and the query load becomes imbalanced.}
 \label{fig:too_small_nr_hot}
\end{figure}

Unfortunately, \cref{algo:hot_select} fails to prevent the cases of an
insufficient number of hot partitions found, even though the query load
is unevenly distributed among the base partitions.  This shortcoming is
particularly apparent when $\alpha$ is large, as illustrated in
\cref{fig:too_small_nr_hot}.  As a result, some DPUs cannot receive hot
partitions and will be nearly idle after the partitioning.

To address this problem, we present double-scan extension, which
introduces the second scan that selects hot ranges with more relaxed
criteria if necessary. The number of hot partitions that the second scan
tries to select is determined by the number of queries remaining in each
base partition after \cref{algo:hot_select} runs.

\begin{algorithm}[bt]
 \caption{Double-scan hot range selection.}
 \label{algo:double-scan}
 \Input{A base partition $b = \{ c_i \}_{i=s}^e$}
 \Output{A set of hot ranges $h$ from $b$}
 \Function{$\FindHot(b)$} {
  $h \gets \FindHotGreedy(b)$ \tcp*{Alias of \cref{algo:hot_select}} \label{algo:double-scan:greedy}
  $h^* \gets \bigcup h$\;
  $\beta \gets \left\lfloor \NQrys(b \setminus h^*) / \frac{Q}{P} \right\rfloor$\;  \label{algo:double-scan:beta}
  \lIf(\label{algo:double-scan:early}){$\beta = 0$}{\Return{$h$}}
  $(l, l_0, r_0) \gets (s, s, s)$\;  \label{algo:double-scan:init_max}
  \For{$r \gets s$ \KwSty{to} $e$ \label{algo:double-scan:for}} {
   $l' \gets \max\left\{ l' \in (l,r] \relmiddle| \SizeOf\left(\{ c_i \}_{i=l'}^r\right) \ge \frac{\beta}{\alpha}\frac{D}{P} \right\}$\;
   $l \gets \max\{l, l'\}$\;  \label{algo:double-scan:window}
   \lIf{$\NQrys\left(\{ c_i \}_{i=l}^r \setminus h^* \right) > \NQrys\left(\{ c_i \}_{i=l_0}^{r_0} \setminus h^* \right)$  \label{algo:double-scan:cmp_nqry}} {
    $(l_0, r_0) \gets (l, r)$  \label{algo:double-scan:update_max}
   }
  }
  $r \gets r_0$\;  \label{algo:double-scan:init_r}
  \While{$l_0 \le r$  \label{algo:double-scan:while}} {
   $l \gets \max\left\{ l' \in (l_0,r] \relmiddle| \SizeOf\left(\{ c_i \}_{i=l'}^r\right) \ge \frac{1}{\alpha}\frac{D}{P} \right\}$\;  \label{algo:double-scan:divide}
   $l \gets \max\{l_0, l\}$\;  \label{algo:double-scan:small_window}
   $h \gets h \cup \left(\left\{\{c_i\}_{i=l}^r \setminus h^* \right\} \setminus \{\emptyset\}\right)$\;  \label{algo:double-scan:append}
   $r \gets l - 1$\;  \label{algo:double-scan:next_r}
  }
  \Return{$h$}\;
 }
\end{algorithm}

\Cref{algo:double-scan} shows hot range selection with the double-scan
sophistication.  Let $q_{left}$
be the number of queries left in the base partition and $\beta \coloneqq
\left\lfloor \frac{q_{left}}{Q/P} \right\rfloor$
(\lineref{algo:double-scan:beta}).  It tries to find $\beta$ hot ranges.
This design is intended to select a hot range for approximately every
$\frac{Q}{P}$ queries.  It scans the base partition in windows of data
size $\frac{\beta}{\alpha}\frac{D}{P}$
(\lineref{algo:double-scan:window}) and finds the location of the window
with the maximum number of queries left in it
(\lineref{algo:double-scan:cmp_nqry}).  It then divides the data chunks
overlapping the window at that location into $\beta$ equal parts based
on data size (\lineref{algo:double-scan:divide}), removes the chunks
in existing hot ranges, and selects the resulting parts as new
hot ranges (\lineref{algo:double-scan:append}).

This double-scan algorithm satisfies all properties achieved by the
greedy algorithm in \cref{sec:load_balance:greedy}.  Furthermore, the
upper limit of the query load on cold partitions remaining in each DPU
is reduced to $\frac{\alpha+1}{3}\frac{Q}{P}$ for $\alpha > 1$.  For example,
when $\alpha$ is 10, this reduces from $10\frac{Q}{P}$ to
$\frac{11}{3}\frac{Q}{P}$, which is roughly a 63\%
reduction.  As a result, the query load on each DPU is controlled to be
less than $\frac{\alpha+4}{3}\frac{Q}{P} +
M_q$, while keeping the data size less than $\left(\frac{1}{\alpha} +
1\right)\frac{D}{P} + M_d$.
\ifdefined\printProof
The proof is in \cref{sec:proof:double}.
\else
The proof shall be in an extended version of this
paper.\footnotemark[\value{extfootnote}]
\fi

\subsection{Computational Cost}

The time complexity of \cref{algo:hot-cold} is $O(P)$, and those of
\cref{algo:hot_select,algo:double-scan} are $O(N)$.  Hence, our
partitioning algorithm runs in $O(P + N)$ time.  This cost is negligible
in practice to the extent of our experience.  In the worst-case full
rebalancing scenario, for example, the cost of partitioning is less by
three orders of magnitude than that of the other parts: the
(de)serialization and data transfer of key-value pairs.

\section{\BpForest}
\label{sec:bp-forest}
We implement \BpForest, a key-value store that adopts query
density-driven partitioning, dedicated to batched RAQs with commutative
reduction operators.

\subsection{Data Representation}
\label{sec:bp-forest:data-repr}
\BpForest performs partitioning as the initialization as follows.  It
first distributes the key-value pairs evenly across the DPUs to form
base partitions.  Then, it constructs the routing table on the CPU from
a given reference workload by using the proposed algorithm, treating
subtrees of height 1 of \BpTrees in DPUs as data chunks.  Lastly, the
key-value pairs of hot partitions are exchanged among DPUs.  Rebalancing
in operation is left for future work.

Each DPU maintains two \BpTrees for the data in its assigned partitions;
one is for the cold partitions and the other is for the hot partition.
We employ the ``occupancy embedding'' technique~\cite{hideshima25:b+tree_UPMEM},
where both key and value are embedded in the node of 256 bytes
without boxing.

The routing table is represented by two arrays: the partition boundary
array and the destination DPU array, as illustrated in
\cref{fig:routing-table-impl}.  The DPUs responsible for any RAQ are
identified through binary search followed by linear search on the
partition boundary array.  Since $3P$ partitions at most are
constructed, the routing table is sufficiently small to fit into the CPU
cache.

\begin{figure}[tb]
 \centering
 \small
 \begin{tikzpicture}[
  table cell/.style={inner ysep=0.5ex, inner xsep=0ex, align=center, text width=6.5ex},
  table grid/.style={line width=0.4ex},
  axis line/.style={line width=0.2ex}
 ]
  \node[table cell] (bound0) {113};
  \foreach \boundary [count=\i from 1, count=\prev from 0] in {250, 287, 339, 380} {
   \node[table cell, anchor=west] (bound\i) at (bound\prev.east) {\boundary};
  }
  \draw[ellipsis=0.4ex] ($(bound4.east) + (1ex, 0)$) -- ++(2.4ex, 0) coordinate (tmp);
  \node[table cell, anchor=west] (bound6) at ($(tmp) + (1ex, 0)$) {9825};
  \node[anchor=east, font=\ttfamily] (bounds label) at ($(bound0.west) + (-1ex, 0)$) {Bounds};

  \draw[table grid] (bound0.north west) rectangle (bound6.south east);
  \foreach \i in {0, ..., 4} {
   \draw[table grid] (bound\i.north east) -- (bound\i.south east);
  }
  \draw[table grid] (bound6.north west) -- (bound6.south west);

  \coordinate (axis height) at ($(bound0.south) + (0, -1.7ex)$);
  \foreach \i in {0, ..., 4, 6} {
   \draw[axis line] ($(bound\i.center |- axis height) + (0, 0.7ex)$) -- ++(0, -1.4ex) coordinate (scale mark \i);
  }
  \draw[axis line] ($(bound0.center |- axis height) + (-5ex, 0)$)
   -- ($(bound4.center |- axis height) + (3.5ex, 0)$) coordinate (tmp);
  \coordinate (scale mark 5) at ($(scale mark 0 -| tmp) + (1.8ex, 0)$);
  \draw[ellipsis=0.4ex] ($(tmp) + (0.6ex, 0)$) -- ++(2.4ex, 0) coordinate (tmp);
  \draw[axis line, -latex] ($(tmp) + (0.6ex, 0)$)
   -- ($(bound6.center |- axis height) + (5ex, 0)$)
   coordinate (axis top)
   node[anchor=west] {Key};
  \coordinate (scale mark 7) at (axis top |- scale mark 6);

  \coordinate (dest height) at ($(axis height) + (0, -5.4ex)$);
  \foreach \i/\iDPU in {0/0, 1/1, 2/32, 3/1, 4/2, 6/99} {
   \node[table cell, anchor=center] (dest\i) at (bound\i.east |- dest height) {\iDPU};
  }
  \draw[ellipsis=0.4ex] ($(dest4.east) + (3.4ex, 0)$) -- ($(dest6.west) + (-3.4ex, 0)$);
  \node[anchor=west, font=\ttfamily] (dest label) at (bounds label.west |- dest height) {DestDPUs};

  \draw[table grid] (dest0.north west) rectangle (dest6.south east);
  \foreach \i in {0, ..., 4} {
   \draw[table grid] (dest\i.north east) -- (dest\i.south east);
  }
  \draw[table grid] (dest6.north west) -- (dest6.south west);

  \foreach \i [evaluate=\i as \j using int(\i + 1)] in {0, ..., 4, 6} {
   \coordinate (mid scale marks) at ($(scale mark \i)!.5!(scale mark \j)$);
   \draw ($(scale mark \i) + (0, -0.2ex)$) -- ++(0, -0.5ex)
    ..controls +(0, -0.5ex) and +(-0.2ex, 0.2ex).. +(0.2ex, -0.8ex)
    ..controls +(0.2ex, -0.2ex) and +(-0.5ex, 0).. +(1ex, -1ex) coordinate (tmp);
   \draw[-latex] (tmp) -- ($(tmp -| mid scale marks) + (-1ex, 0)$)
    ..controls +(0.5ex, 0) and +(-0.2ex, 0.2ex).. +(0.8ex, -0.2ex)
    ..controls +(0.2ex, -0.2ex) and +(0, 0.5ex).. +(1ex, -1ex)
    -- (mid scale marks |- dest\i.north);

   \draw ($(scale mark \j) + (-0.4ex, -0.7ex)$)
    ..controls +(0, -0.5ex) and +(0.2ex, 0.2ex).. +(-0.2ex, -0.8ex)
    ..controls +(-0.2ex, -0.2ex) and +(0.5ex, 0).. +(-1ex, -1ex) coordinate (tmp);
   \draw[-latex] (tmp) -- ($(tmp -| mid scale marks) + (1ex, 0)$)
    ..controls +(-0.5ex, 0) and +(0.2ex, 0.2ex).. +(-0.8ex, -0.2ex)
    ..controls +(-0.2ex, -0.2ex) and +(0, 0.5ex).. +(-1ex, -1ex)
    -- (mid scale marks |- dest\i.north);
  }

  \node[Orange, anchor=east] (qry) at ($(bound2.north) + (0, 4ex)$) {Example: \texttt{[250, 367]}};
  \draw[Orange, line width=0.2ex] ($(bound1.north) + (0, 0.5ex)$) -- ++(1ex, 1ex) -- ($(bound3.north east) + (-1ex, 1.5ex)$)
   coordinate[midway] (tmp)
   -- ++(1ex, -1ex);
  \draw[Orange, -latex, bend left] (qry.east) to (tmp);
  \begin{scope}[on background layer]
   \fill[Orange, opacity=0.3] (bound1.north) rectangle (dest3.south);
  \end{scope}
  \node[Orange, inner sep=0.2ex, anchor=north] (bigcup) at ($(dest2.south) + (2ex, -2.5ex)$) {$\bigcup$};
  \draw[Orange, -latex] (dest1) -- (bigcup);
  \draw[Orange, -latex] (dest2) -- (bigcup);
  \draw[Orange, -latex] (dest3) -- (bigcup);
  \node[Orange, inner xsep=0ex, inner ysep=0.2ex, anchor=west] (result) at ($(bigcup.east) + (5ex, 0)$) {Routed to \{1, 32\}};
  \draw[Orange, -latex] (bigcup) -- (result);
 \end{tikzpicture}
 \caption{Implementation of the routing table with an example of routing.}
 \label{fig:routing-table-impl}
\end{figure}

\begin{figure*}[tb]
  \begin{tikzpicture}[
  raq/.style={shift={#1}, inner sep=0pt, minimum width=\ItemWidth, minimum height={\ItemHeight - \ItemWidth * 0.7}, yslant=-0.7, anchor=north west,
              draw=black, fill=Blue!20, text=Blue, font=\footnotesize, line width=0.15ex},
  answer/.style={raq={#1}, fill=red!20, text=red},
  procedure/.style={draw, line width=0.7ex, fill=white, inner ysep=1.4ex, inner xsep=1.4ex},
  proc num/.style={circle, fill, text=white, font=\bfseries, inner sep=0.3ex},
  data flow/.style={#1, rounded corners=1.8ex,
                    {Triangle Cap[cap angle=105, reversed]}-{Triangle Cap[cap angle=105]}, line width=1.6ex,
                    shorten <= 0.406ex, shorten >= 0.252ex, preaction={draw=white, line width=2ex, shorten <= 0pt, shorten >= 0pt}},
 ]
  \setlength{\ItemWidth}{2ex}
  \setlength{\ItemHeight}{5ex}
  \setlength{\ItemSpacing}{1ex}
  \coordinate (batch anchor);
  \foreach \i in {8, ..., 0} {
   \node[raq={($(batch anchor) + (\ItemSpacing * \i, 0)$)}] (raq \i) at (0, 0) {Q};
  }
  \node[bbox={(raq 0) (raq 8)}] (batch) {};
  \node[anchor=north] at ($(batch.south) + (0, -0.5ex)$) {A batch of RAQs};

  \node[procedure, anchor=west] (routing label) at ($(batch.east) + (16.3ex, 0)$) {
   \hspace*{0.2ex}
   \shortstack[l]{
    \bfseries Routing
   }
  };
  \node[proc num, anchor=north] (step 1) at (routing label.north west) {1};

  \coordinate (table anchor) at ($(routing label.west) + (-9.2ex, -4.9ex)$);
  \begin{scope}[
   cell/.style={inner sep=0.3ex, font=\footnotesize},
   header cell/.style={cell, font=\footnotesize\bfseries, minimum height=2.5ex},
  ]
   \node[header cell, anchor=north west] (table key header) at (table anchor) {key range};
   \node[cell, anchor=north west] (table key 0) at (table key header.south west) {113--249};
   \node[cell, anchor=north west] (table key 1) at (table key 0.south west) {250--286};
   \node[cell, anchor=north west] (table key 2) at (table key 1.south west) {287--338};
   \node[bbox={(table key header) (table key 2)}] (table col key) {};

   \node[header cell, anchor=north west] (table dpu header) at ($(table col key.north east) + (0.5ex, 0)$) {DPU};
   \node[cell, anchor=north east] (table dpu 0) at (table dpu header.south east) {0};
   \node[cell, anchor=north east] (table dpu 1) at (table dpu 0.south east) {1};
   \node[cell, anchor=north east] (table dpu 2) at (table dpu 1.south east) {32};
   \node[bbox={(table dpu header) (table dpu 2)}] (table col dpu) {};
  \end{scope}
  \begin{scope}[on behind layer]
   \fill[black!20] (table key header.south west) rectangle (table dpu header.north east);
   \fill[white] (table col key.south west) rectangle (table dpu header.south east);
  \end{scope}
  \node[bbox={(table col key) (table col dpu)}] (table cells) {};
  \draw[ellipsis=0.3ex] ($(table cells.south) + (0, 0.5ex)$) -- ++(0, -1.8ex) coordinate (tmp);
  \node[bbox={(table cells) (tmp)}] (routing table) {};
  \node[anchor=north] (routing table label) at ($(routing table.south) + (0, 0.2ex)$) {Routing table};

  \setlength{\ItemWidth}{2.8ex}
  \setlength{\ItemHeight}{3.9ex}
  \setlength{\ItemSpacing}{0.8ex}
  \coordinate (raq ids anchor) at ($(routing label.east) + (2.8ex, -7.3ex)$);
  \begin{scope}[
   raq id/.style={shift={#1}, inner sep=0pt, minimum width=\ItemWidth, minimum height={\ItemHeight - \ItemWidth * 0.3}, yslant=-0.3, anchor=north west,
                  draw=black, fill=white, font=\scriptsize\ttfamily, line width=0.08ex},
  ]
   \foreach \iRow/\RaqID in {0/5, 1/9, 2/3} {
    \foreach \i in {3, ..., 0} {
     \node[raq id={($(raq ids anchor) + (\ItemSpacing * \i, 0) + (0.3ex * \iRow, -1.8ex * \iRow)$)}] (raq id \iRow-\i) at (0, 0) {{[\RaqID]}};
    }
   }
  \end{scope}
  \node[bbox={(raq id 0-0) (raq id 2-3)}] (raq ids) {};
  \node[anchor=north] at ($(raq ids.south) + (0, -0.5ex)$) (raq ids label) {IDs of RAQs};

  \node[procedure, anchor=east] (postprocessing label) at ($(raq ids label.south west) + (-0.3ex, -3.3ex)$) {
   \hspace*{0.2ex}
   \shortstack[l]{
    \bfseries Postprocessing
   }
  };
  \node[proc num, anchor=north] (step 3) at (postprocessing label.north west) {3};

  \coordinate (subbatch anchor) at ($(batch anchor -| routing label.east) + (18ex, 0)$);
  \setlength{\ItemWidth}{2ex}
  \setlength{\ItemHeight}{5ex}
  \setlength{\ItemSpacing}{1ex}
  \foreach \iRow in {0, ..., 2} {
   \foreach \i in {3, ..., 0} {
    \node[raq={($(subbatch anchor) + (\ItemSpacing * \i, 0) + (0, -\ItemHeight * \iRow) + (0.4ex * \iRow, 1.15ex * \iRow)$)}]
     (routed \iRow-\i) at (0, 0) {Q};
   }
   \node[bbox={(routed \iRow-0) (routed \iRow-3)}] (subbatch \iRow) {};
  }
  \node[bbox={(subbatch 0) (subbatch 2)}] (subbatch) {};
  \node[anchor=north] (subbatch label) at ($(subbatch.south) + (0, -0.5ex)$) {\shortstack[l]{
   Sub-batches\\
   for each DPU
  }};

  \setlength{\ItemWidth}{2ex}
  \setlength{\ItemHeight}{5ex}
  \setlength{\ItemSpacing}{1ex}
  \coordinate (partial res anchor) at ($(postprocessing label.center -| subbatch label.east) + (2.5ex, \ItemHeight * 2.5 - 2.3ex - 0.5ex)$);
  \coordinate (partial res anchor) at ($(partial res anchor) + (0, 0.1ex)$);
  \foreach \iRow in {0, ..., 2} {
   \foreach \i in {3, ..., 0} {
    \node[answer={($(partial res anchor) + (\ItemSpacing * \i, 0) + (0, -\ItemHeight * \iRow) + (0.4ex * \iRow, 1.15ex * \iRow)$)}]
     (partial res \iRow-\i) at (0, 0) {A};
   }
   \node[bbox={(partial res \iRow-0) (partial res \iRow-3)}] (partial res \iRow) {};
  }
  \node[bbox={(partial res 0) (partial res 2)}] (partial res) {};
  \node[anchor=north] (partial res label) at ($(partial res.south) + (0, -0.5ex)$) {Partial aggregation results};

  \node[procedure, anchor=west] (evaluating label) at ($(routing label.center -| partial res.east) + (7.5ex, 0)$) {
   \hspace*{0.2ex}
   \shortstack[l]{
    \bfseries Evaluation
   }
  };
  \node[proc num, anchor=north] (step 2) at (evaluating label.north west) {2};

  \coordinate (evaluating hot input) at ($(evaluating label.south west) + (7.5ex, -0.3ex)$);
  \coordinate (hot anchor) at ($(evaluating hot input) + (0, -5ex)$);
  \coordinate (hot north) at ($(hot anchor) + (0, 0.2ex)$);
  \coordinate (evaluating cold input) at ($(evaluating hot input) + (2.2ex, 0)$);
  \coordinate (cold bend) at ($(evaluating cold input) + (0, -2.7ex)$);
  \coordinate (cold anchor) at ($(cold bend) + (4.5ex, -2.7ex)$);
  \coordinate (cold north) at ($(cold anchor) + (1.8ex, 2.6ex)$);
  \coordinate (index mid x) at ($(cold north)!.5!(hot north)$);
  \filldraw (hot north) -- ++(-3ex, -5ex)
   coordinate (hot west) -- ++(3ex, 0)
   coordinate (hot south) -- ++(3ex, 0) -- cycle;
  \filldraw[black!60] (cold north) -- ++(4ex, -7ex)
   coordinate (cold east) -- ++(-4ex, 0)
   coordinate (cold south) -- ++(-4ex, 0) -- cycle;
  \node[anchor=north, font=\footnotesize] (hot label) at ($(hot south) + (0, 0.5ex)$) {Hot};
  \node[anchor=north, font=\footnotesize] (cold label) at ($(cold south) + (0, 0.5ex)$) {Cold};
  \node[bbox={(hot north) (hot west) (hot label) (cold north) (cold east) (cold label)}] (index) {};
  \node[inner sep=0.5ex, anchor=north] (index label) at (index mid x |- index.south) {\shortstack{
   Index structures\\
   (\BpTrees)
  }};

  \coordinate (res anchor) at ($(batch anchor |- postprocessing label.center) + (0, \ItemHeight / 2 - 0.5ex)$);
  \coordinate (res anchor) at ($(res anchor) + (0, 0.1ex)$);
  \foreach \i in {8, ..., 0} {
   \node[answer={($(res anchor) + (\ItemSpacing * \i, 0)$)}] (res \i) at (0, 0) {A};
  }
  \node[bbox={(res 0) (res 8)}] (res) {};
  \node[anchor=north] at ($(res.south) + (0, -0.5ex)$) {Results for the batch};

  \begin{scope}[on behind layer]
   \coordinate (routing input) at ($(routing label.west) + (-0.9ex, 0)$);
   \draw[data flow={blue!65}, rounded corners=2.5ex] ($(routing table.north west) + (3.2ex, 0.3ex)$) |- (routing input);
   \draw[data flow={blue!65}] ($(batch.east) + (0.3ex, 0)$) -- (routing input);

   \coordinate (routing output) at ($(routing label.east) + (0.3ex, 0)$);
   \coordinate (evaluating input) at ($(evaluating label.west) + (-0.9ex, 0)$);
   \draw[data flow={blue!75!red!50}, rounded corners=2.5ex] (routing output) -| ($(raq ids.north) + (0, 0.8ex)$);
   \coordinate (subbatch branch) at ($(subbatch 0.west) + (-1.8ex, 0)$);
   \draw[data flow={blue!75!red!50}] (routing output) -- (evaluating input);

   \coordinate (evaluating output) at ($(evaluating label.south west) + (1.5ex, -0.3ex)$);
   \coordinate (postprocessing input) at ($(postprocessing label.east) + (0.3ex, -0.5ex)$);
   \draw[data flow={blue!25!red!50}, rounded corners=2.5ex] ($(raq ids label.south) + (0, 0.7ex)$) |- (postprocessing input);
   \coordinate (partial res branch) at ($(partial res 2.west) + (-2.9ex, 0)$);
   \draw[data flow={blue!25!red!50}] (evaluating output) |- (partial res 0.center) -| (partial res branch) -- (postprocessing input);

   \draw[data flow={blue!75!red!50}] (hot anchor) -- (evaluating hot input);
   \draw[data flow={blue!75!red!50}, rounded corners=0.7ex] (cold anchor) -- (cold bend) -- (evaluating cold input);

   \draw[data flow={red!65}] ($(postprocessing label.west) + (-0.5ex, -0.5ex)$) -- ($(res.east) + (0.5ex, 0)$);
  \end{scope}

  \coordinate (cpu contents south west) at ($(routing table.west |- partial res label.base) + (-0.5ex, +1.4ex)$);
  \node[fit={(cpu contents south west) (routing label) (postprocessing label) (step 3) (routing table) (raq ids label)},
        inner ysep=1ex, inner xsep=0.7ex] (cpu contents) {};
  \begin{scope}[on background layer]
   \fill[green!25] (cpu contents.north west) rectangle (cpu contents.south east);
   \draw[green!70!black, rounded corners=1.2ex, line width=0.8ex] (cpu contents.north west) rectangle (cpu contents.south east);
   \node[inner sep=0.7ex, font=\bfseries, anchor=south east] (cpu label) at ($(cpu contents.south east) + (0.4ex, -0.3ex)$) {CPU};
   \fill[green!70!black] (cpu label.north east)
    [rounded corners=1.2ex] -- (cpu label.south east)
    [sharp corners] -- (cpu label.south west)
    |- cycle;
   \node[text=white, font=\bfseries] at (cpu label) {CPU};
  \end{scope}

  \coordinate (dpu contents west) at ($(step 2.west) + (-0.4ex, 0)$);
  \coordinate (dpu contents east) at ($(index.east) + (1.8ex, 0)$);
  \node[fit={(dpu contents west) (dpu contents east) (index) (index label) (evaluating label) (step 2)},
        inner ysep=1ex, inner xsep=1.1ex] (dpu contents) {};
  \begin{scope}[on background layer]
   \foreach \iDPU in {2, ..., 0} {
    \coordinate (dpu north west) at ($(dpu contents.north west) + (-1.3ex * \iDPU, -1.5ex * \iDPU)$);
    \coordinate (dpu south east) at ($(dpu contents.south east |- partial res label.north) + (-1ex * \iDPU, -1ex * \iDPU + 2.4ex)$);
    \draw[white, rounded corners=1.2ex, line width=1.2ex] (dpu north west) rectangle (dpu south east);
    \fill[green!25] (dpu north west) rectangle (dpu south east);
    \draw[green!70!black, rounded corners=1.2ex, line width=0.8ex] (dpu north west) rectangle (dpu south east);
    \node[inner sep=0.7ex, font=\bfseries, anchor=south east] (dpu label) at ($(dpu south east) + (0.4ex, -0.3ex)$) {DPU};
    \fill[green!70!black] (dpu label.north east)
     [rounded corners=1.2ex] -- (dpu label.south east)
     [sharp corners] -- (dpu label.south west)
     |- cycle;
    \node[text=white, font=\bfseries] at (dpu label) {DPU};

    \ifthenelse{\NOT \iDPU = 0}{
     \draw[data flow={blue!75!red!50}] (routing output) -| (subbatch branch |- subbatch \iDPU.center) -| (evaluating input);
     \ifthenelse{\iDPU = 2}{
      \draw[data flow={blue!25!red!50}] (evaluating output) |- (partial res branch) -- (postprocessing input);
     }{
      \draw[data flow={blue!25!red!50}] (evaluating output) |- (partial res \iDPU.center) -| (partial res branch) -- (postprocessing input);
     }
    }{}
   }
  \end{scope}
 \end{tikzpicture}
 \caption{Workflow of \BpForest processing an RAQ batch.}
 \label{fig:RAQ-batch-flow}
\end{figure*}

\subsection{Query Processing}
\BpForest takes a batch of RAQs represented as an array of RAQs, and
yields an array of the RAQ results arranged in the same order as the
batch.  \Cref{fig:RAQ-batch-flow} illustrates the flow of processing a
batch of RAQs, which consists three steps: 1) Routing, 2) Evaluation,
and 3) Postprocessing.  In the following, we describe each step one by
one.

\paragraph{Routing}
Given a batch of RAQs, the host CPU creates a sub-batch for each DPU
according to the routing table.  Each sub-batch contains the RAQs, of
which the ranges intersect with the key ranges of the DPU.  For
postprocessing, it also records the IDs (or indexes) of the RAQs in each
sub-batch, in order, in the corresponding side table.  After that, the
sub-batches are sent to the DPUs.

 The The CPU initially divides a given batch
evenly among all the threads.  Each thread has its local arrays for
sub-batches and records the RAQs into the appropriate local arrays
according to the routing table.  After that, the local arrays of all
threads are gathered, producing sub-batches.

\paragraph{Evaluation}
Given a sub-batch of RAQs, the DPU calculates the partial result of each
RAQ through aggregation over both \BpTrees.  The partial results are
stored in a resultant array in order.

This step is simply parallelized by evenly dividing the sub-batch among
all threads on the DPU.  No synchronization is required.

\paragraph{Postprocessing}
The CPU collects the partial results from the DPUs, aggregates them, and
stores the results into an output array in the order of the input RAQs.
To store each result in order in constant time, the side table of RAQ
IDs for each sub-batch is used.

 To pursue synchronization-free aggregation,
we assign the partial results of an RAQ to the thread that has routed
it; specifically, we scatter all the partial results to thread-local
arrays, as the inverse of gathering in routing.  After that, all threads
run independently.

\section{Evaluation}
\label{sec:eval}

We experimentally demonstrate the following points:

\begin{itemize}
 \item Query density-driven partitioning outperforms density-unaware
       partitioning in spatiotemporal load balancing for skewed queries
       (\cref{sec:eval:coc}).
 \item Query density-driven partitioning is adaptable to various memory
       limits and DPU numbers (\cref{sec:eval:sens}).
 \item The impacts of the downsides of our approach for workload
       sensitivity (\cref{sec:eval:wrk_sens}).
 \item \BpForest has performance comparable to the state-of-the-art
       skew-resistant index \PimTree~\cite{kang22:PIM-tree} in point-get
       queries, yet handles RAQs efficiently
       (\cref{sec:eval:vs_PIM-tree}).
\end{itemize}

\subsection{Experimental Setting}
\label{sec:eval:setting}

The workload used in our experiments, unless otherwise noted, is as
follows.  The dataset consisted of 500 million key-value pairs and was
distributed to DPUs (i.e., $D = \SI{500}{M}$).  The keys and values were
 integers synthesized to be uniformly distributed within the 64-bit space.  The upper
limit on the number of key-value pairs assigned to each DPU was $1.1
\times \frac{D}{P}$.  This is achieved by setting $\alpha = 10$ in query
density-driven partitioning.  Queries on these data were given in
batches of 1 million RAQs (i.e., $Q = \SI{1}{M}$).  As an instance of
RAQs, we used range-count-if queries, each of which consisted of a key
range $r_q$ and a value $v_q$, and counted the number of key-value pairs
$(k,v)$ such that $k \in r_q$ and $v = v_q$.  The lower ends of $r_q$
followed the Zipf-composite
distribution~\cite{Gilad20:EvenDB,zhong21:REMIX,chen22:flex_addr_space}:
the upper 14 bits followed a Zipf distribution, and the lower 50 bits
followed a uniform distribution.  We examined variously skewed queries
by varying the Zipf skewness parameter\footnote{The default setting of
the Zipf skewness in YCSB is 0.99; the skewness is known to
fall typically into the range [0.2,1.2]~\cite{yang16:zipf_evidence}.}
between 0.6 and 1.2; larger
values indicate more severe skewness.  The upper ends of $r_q$ were
designed to encompass approximately 100 key-value pairs each.  We
prepared 30 batches of queries for each experimental parameter setting,
used the first 10 batches for warming up, and used the remaining 20
batches for measurement.  The error bars in the subsequent figures
represent the $1\sigma$ intervals calculated from the results of the 20
evaluation batches.  The first warm-up batch was also used as the
reference workload for partitioning.

We define \emph{imbalance factors} as indicators of spatial and temporal
load balance, which we use in the following subsections.  The imbalance
factor of data, indicating spatial load imbalance, is calculated by
dividing the maximum number of key-value pairs stored in each DPU by the
arithmetic mean.  The imbalance factor of queries, indicating temporal
load imbalance, is calculated by dividing the maximum number of queries
processed by each DPU by the arithmetic mean.  By definition, imbalance
factors are greater than or equal to 1; values closer to 1 are better.

The hardware environment was a server purchased from
UPMEM\textregistered{}, equipped with a pair of Intel\textregistered{}
Xeon\textregistered{} Silver 4216 (16 cores, 32 threads, \SI{2.1}{GHz},
\SI{22}{MB} cache), \SI{256}{GB} memory on the CPU side, and the
PIM-enabled memory that consisted of 40 ranks of UPMEM\textregistered{}
v1B, where each rank had up to 64 DPUs (see \cref{sec:upmem} for the
details).  Excluding inactive ones, a total of 2,538 DPUs were available
for use.  Unless otherwise noted, we used 16 ranks, i.e., 1,012 DPUs ($P
= 1012$) and ran 16 threads on each DPU.

We prepared two competitors: a \BpTree variant with query
density-unaware partitioning and \PimTree.  As a representative of query
density-unaware partitioning, we used the partitioning that minimizes the
imbalance factor of queries under a spatial constraint such that the
imbalance factor of data did not exceed the given upper limit of 1.1 and
the number of  partitions was the same as that of DPUs.
 This is also
known as a solution to the chains-on-chains partitioning
problem~\cite{pinar04:optimal_1d_part}.
The \PimTree implementation was based on the publicly available
version~\cite{kang22:PIM-tree} by the authors, with minimal
modifications\footnote{They have no significant effect on performance as
far as we know.} to use a specified batch size.  It should be noted
that, to ensure a fair comparison, the experiments in
\cref{sec:eval:vs_PIM-tree} used the parameters well-suited to
\PimTree: 1M was the default batch size of \PimTree; 11 threads on each
DPU, instead of 16, were used because that was the maximum feasible number for \PimTree; and all the 2,538
DPUs in 40 ranks, instead of 1,012, were used because \PimTree could only
use all the DPUs in the machine.  More importantly, \BpForest
and \PimTree are not directly comparable for range queries; \BpForest
supports RAQ, whereas \PimTree supports range-scan (or simply scan)
queries with query fusion.  Only for get queries are they comparable.
To exclude the warm-up for \PimTree,
we calculated the time for 20 batches by subtracting the time for 10
batches from the time for 30 batches.

\subsection{Query Density-Driven vs. Density-Unaware}
\label{sec:eval:coc}

\begin{figure}[tb]
 \centering
 \includegraphics{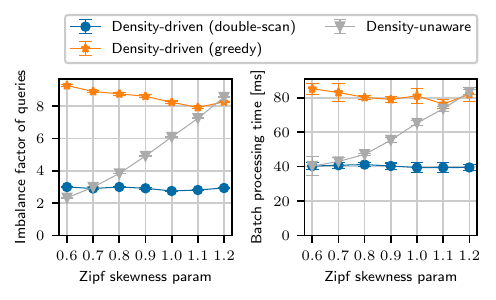}

 \begin{minipage}{0.565\linewidth}
  \subcaption{Imbalance factor of queries.}
  \label{fig:vs_coc:queries}
 \end{minipage}
 \begin{minipage}{0.408\linewidth}
  \subcaption{Elapsed time.}
  \label{fig:vs_coc:time}
 \end{minipage}

 \caption{The imbalance factor of queries and the processing time of RAQ
 batches for query density-driven partitioning (with two hot range
 selection algorithms) and query density-unaware partitioning.}
 \label{fig:vs_coc}
\end{figure}

We compare the spatiotemporal load balance resulting from query
density-driven partitioning with that from query density-unaware
partitioning under various levels of skewness.
\Cref{fig:vs_coc:queries} illustrates the relationship between query
skewness and the imbalance factor of queries.  With query
density-unaware partitioning, the imbalance factor increased as the
queries became more skewed.  Meanwhile, query density-driven
partitioning with double-scan achieved a nearly constant imbalance
factor.  These results indicate that query density-unaware partitioning
lost skew resistance under memory constraints, whereas query
density-driven partitioning retained it.

It has also been shown that query density-driven partitioning with
double-scan resulted in a smaller imbalance factor than query
density-unaware partitioning when the Zipf skewness
was around 1.0.  This result justifies the second scan to produce a
sufficient number of hot partitions and reduce idle DPUs.
 At a Zipf skewness of 0.6, query density-unaware partitioning
achieved a better load balance than query density-driven partitioning.
This is attributed to the design of our method, which is intended to
maintain load balance across a range of query skews
rather than to optimize the balance further under sufficiently mild skew.

Such a level of spatiotemporal load balancing cannot be yielded with
simple one-dimensional partitioning shown in
\cref{fig:example-partitions:equal-data,fig:example-partitions:equal-query}.
For example, at the Zipf skewness of 1.0, data-balanced partitioning
results in an imbalance factor of queries of 334, and query-balanced
partitioning results in an imbalance factor of data of 10.2.

\Cref{fig:vs_coc:time} illustrates the relationship between query
skewness and RAQ-batch processing time.  In most cases, the processing
time had the same trend as the imbalance factor of queries shown in
\cref{fig:vs_coc:queries}.  It indicates that, under realistic query
skewness, query density-driven partitioning improved query load balance,
thereby improving the performance of RAQ processing.  We conducted a
similar experiment with range-max queries and observed the same trend.

\subsection{Parameter Sensitivity}
\label{sec:eval:sens}

\begin{figure}[tb]
 \centering
 \includegraphics{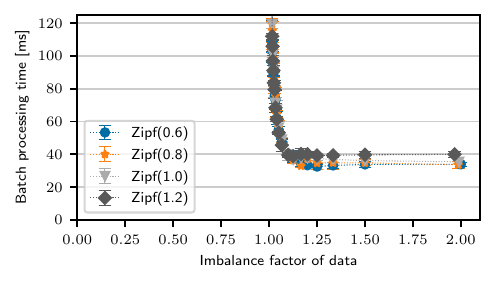}
 \caption{Trade-off between the imbalance factor of data and RAQ-batch
 processing time: plotting by varying the partitioning parameter $\alpha
 = 1, 2, 3, 4, 5, 6, 8, 10, 15, 20, \dots, 65$.}
 \label{fig:alpha}
\end{figure}

We evaluate the impact of changes in memory usage and the numbers of DPUs on
query density-driven partitioning.

\Cref{fig:alpha} illustrates the relationship between the imbalance
factor of data and RAQ-batch processing time.  We adjust the imbalance
factor of data by changing the partitioning parameter $\alpha$.
Regardless of the Zipf skewness, we have confirmed that the processing
time became small (toward the bottom) by relaxing the imbalance factor
(toward the right).  Therefore, tuning $\alpha$ as per the
available memory capacity leads us to the sweet spot, regardless of the
workload.

As the imbalance factor of data increased, the RAQ-batch processing time
converged to different values depending on query skewness.  We attribute
it to data chunking (\cref{sec:load_balance:overview}).  Since load
balancing is based on distributing data chunks, we cannot help but have
load imbalance in the case where one chunk receives a large number of
queries, which becomes more significant in the case where chunk
granularity becomes coarser or query skewness becomes more severe.  Note
that the chunk granularity in our experiments was derived from the
\BpForest implementation, not from the query density-driven partitioning
scheme.  The negative effect of chunking may therefore be mitigated,
depending on the data structures and chunking methods used.

\begin{figure}[tb]
 \centering
 \includegraphics{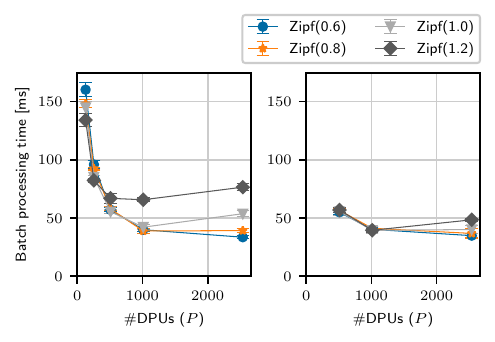}

 \hspace*{0.111\linewidth}
 \begin{minipage}{0.4\linewidth}
  \subcaption{$D = \SI{200}{M}$.}
  \label{fig:nr_dpus:200M}
 \end{minipage}
 \hspace*{0.015\linewidth}
 \begin{minipage}{0.44\linewidth}
  \subcaption{$D = \SI{500}{M}$.}
  \label{fig:nr_dpus:500M}
 \end{minipage}

 \caption{Relationship between the number of DPUs ($P$) and RAQ-batch processing time.}
 \label{fig:nr_dpus}
\end{figure}

\Cref{fig:nr_dpus} illustrates the relationship between the number of
DPUs $P$ and batch processing time.  We here also conducted experiments
with $D = \SI{200}{M}$ because the key-value pairs did not fit into
memory with a small $P$ and a data size of $D = \SI{500}{M}$.  Except
for the cases of Zipf(1.2) with $D = \SI{200}{M}$, we observed that the
querying performance scaled up to $P = 1,012$, and had no gain or even
deteriorated in $P = 2,538$.  That was due to the negative effect of
data chunking on load balancing, in addition to the augmented cost of
query routing and postprocessing on the CPU, which has been corroborated
by the fact that performance degraded more with smaller data sizes and
more severe query skewness.  It is evident that the cases of Zipf(1.2)
with $D = \SI{200}{M}$ have suffered severely from this phenomenon.

\subsection{Workload Sensitivity}
\label{sec:eval:wrk_sens}

We investigated the impact of deviation from the reference workload.  We
set the reference workload as a Zipf-composite with a Zipf skewness
parameter of 1.0.  We selected a base partition and amplified the relative probability of queries
reaching that base partition by a factor of 10, thereby deviating
from the reference workload.  We refer to the resulting
distribution as ZcAmp($b_i$, 10), where $b_i$ is the base partition
subject to amplification.

\begin{figure}[tb]
 \centering
 \includegraphics{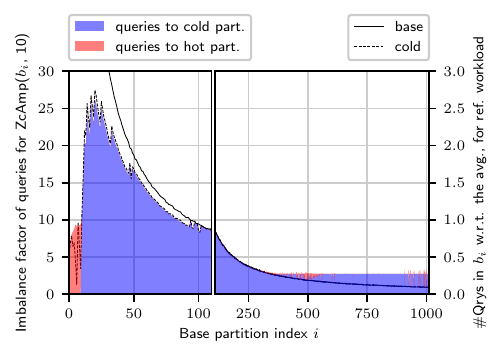}

 \caption{Area plot of the imbalance factor of queries for ZcAmp($b_i$, 10) versus the base partition index $i$ (blue/red indicating queries to cold/hot partitions; left vertical axis), overlaid with a line plot of the number of queries to $b_i$ and its cold partitions prior to deviation, normalized by the average number across DPUs (right vertical axis).}
 \label{fig:wrk_sens}
\end{figure}

The area plot in \cref{fig:wrk_sens} shows the imbalance factor of
queries for the ZcAmp($b_i$, 10) workload for each base partition $b_i$,
where the partitions were computed based on the reference workload before amplification.
Colors indicate
the breakdown of query destinations (hot or cold partitions) in the most
loaded DPU.  The line plot shows the number of queries to each base
partition before the deviation, relative to the average number of
queries among DPUs.  The dotted line indicates the portion of queries
routed to cold partitions.

These results show that, when the workload deviates from the reference workload, cold partitions that attract increased queries are more likely to become performance bottlenecks than hot partitions.
More precisely, when the query density increases in cold partitions that originally had relatively high density, load imbalance arises.
In fact, for $10 \le i \le 331$, the DPU that hosted the cold partitions of the base partition $b_i$ processed the largest number of queries for ZcAmp($b_i$, 10).
This is indicated in \cref{fig:wrk_sens}, where the shape of the area chart in blue matches the dotted line for such $i$.
In particular, the DPU that hosted the cold partitions of $b_{20}$ processed the largest number of queries before the workload deviation, and the imbalance factor of queries increased from 2.8 to 26 when we amplify the query density in this base partition $b_{20}$ (i.e., for ZcAmp($b_{20}$, 10)).
These results are consistent with our analysis in \cref{sec:load_balance:double}, which shows that the total query load on cold partitions of a single base partition can exceed $\frac{Q}{P}$  (and bounded by $\frac{\alpha + 1}{3}\frac{Q}{P}$) whereas the query load on a hot partition is at most $\frac{Q}{P} + M_q$.

\subsection{\BpForest vs. \PimTree}
\label{sec:eval:vs_PIM-tree}

\begin{figure}[tb]
 \centering
 \includegraphics{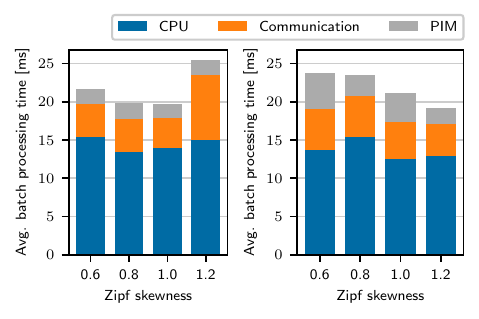}
 \vspace{-\bigskipamount}
 \\
 \begin{minipage}[t]{0.5\linewidth}
  \centering
  \subcaption{\BpForest.}
  \label{fig:vs-pimtree:bpforest}
 \end{minipage}
 \begin{minipage}[t]{0.45\linewidth}
  \centering
  \subcaption{\PimTree.}
  \label{fig:vs-pimtree:pimtree}
 \end{minipage}
 \caption{Get-batch processing time: \BpForest vs. \PimTree.}
 \label{fig:vs-pimtree}
\end{figure}

\begin{figure}[tb]
 \centering
 \includegraphics{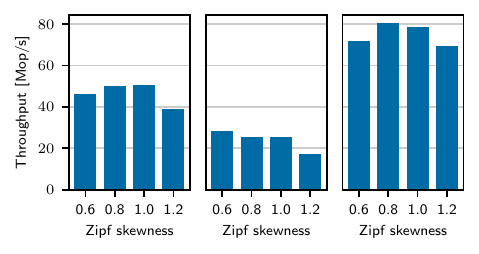}

  \hspace{2.5em}
  \begin{minipage}{0.27\linewidth}
  \centering
  \subcaption{1M Gets.}
  \label{fig:bpforest-throughput:1M_gets}
  \end{minipage}
  \begin{minipage}{0.27\linewidth}
  \centering
  \subcaption{1M RAQs.}
  \label{fig:bpforest-throughput:1M_raqs}
  \end{minipage}
  \begin{minipage}{0.27\linewidth}
  \centering
  \subcaption{20M Gets.}
  \label{fig:bpforest-throughput:20M_gets}
  \end{minipage}

 \caption{Throughput of batched queries on \BpForest.}
 \label{fig:bpforest-throughput}
\end{figure}

We compare the performance of \BpForest with \PimTree for get queries.
\Cref{fig:vs-pimtree} illustrates the arithmetic means of get-batch
processing time for \BpForest and \PimTree, and their breakdowns: CPU,
communication, and PIM, where the PIM time means the duration of the DPU
kernel running.  We measured the PIM and communication times, and
calculated the CPU time by subtracting those values from the total.
\BpForest outperformed \PimTree up to Zipf(1.0), and the
tables were turned at Zipf(1.2).  That was due to
query fusion implemented in \PimTree; more skewed workloads lead to
greater degree of fusion, thereby decreasing the querying cost.  In contrast, as
seen in \cref{fig:nr_dpus:200M}, \BpForest simply suffers a larger
performance penalty from more skewed workloads.  Meanwhile, \BpForest
has benefited from its simplicity and spatial locality.  The
PIM time of \BpForest was generally less than that of \PimTree.
For the CPU time, \PimTree benefited from pipelined querying with two
CPU threads, whereas \BpForest omitted it for implementation simplicity.
Both advantages offset each other, thereby resulting in
comparable performance around Zipf(1.0).

Here, we emphasize the virtues of \BpForest for two major points.
One is that \BpForest supports efficient RAQs.  As illustrated in
\cref{fig:bpforest-throughput:1M_gets,fig:bpforest-throughput:1M_raqs},
batched RAQs had no less than 50\% of throughput as get-batch up to
Zipf(1.0); even at Zipf(1.2), it was 44.2\%.
The ratio of throughput shrinkage is a very small constant
compared to the number of hit items, which is about 100.  The other is
that \BpForest is capable of handling significantly larger batches of
queries than \PimTree; while \PimTree could handle up to 1M without
modification and up to 5M with minor modifications, \BpForest could
handle 1.1G. Consequently, the throughput was enhanced, and the
performance deterioration at Zipf(1.2) was suppressed, as illustrated in
\cref{fig:bpforest-throughput:20M_gets}.  This is a blessing of its simplicity
and spatial locality.

\begin{figure}[tb]
 \centering
 \includegraphics{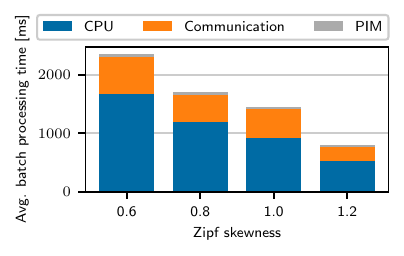}
 \caption{Scan-batch processing time on \PimTree.}
 \label{fig:vs-pimtree-scan}
\end{figure}

Lastly, saying again, a direct comparison of the results of \BpForest
and \PimTree (\cref{fig:vs-pimtree-scan}) is not intended because
they had different design choices and handled different queries.
However, if we would like to implement RAQs for databases backed by
\PimTree, it is natural to issue scan queries and aggregate the results
on the CPU.  In that scenario, the scan-batch processing time shown in
\cref{fig:vs-pimtree-scan} is inevitable.  Therefore, the
results have demonstrated that \BpForest works well for RAQs.

\section{Related Work}
\label{sec:related}
In ordered-key indexing, key-range partitioning is a natural
choice~\cite{choe19:cds_ndp,choe22:HybriDS} amenable to the hardware of
UPMEM PIM, while it suffers considerably from skewed queries. In
contrast, hash-based random distribution of elements over DPUs is
effective in load balancing against skewed queries, which is the heart
of PIM-balanced skip lists~\cite{kang21:PIM_model}.  In a PIM-balanced
skip list, the lower part is randomly distributed over DPUs, and the
upper part is replicated among DPUs.  In querying, all DPUs
cooperatively perform pointer chasing by passing the remaining traversal
queries to other DPUs that own the referents.
\PimTree~\cite{kang22:PIM-tree}, which has been compared with \BpForest in
\cref{sec:eval}, is an improved version of PIM-balanced skip lists in
terms of two major points.  One is the push-pull search strategy: it
combines the pull strategy on the CPU, which efficiently handles a small
set of nodes fitting in the cache, and the push one on the DPUs, which
exploits the massive parallelism of DPUs.  It adopts the pull strategy
to handle high-contention (i.e., hot) internal nodes in search, where
hot nodes are temporarily copied to the CPU side, more specifically,
into the cache.  The
other is chunking nodes of skip lists to take advantage of locality in
search, as in B-trees.  The chunk size of \PimTree is 16 by default; it
is reasonably small to suppress load imbalance against skewed point
queries, whereas it increases the probability that the ranges of range
queries lie over multiple DPUs, incurring the amplification of range
queries.

The state-of-the-art PIM-oriented skip list RADAR~\cite{hua24:RADAR}
alleviates a locality sacrifice in the random distribution of fine-grained,
even chunks, as in \PimTree, by using the hotness information. It
initially assigns a large key range to each DPU to enhance spatial
locality.  Then, it identifies busy DPUs at runtime, adaptively splits
the key ranges on them into finer-grained subranges, and redistributes
them and their key-value pairs to all DPUs.  Finest-grained subranges
are migrated to the CPU side to exploit the cache locality. A key
improvement over \PimTree is that RADAR biases the random distribution
of key subranges to hot ranges; hotter ranges are split finer so that
every DPU evaluates queries over hot ranges in their DRAM. Some basic ideas of
enhancing locality while taming
skewness, such as adaptive partitioning based on hotness, are shared
between RADAR and \BpForest.  However, RADAR addresses only load
balancing of query evaluation and does not account for the imbalance of
data distribution.  Moreover, it is not designed for general
range-aggregate queries.
Note that RADAR's adaptive online partitioning is monotone and
irreversible, thereby not addressing workload changes.

PIMLex~\cite{cui25:PIMLex} is a state-of-the-art PIM-oriented learned
index designed to run efficiently on UPMEM PIM, where learned
indexes~\cite{sun25:learned_index,liu25:multi-dim_learned_index} are a
kind of workload-driven approach and calculate item positions via
machine learning models.  Although PIMLex also
addresses skewed queries, there are differences in design between PIMLex
and \BpForest.  In PIMLex, every key-value pair resides on the CPU side,
and a learned index containing part of the keys resides on the DPU side;
a single partition of the key space is assigned to each DPU.  PIMLex
thus does not support range-aggregate queries and is not designed to run
key-value pair aggregation on DPUs.  PIMLex addresses skewed queries by
assigning replicas of hot partitions to multiple DPUs and randomly
selecting the DPUs on querying.  The replication of partitions
complicates query processing in the case where key-value pairs reside on
the DPU side, as in \BpForest.

Even though typical workload-driven approaches to indexing, including
our work, do not address dynamic workload changes, there has recently
been a notable exception: HeatList~\cite{shen25:HeatList}, which is a
range index based on skip lists and is aware of hotspots (i.e., hot
partitions in this paper).  It is equipped with a fast reaction
mechanism that quickly detects hotspot shifts and moves nodes to the top
layer, which is designed to fit into the CPU cache.  HeatList is
designed for shared memory and does not address load balancing.  The
problem setting differs from our work.

Load-balanced partitioning of one-dimensional arrays, which is called
the chains-on-chains partitioning problem, has been well studied.  There
are efficient exact algorithms~\cite{pinar04:optimal_1d_part} for
optimal partitioning, which are inherently sequential and known to be
difficult to run efficiently in
parallel~\cite{lieber14:scalable_1d_part}.  The variant of
chains-on-chains partitioning used in \cref{sec:eval} runs in serial and
yields optimal solutions in the size-constrained setting.  Our
partitioning scheme is designed for a relaxed setting of the size-constrained
chains-on-chains partitioning problem in which the total number of
partitions is not exactly $P$ but is bounded by $O(P)$.  Our
partitioning algorithms are designed to run in parallel as-is
and are PIM-friendly because hot range selection can run independently
on each DPU.

PIM systems are currently under active development in both academia and
industry.  Another representative of commercial PIM systems, other than
UPMEM PIM, is Samsung HBM-PIM~\cite{lee21:HBM-PIM}.  It adopts a
lockstep computational model and is apt to regular data-parallel
workloads, such as tensor processing in deep neural networks.  Although
both are classified as PIM, UPMEM PIM and HBM-PIM have different,
complementary computing scopes.

\section{Conclusions}
We have presented a query density-driven partitioning scheme of
ordered-key indexes for spatiotemporal load balancing on
resource-constrained PIM systems like UPMEM PIM.  It balances query
density among
DPUs by distributing high query-density ranges over DPUs, while the
remaining part is distributed contiguously and mostly evenly. We have
developed \BpForest, which supports efficient range-aggregate queries on
UPMEM PIM, by applying our partitioning scheme to \BpTrees.  The
experimental results have shown that \BpForest exhibited higher skew
resistance than a \BpTree based on density-unaware partitioning and
performance comparable to \PimTree, the state-of-the-art index without
native support of RAQs, on point-get queries.

There remains much room for further development of our partitioning
scheme.  First and foremost is greater sophistication to address dynamic
workload changes.  In the concurrent
paper~\cite{hideshima26:partial_resharding}, we have already developed
an online partial rebalancing method and obtained preliminary
experimental results showing that it is more efficient than the full
rebalancing method used in this paper under dynamically changing
workloads.  Thus, we can suppose a variety of rebalancing methods based
on our partitioning scheme.  Organizing and coordinating them in a
database system to holistically handle realistic changes in workload is
a significant challenge.

Last but not least, although UPMEM PIM systems were discontinued in
2025, this does not imply that the UPMEM PIM model is useless; we can
identify similar system models, including MPPA architectures.  We
therefore consider that our work on techniques to adapt data-intensive
computing to the UPMEM PIM model is a general, reusable effort to break
the memory wall.  The adaptation to MPPA architectures, such as
Kalray~\cite{dupont15:Kalray} and PEZY~\cite{hatta25:PEZY-SC4s}, holds
considerable promise.

\begin{acks}
 This work was supported by JSPS KAKENHI Grant Number 22K17872, 23K24822, 24KJ0638, 26K02887.
\end{acks}

\balance

\bibliography{reference}

\appendix

\begin{Proof-in-Appendix}
\section{Proof of the Properties of the Greedy Hot Selection Algorithm}
\label{sec:proof:greedy}
We prove the properties satisfied by the result of
\cref{algo:hot_select} below.  Unless otherwise specified, line numbers
are used to refer to the processing and states within
\cref{algo:hot_select}.

\begin{theorem}
 \label{thm:greedy:hot_size}
 The data load on each hot partition is less than
 $\frac{1}{\alpha}\frac{D}{P} + M_d$.
\end{theorem}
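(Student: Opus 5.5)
Every hot partition is created at \lineref{algo:hot_select:hot} as the set $\{c_i\}_{i=l}^r$, where $l$ is the value set at \lineref{algo:hot_select:l} in the same iteration. So it suffices to bound $\SizeOf(\{c_i\}_{i=l}^r)$ at that moment. The plan is to show that, after \lineref{algo:hot_select:l}, the window $\{c_i\}_{i=l+1}^{r}$ always has size strictly less than $\frac{1}{\alpha}\frac{D}{P}$. Since $\{c_i\}_{i=l}^r$ is that window plus the single chunk $c_l$, whose size is at most $M_d$, the claimed bound $\frac{1}{\alpha}\frac{D}{P} + M_d$ follows. Here we use that $\SizeOf$ is additive over disjoint chunks.

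I will prove the window bound by case analysis on the outcome of the $\max$ at \lineref{algo:hot_select:window}.

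\emph{Case 1: $l'$ is finite.} Then $l' \in (l_{\mathrm{old}}, r]$ is the largest index with $\SizeOf(\{c_i\}_{i=l'}^r) \ge \frac{1}{\alpha}\frac{D}{P}$, and $l = l'$ since $l' > l_{\mathrm{old}}$. If $l' = r$, the set $\{c_i\}_{i=l'+1}^{r}$ is empty. Otherwise $l'+1 \in (l_{\mathrm{old}}, r]$, and by maximality of $l'$ we get $\SizeOf(\{c_i\}_{i=l'+1}^r) < \frac{1}{\alpha}\frac{D}{P}$.

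\emph{Case 2: the set is empty, so $l' = -\infty$ and $l = l_{\mathrm{old}}$.} Then every $l'' \in (l_{\mathrm{old}}, r]$ satisfies $\SizeOf(\{c_i\}_{i=l''}^r) < \frac{1}{\alpha}\frac{D}{P}$. In particular this holds for $l'' = l_{\mathrm{old}}+1$ when $l_{\mathrm{old}} < r$. If $l_{\mathrm{old}} = r$ (for instance right after a reset $l \gets r+1$ followed by the next iteration, or at $r = s$), the window $\{c_i\}_{i=l+1}^r$ is empty.

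One subtlety needs care. After a reset at \lineref{algo:hot_select:next_scan}, $l = r+1$, and in the next iteration $r$ increments so that $l_{\mathrm{old}} = r$. Then $(l,r]$ is empty and the hot set, if selected, is the single chunk $c_r$, of size at most $M_d$. This is consistent with the bound. I also have to check that $l \le r$ always holds at \lineref{algo:hot_select:thres}, so that the extracted range is well defined. This follows by induction on iterations: $l$ either stays, increases to at most $r$, or is reset to $r+1$ just before $r$ increments.

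The main obstacle is only bookkeeping: correctly relating the pre- and post-update values of $l$ and handling the $-\infty$ convention. I expect no genuine difficulty there. The strictness ("less than") comes from maximality in Case 1 and the emptiness of the set in Case 2, together with $\SizeOf(\{c_l\}) \le M_d$.
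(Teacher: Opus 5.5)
Your proof is correct and takes essentially the same approach as the paper. The paper argues by contradiction: if a hot partition $\{c_i\}_{i=l}^r$ had size at least $\frac{1}{\alpha}\frac{D}{P} + M_d$, then $\{c_i\}_{i=l+1}^r$ would meet the window threshold, so \lineref{algo:hot_select:window} would force $l' \ge l+1$; this is the contrapositive of your direct maximality argument. Your explicit handling of the $-\infty$ case and of the invariant $l \le r$ fills in bookkeeping that the paper leaves implicit.
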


\begin{proof}
 Suppose that there is a hot partition $h_? \coloneqq \{c_i\}_{i=l_?}^r$
 such that
 \begin{equation}
  \SizeOf(h_?) \ge \frac{1}{\alpha}\frac{D}{P} + M_d.  \label{eq:greedy:hot_size:supposed}
 \end{equation}
 From the definition of $M_d$,
 \begin{equation}
  M_d \ge \SizeOf\left( \{c_{l_?}\} \right).  \label{eq:greedy:hot_size:sizeof_head}
 \end{equation}
 From \cref{eq:greedy:hot_size:supposed,eq:greedy:hot_size:sizeof_head},
 $l_? < r$, and so we can consider a non-empty sequence of data chunks
 $h_?' \coloneqq \{c_i\}_{i=l_?+1}^r$.  Because it follows from
 \cref{eq:greedy:hot_size:supposed,eq:greedy:hot_size:sizeof_head} that
 $\SizeOf(h_?') = \SizeOf(h_?) - \SizeOf\left( \{c_{l_?}\} \right)
 \ge \frac{1}{\alpha}\frac{D}{P}$, after \cref{algo:hot_select} passes
 \lineref{algo:hot_select:window} with such $r$, $l' \ge l_? + 1 >
 l_?$ and after the next line $l > l_?$.  This contradicts the assumption
 that $h_?$ was selected as a hot range in
 \lineref{algo:hot_select:hot}.
\end{proof}

\begin{theorem}
 \label{thm:greedy:hot_nqry}
 The query load on each hot partition is between $\frac{Q}{P}$ and
 $\frac{Q}{P} + M_q$.
\end{theorem}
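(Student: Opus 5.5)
The lower bound is immediate from the selection rule. A hot partition $\{c_i\}_{i=l}^r$ is added at \lineref{algo:hot_select:hot} only when the test at \lineref{algo:hot_select:thres} succeeds. That test is exactly $\NQrys\left(\{c_i\}_{i=l}^r\right) \ge \frac{Q}{P}$.

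For the upper bound, I will argue by a minimality (first-crossing) argument on the scan. Fix a selected hot partition $h = \{c_i\}_{i=l}^r$, chosen at iteration $r$ with left end $l$. I will first check that $l=r$ is trivial. In that case $\NQrys(h) = \NQrys(\{c_r\}) \le M_q \le \frac{Q}{P} + M_q$.

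So assume $l<r$ and consider the previous iteration $r-1$. At the end of that iteration, the variable $l$ was some value $l_{r-1}$ that did not trigger selection. If it had, $l$ would have been reset to $r$ and we could not obtain a left end $l<r$ at iteration $r$. Moreover, the left pointer is monotone nondecreasing within a scan segment: \lineref{algo:hot_select:l} takes a max, and \lineref{algo:hot_select:next_scan} only moves it forward. Hence $l_{r-1} \le l$. Since no selection occurred at iteration $r-1$, we also get $l \le r-1$.

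The window at iteration $r-1$, namely $\{c_i\}_{i=l_{r-1}}^{r-1}$, failed the threshold. So $\NQrys\left(\{c_i\}_{i=l_{r-1}}^{r-1}\right) < \frac{Q}{P}$. Since $l \ge l_{r-1}$, the range $\{c_i\}_{i=l}^{r-1}$ is a subset of that window. Because $\NQrys$ is monotone under inclusion, its count is also below $\frac{Q}{P}$.

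Next I decompose $h = \{c_i\}_{i=l}^{r-1} \cup \{c_r\}$. I need subadditivity of $\NQrys$ over this split, i.e.\ that a query hitting $h$ hits one of the two parts. This holds under the paper's convention that \NQrys counts hits after converting range queries to point ones; I will state this as the assumption being used. It then gives $\NQrys(h) < \frac{Q}{P} + \NQrys(\{c_r\}) \le \frac{Q}{P} + M_q$.

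The step I expect to need the most care is justifying $l_{r-1} \le l \le r-1$ and that no selection happened at $r-1$. I will handle it by tracing the updates to $l$ explicitly: within iteration $r$, $l$ only increases from $l_{r-1}$ via the max at \lineref{algo:hot_select:l}. If instead a selection occurred at $r-1$, then $l=r$ entering iteration $r$, and the $l=r$ case above applies.
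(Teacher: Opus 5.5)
Your proof is correct and is essentially the paper's argument run directly instead of by contradiction. The paper assumes $\NQrys(h) \ge \frac{Q}{P} + M_q$, deduces $l<r$, and shows that the window at iteration $r-1$ would already have passed the threshold and reset $l$ to $r$; that window contains $\{c_i\}_{i=l}^{r-1}$ by monotonicity of $l$. The only difference is that the paper writes $\NQrys(h') = \NQrys(h) - \NQrys(\{c_r\})$ as an equality, whereas you state explicitly the weaker subadditivity that is actually needed.
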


\begin{proof}
 Since a range selected as a hot range should pass the condition in
 \lineref{algo:hot_select:thres}, the query load is $\frac{Q}{P}$ or
 higher.  Suppose that there exists a hot partition $h_? \coloneqq
 \{c_i\}_{i=l_?}^{r_?}$ such that
 \begin{equation}
  \NQrys(h_?) \ge \frac{Q}{P} + M_q.  \label{eq:greedy:hot_query:supposed}
 \end{equation}
 From the definition of $M_q$,
 \begin{equation}
  M_q \ge \NQrys\left( \{c_{r_?}\} \right).  \label{eq:greedy:hot_query:nqry_last}
 \end{equation}
 From \cref{eq:greedy:hot_query:supposed,eq:greedy:hot_query:nqry_last},
 $l_? < r_?$, and so we can consider a non-empty sequence of data chunks
 $h_?' \coloneqq \{c_i\}_{i=l_?}^{r_?-1}$.  Here, because $l$ is updated only in
 lines \ref{algo:hot_select:l} and \ref{algo:hot_select:next_scan}, $l$ increases monotonically during the
 algorithm.  The fact that $h_?$ was selected as a hot range in
 \lineref{algo:hot_select:hot} means that when it reached
 \lineref{algo:hot_select:l} with $r = r_?$, the value of $l$ became
 $l_?$.  Therefore, based on monotonicity, when $r = r_? - 1$,
 $l \le l_?$.  At this point,
 \begin{align*}
  \NQrys\left( \{c_i\}_{i=l}^r \right) &= \NQrys\left( \{c_i\}_{i=l}^{r_?-1} \right) \\
  &\ge \NQrys(h_?')  \quad (\because l \le l_?) \\
  &= \NQrys(h_?) - \NQrys(\{c_{r_?}\}) \\
  &\ge \frac{Q}{P},  \quad (\because \text{\cref{eq:greedy:hot_query:supposed,eq:greedy:hot_query:nqry_last}})
 \end{align*}
 so $\{c_i\}_{i=l}^{r_?-1}$ should pass the condition in
 \lineref{algo:hot_select:thres} and be selected as a hot range.
 Then, after $l$ is updated to $r_?$ in
 \lineref{algo:hot_select:next_scan}, the next iteration begins, which
 is the only iteration in \cref{algo:hot_select} where $r$ has the value
 $r_?$.  However, if this is the case, due to the monotonicity of $l$,
 $l \ge r_?$ holds thereafter, so $l = l_?$ cannot be true, and $h_? =
 \{c_i\}_{i=l_?}^{r_?}$ cannot be selected as a hot range in
 \lineref{algo:hot_select:hot}.  This contradicts the assumption, so the
 query load of each hot partition is less than $\frac{Q}{P} + M_q$.
\end{proof}

\begin{definition}
 A sequence of hot ranges $\left\{ \{c_i\}_{i=l_j}^{r_j} \right\}_{j=1}^{n}$ is \emph{consecutive} if for all $j \in [1, n)$, $r_j + 1 = l_j$.
\end{definition}
\begin{definition}
 A consecutive sequence of hot ranges\break $S \coloneqq \left\{ \{c_i\}_{i=l_j}^{r_j} \right\}_{j=1}^{n}$ is \emph{maximal} if there exists no consecutive sequence $T \ne S$ such that $S$ is a subseqeunce of $T$.
\end{definition}

\begin{lemma}
 \label{thm:greedy:hot_lump}
 A maximal sequence of hot ranges in the key space starts from
 the starting point of a base partition or has a total data size of
 $\frac{1}{\alpha}\frac{D}{P}$ or more.
\end{lemma}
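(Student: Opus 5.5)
Consider a maximal consecutive sequence of hot ranges $S = \{\{c_i\}_{i=l_j}^{r_j}\}_{j=1}^{n}$ produced by \cref{algo:hot_select} within a base partition $b = \{c_i\}_{i=s}^{e}$. (Hot ranges are extracted base partition by base partition, so $S$ lies inside one $b$.) The plan is to look only at its first hot range $\{c_i\}_{i=l_1}^{r_1}$. We show that either $l_1 = s$, or this first range alone already has data size at least $\frac{1}{\alpha}\frac{D}{P}$. The total size of $S$ is at least the size of its first member, so this suffices.

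The key step is to analyse the value of $l$ when the first range is selected in \lineref{algo:hot_select:hot} during the iteration $r = r_1$. At that moment $l = l_1$ after \lineref{algo:hot_select:l}, so $l_1 = \max\{l_{\mathrm{prev}}, l'\}$. Here $l_{\mathrm{prev}}$ is the value of $l$ entering the iteration and $l'$ is computed in \lineref{algo:hot_select:window}. We split into two cases.

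\emph{Case 1: $l_1 = l'$ with $l' > l_{\mathrm{prev}}$.} Then $l'$ is a finite maximum, so by definition $\SizeOf(\{c_i\}_{i=l'}^{r_1}) \ge \frac{1}{\alpha}\frac{D}{P}$. Hence the first range has size at least $\frac{1}{\alpha}\frac{D}{P}$.

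\emph{Case 2: $l_1 = l_{\mathrm{prev}}$.} By the monotonicity of $l$ (noted in the proof of \cref{thm:greedy:hot_nqry}), $l$ changes only in \lineref{algo:hot_select:l} and \lineref{algo:hot_select:next_scan}. We trace back the iteration at which $l$ last took the value $l_1$.
\begin{itemize}
 \item If $l_1$ was assigned by \lineref{algo:hot_select:next_scan}, then $l_1 = r' + 1$ for some hot range ending at $r'$. That range would be immediately followed by the first range of $S$, contradicting maximality.
 \item If $l_1$ was assigned by \lineref{algo:hot_select:l} in an earlier iteration, then it came from a window $l'$ there. We then show $l$ never exceeds that value afterward. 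Since windows only shrink from the left as $r$ grows and the max over $(l,r]$ keeps $l'$ nondecreasing, the size condition $\SizeOf(\{c_i\}_{i=l_1}^{r}) \ge \frac{1}{\alpha}\frac{D}{P}$ was already met at that earlier $r$. Adding chunks up to $r_1$ only increases the size, so the bound still holds at $r_1$.
 \item Otherwise $l$ was never changed, so $l_1 = s$, which is the start of the base partition.
\end{itemize}

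The main obstacle is stating this trace-back cleanly. The cleanest route is an invariant maintained at the start of each iteration $r$: either $l = s$; or $l = r'+1$ with $c_{r'}$ the end of the most recently selected hot range; or $\SizeOf(\{c_i\}_{i=l}^{r-1}) \ge \frac{1}{\alpha}\frac{D}{P}$. We prove it by induction on $r$. An assignment in \lineref{algo:hot_select:l} that changes $l$ establishes the third option (monotone in $r$). \Cref{algo:hot_select:next_scan} establishes the second. Applying the invariant at $r = r_1$ gives the lemma, since the second option contradicts maximality.
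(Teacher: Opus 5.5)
Your proposal is correct and takes essentially the paper's approach. Both proofs look at the first hot range of the maximal sequence and note that the value of $l$ when it is selected is one of three things: $s$; immediately after the previous hot range, which contradicts maximality; or a value set in \lineref{algo:hot_select:l}, after which the window size stays at least $\frac{1}{\alpha}\frac{D}{P}$ as $r$ grows until a hot range is found. Your three-way inductive invariant formalizes that persistence step. One small correction: your parenthetical claim that a maximal sequence lies inside a single base partition is not true in general, since a hot range ending at $e$ can be followed by one starting at the next base partition's first chunk. This is harmless, because you only use the first member and the base partition containing it.
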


\begin{proof}
 When $l$ is updated to $l'$ in \lineref{algo:hot_select:l},
 $\SizeOf\left(\{c_i\}_{i=l}^r\right) \ge \frac{1}{\alpha}\frac{D}{P}$
 holds due to \lineref{algo:hot_select:window}.  Otherwise,
 $\SizeOf\left(\{c_i\}_{i=l}^r\right)$ is larger than in the previous
 iteration because $r$ has incremented.  Therefore, once $l$ is updated in \lineref{algo:hot_select:l},
 $\SizeOf\left(\{c_i\}_{i=l}^r\right) \ge \frac{1}{\alpha}\frac{D}{P}$
 holds until the algorithm finds a hot range.

 Here, considering a maximal sequence of hot ranges that does not start from
 the starting point of the base partition, the value of $l$ must be
 updated in \lineref{algo:hot_select:l} while finding the first hot range $h_0$ of the sequence.
 This is because otherwise, the beginning of $h_0$ would be that of
 the base partition or immediately after the previous hot range.
 Therefore, $h_0$ satisfies $\SizeOf(h_0) \ge
 \frac{1}{\alpha}\frac{D}{P}$, and the data size of the entire sequence
 containing it is also greater than or equal to this value.
\end{proof}

\begin{lemma}
 \label{thm:greedy:window_nqry}
 Ignoring queries to the hot ranges, placing the window of
 \FindHot anywhere in a base partition results in the number of queries
 to the data chunks that overlap with the window being less than $\frac{Q}{P}$.
\end{lemma}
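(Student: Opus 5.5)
The plan is to fix an arbitrary window position and reduce the claim to a failed threshold test in \lineref{algo:hot_select:thres} of \cref{algo:hot_select}. A window is determined by its right end $r \in [s,e]$. Its left end is $w(r) \coloneqq \max\{s, \max\{l' \mid \SizeOf(\{c_i\}_{i=l'}^r) \ge \frac{1}{\alpha}\frac{D}{P}\}\}$, so the window is $\{c_i\}_{i=w(r)}^r$; this is exactly the set of chunks overlapping a window of data size $\frac{1}{\alpha}\frac{D}{P}$ ending in $c_r$. We must show that $\NQrys(\{c_i\}_{i=w(r)}^r \setminus \bigcup h) < \frac{Q}{P}$, where $h$ is the final output. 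Two preliminary facts are used throughout. First, $w$ is monotone non-decreasing in $r$, because sizes are non-negative. Second, $l$ is monotone, and after \lineref{algo:hot_select:l} of iteration $r$ we have $l_r \coloneqq \max\{l_{\text{prev}}, w(r)\} \ge w(r)$, which is the argument already used in the proofs of \cref{thm:greedy:hot_nqry,thm:greedy:hot_lump}.

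First I split the window into the part $[l_r, r]$ that the algorithm actually tests at iteration $r$ and the gap $[w(r), l_r)$. For the tested part there are two cases. If the test at \lineref{algo:hot_select:thres} fails, then $\NQrys(\{c_i\}_{i=l_r}^r) < \frac{Q}{P}$, and removing chunks that later become hot only decreases this count. If the test succeeds, all of $[l_r,r]$ becomes hot and contributes nothing. Since the hot ranges are disjoint and never revisited, later iterations cannot add cold mass back.

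Next I handle the gap. If $l_r = w(r)$ the gap is empty and we are done. Otherwise $l_{\text{prev}} > w(r)$, and since $l_{\text{prev}}$ was last set either by \lineref{algo:hot_select:l} at some earlier $r''$ or by \lineref{algo:hot_select:next_scan}, I distinguish two cases. In the first case $l_{\text{prev}} = w(r'') \le w(r)$ by monotonicity of $w$, which contradicts $l_{\text{prev}} > w(r)$. In the second case $l_{\text{prev}} = r'+1$ for a hot range $\{c_i\}_{i=l_h}^{r'}$ with $r' \ge w(r)$. The cold chunks in the gap are then those in $[w(r), l_h)$. Here I would iterate: apply the same analysis to iteration $r'$, where $w(r') \le w(r)$, together with the earlier hot ranges that tile the region to the left. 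The aim is to show that every cold chunk in $[w(r), l_h)$ lies in a tested interval $[l_{\tilde r},\tilde r]$ that failed its test, with $\tilde r \le r'$. I would then use that the chunks of $[l_{\text{prev}}, r]$ lie to the right of all these.

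The main obstacle is this last step. The cold chunks in $[w(r), l_h)$ and the cold chunks in $[r'+1, r]$ come from two different failed tests, and a na\"ive sum gives only $2\frac{Q}{P}$. My first attempt is to use the consecutive structure of \cref{thm:greedy:hot_lump}. If the maximal hot sequence containing $[l_h, r']$ does not start at the base partition, it has size at least $\frac{1}{\alpha}\frac{D}{P}$. Since the whole window $[w(r), r]$ has size less than $\frac{1}{\alpha}\frac{D}{P} + M_d$, as in \cref{thm:greedy:hot_size}, such a lump leaves essentially no room for cold chunks to the left of $l_h$ inside the window, up to the single boundary chunk $c_{w(r)}$. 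If the lump does start at the base partition, then nothing to its left lies in the base partition at all. If the boundary chunk turns out to break the strict bound, I would fall back to stating the lemma with an additive $M_q$ slack, which is harmless for the downstream bound $\alpha\frac{Q}{P}$ on cold load.
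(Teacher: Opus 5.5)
\textbf{There is a genuine gap: the step you flag as the main obstacle is never proved.} Your setup is sound: the window $\{c_i\}_{i=w(r)}^r$, the monotonicity of $w$ and $l$, and the case split on how $l_{\text{prev}}$ was last set are all fine. But the decisive step ends with ``essentially no room'' and a fallback to a weaker lemma. The size argument you sketch fails on two counts.
\begin{enumerate}
\item You apply \cref{thm:greedy:hot_lump} to the maximal hot sequence through $\{c_i\}_{i=l_h}^{r'}$. Nothing keeps that sequence inside the window: a hot range starting at $r'+1$ and ending beyond $r$ extends it. So its total size $\ge \frac{1}{\alpha}\frac{D}{P}$ says nothing about how much of the window it covers.
\item $\SizeOf(\{c_i\}_{i=w(r)}^r) < \frac{1}{\alpha}\frac{D}{P} + M_d$ is the wrong inequality. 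It only leaves a remainder of size $< M_d$, which need not be a single chunk. What you need is the defining maximality of the window, $\SizeOf(\{c_i\}_{i=w(r)+1}^r) < \frac{1}{\alpha}\frac{D}{P}$.
\end{enumerate}
The fallback does not rescue the proof. It proves a different statement, and it is not harmless: \cref{thm:greedy:cold_nqry} applies this lemma to up to $\alpha$ disjoint windows, so the cold bound would become $\alpha\bigl(\frac{Q}{P} + M_q\bigr)$. The same slack would also propagate into the double-scan bounds.

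\textbf{How the paper avoids the two-test problem.} It uses two steps.
\begin{enumerate}
\item The cold part $\{c_i\}_{i=w(r)}^r \setminus \bigcup h$ is empty or a single interval $\{c_i\}_{i=a'}^{z'}$. A maximal hot sequence with a cold chunk on each side inside the window lies in $[w(r)+1, r-1]$ and does not start at $s$. By \cref{thm:greedy:hot_lump} it has size $\ge \frac{1}{\alpha}\frac{D}{P}$, contradicting the maximality inequality above.
\item It uses the failed test at iteration $z'$, not at iteration $r$. \lineref{algo:hot_select:l} never raises $l$ beyond $w(z') \le w(r) \le a'$. Every hot range closed before iteration $z'$ ends below $a'$, because its last chunk is hot while $c_{a'},\dots,c_{z'}$ are cold; so \lineref{algo:hot_select:next_scan} also keeps $l \le a'$. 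Since $c_{z'}$ is cold, the test at \lineref{algo:hot_select:thres} failed there, so $\NQrys(\{c_i\}_{i=a'}^{z'}) \le \NQrys(\{c_i\}_{i=l}^{z'}) < \frac{Q}{P}$.
\end{enumerate}
One failed test thus covers the whole cold part, so no $2\frac{Q}{P}$ sum arises.

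\textbf{Alternative: close your own case 2 directly.} Suppose $c_p$ were the rightmost cold chunk in $[w(r), l_h)$. The hot range starting at $p+1$ cannot have been started by \lineref{algo:hot_select:next_scan}, since $c_p$ is cold. So its left end was set by \lineref{algo:hot_select:l}, and by the argument inside the proof of \cref{thm:greedy:hot_lump} its size is $\ge \frac{1}{\alpha}\frac{D}{P}$. Yet it lies in $\{c_i\}_{i=w(r)+1}^{r}$, a contradiction. Hence the gap contains no cold chunk, and the single test at iteration $r$ suffices. Note that this needs the first-range property from that lemma's proof, not just the lemma's statement.
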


\begin{proof}
 Let $b \coloneqq \{c_i\}_{i=s}^e$ be a base partition, $h$ be
 $\FindHot(b)$, $h^*$ be $\bigcup h$, and
 \begin{align}
  L(z) &\coloneqq \left\{l \in [s, z] \relmiddle| \SizeOf\left( \{c_i\}_{i=l}^z \right) \ge \frac{1}{\alpha}\frac{D}{P} \right\},  \label{eq:greedy:window_nqry:L} \\
  a(z) &\coloneqq \begin{cases} s & (L(z) = \emptyset) \\ \max L(z) & (\text{otherwise}) \end{cases}.  \label{eq:greedy:window_nqry:a}
 \end{align}
 Then, this lemma says that for all
 $z$ such that $s \le z \le e$,
 \[
  \NQrys \left( \{c_i\}_{i=a(z)}^z \setminus h^* \right) < \frac{Q}{P}
 \] holds.

 First, we show that $\{c_i\}_{i=a(z)}^z \setminus h^*$ is $\emptyset$ or a
 sequence of consecutive data chunks.
 Note that $a(z) \le z$ by definition, so $\{c_i\}_{i=a(z)}^z$ is well-defined.
 Suppose that it consists of two or more
 sequences.  Then, $a(z) + 2 \le z$, and a maximal sequence of hot ranges is
 included in $\{c_i\}_{i=a(z)+1}^{z-1}$.  Because $a(z) \ge s$, we have $a(z)+1 >
 s$, so by \cref{thm:greedy:hot_lump},
 \begin{equation}
  \SizeOf\left( \{c_i\}_{i=a(z)+1}^{z-1} \right) \ge \frac{1}{\alpha}\frac{D}{P}.  \label{eq:greedy:window_nqry:inside}
 \end{equation}
 When $L(z) = \emptyset$, for all $l \in [s, z]$, $\SizeOf\left(
 \{c_i\}_{i=l}^z \right) < \frac{1}{\alpha} \frac{D}{P}$ by \cref{eq:greedy:window_nqry:L}.  If so,
 \[
  \SizeOf\left( \{c_i\}_{i=a(z)+1}^{z-1} \right) \le \SizeOf\left( \{c_i\}_{i=a(z)}^z \right) < \frac{1}{\alpha} \frac{D}{P},
 \]
 which contradicts
 \cref{eq:greedy:window_nqry:inside}.
 When $L(z) \ne \emptyset$, from
 \cref{eq:greedy:window_nqry:inside},
 \[
  \SizeOf\left( \{c_i\}_{i=a(z)+1}^z \right) \ge \SizeOf\left(\{c_i\}_{i=a(z)+1}^{z-1}\right) \ge \frac{1}{\alpha}\frac{D}{P}
 \]
 holds and implies that $a(z)+1 \in L(z)$,
 $\max L(z) \ge a(z)+1$, but this contradicts \cref{eq:greedy:window_nqry:a}.

 Therefore, either $\{c_i\}_{i=a(z)}^z \setminus h^* = \emptyset$ or
 \begin{equation}
  {}^\exists a', {}^\exists z' \text{ s.t. } \{c_i\}_{i=a(z)}^z \setminus h^* = \{c_i\}_{i=a'}^{z'}  \label{eq:greedy:window_nqry:range}
 \end{equation}
 holds.  In the former case, $\NQrys(\{c_i\}_{i=a(z)}^z \setminus h^*) = 0
 < \frac{Q}{P}$.  Below, we consider the latter case.
 From \cref{eq:greedy:window_nqry:range},
 \begin{gather}
  {}^\forall c \in \{c_i\}_{i=a'}^{z'}, c \notin h^*,  \label{eq:greedy:window_nqry:no_hot} \\
  a(z) \le a' \le z' \le z.  \label{eq:greedy:window_nqry:domain}
 \end{gather}
 In applying \cref{algo:hot_select} to $b$, we should have reached
 \lineref{algo:hot_select:window} with $r = z'$.  From
 \cref{eq:greedy:window_nqry:domain}, we have $r = z' \le z$, and
 from \cref{eq:greedy:window_nqry:a}, we have $l' \le a(z)$ at this line, and
 again from \cref{eq:greedy:window_nqry:domain}, we have $l' \le
 a'$.  From the similar reasoning, we have $l' \le a'$ in
 \lineref{algo:hot_select:window} of all previous iterations where $r <
 z'$.  Therefore, $l$ has not been updated to a value greater than $a'$
 in \lineref{algo:hot_select:l} up to this point.
 In addition, from \cref{eq:greedy:window_nqry:no_hot}, we have not
 found a hot range in any iteration where $a(z) \le r < z'$, so $l$ has not
 been updated to a value greater than $a'$ in
 \lineref{algo:hot_select:next_scan} either.  So, $l \le a'$ in
 \lineref{algo:hot_select:l} of the iteration with $r = z' \le z$, and
 thus
 \begin{equation}
  \{c_i\}_{i=l}^r \supseteq \{c_i\}_{i=a'}^{z'}.  \label{eq:greedy:window_nqry:supset}
 \end{equation}
 According to \cref{eq:greedy:window_nqry:no_hot}, $\{c_i\}_{i=l}^r$
 should not have been selected as a hot range, so
 $\NQrys\left(\{c_i\}_{i=l}^r\right) < \frac{Q}{P}$ holds in
 \lineref{algo:hot_select:thres}.  Hence,
 \begin{align*}
  \NQrys\left(\{c_i\}_{i=a(z)}^z \setminus h^*\right) &= \NQrys\left(\{c_i\}_{i=a'}^{z'}\right)  \quad (\because \text{\cref{eq:greedy:window_nqry:range}}) \\
  &\le \NQrys\left(\{c_i\}_{i=l}^r\right)  \quad (\because \text{\cref{eq:greedy:window_nqry:supset}}) \\
  &< \frac{Q}{P}.
 \end{align*}
\end{proof}

\begin{theorem}
 \label{thm:greedy:cold_nqry}
 The query load on the ranges not selected as hot ranges in each base
 partition is less than $\alpha \frac{Q}{P}$.
\end{theorem}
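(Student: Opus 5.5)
The plan is to cover the base partition $b = \{c_i\}_{i=s}^e$ by at most $\alpha$ windows of the kind considered in \cref{thm:greedy:window_nqry}, and then sum the per-window bounds. Concretely, I will define right endpoints $z_1 > z_2 > \cdots$ by setting $z_1 \coloneqq e$ and $z_{k+1} \coloneqq a(z_k) - 1$. Here $a(\cdot)$ is the function defined in the proof of \cref{thm:greedy:window_nqry}, and the construction stops as soon as $a(z_k) = s$. The resulting windows $W_k \coloneqq \{c_i\}_{i=a(z_k)}^{z_k}$ are disjoint and consecutive, and their union is exactly $b$. The query load on the cold ranges of $b$ is $\NQrys(b \setminus h^*)$. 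Since the $W_k$ are disjoint, this equals $\sum_k \NQrys(W_k \setminus h^*)$, and by \cref{thm:greedy:window_nqry} each term is less than $\frac{Q}{P}$. It therefore suffices to show that the number of windows $K$ is at most $\alpha$.

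To bound $K$, I will use the size of each window. Every window except the last has $L(z_k) \neq \emptyset$, because otherwise $a(z_k) = s$ and the construction would have stopped at that window. Hence $\SizeOf(W_k) \ge \frac{1}{\alpha}\frac{D}{P}$ for every $k < K$. The base partitions come from \lineref{algo:hot-cold:base} with balanced sizes, so I will take $\SizeOf(b) \le \frac{D}{P}$. Then $(K-1)\frac{1}{\alpha}\frac{D}{P} \le \sum_{k<K}\SizeOf(W_k) \le \SizeOf(b) \le \frac{D}{P}$, which gives $K \le \alpha + 1$.

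This is one window too many, and removing it is where I expect the main obstacle. My first attempt is strictness. If the last window $W_K$ is nonempty, then $\sum_{k<K}\SizeOf(W_k) < \SizeOf(b) \le \frac{D}{P}$, so $K - 1 < \alpha$, and since $\alpha$ and $K$ are integers, $K \le \alpha$. The inequality is strict because $W_K$ contains at least the chunk $c_s$, and I will assume chunks have positive size (chunks are nonempty data). If base partitions are only approximately balanced, I will instead state the bound relative to $\SizeOf(b)$, or require the balancing step to guarantee $\SizeOf(b) \le \frac{D}{P}$ for the first $K-1$ windows' worth of data. Either way, I will check the boundary case $L(z_K) = \emptyset$ against the paper's stated convention for $a$.

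Combining the two parts, $\NQrys(b \setminus h^*) < K \cdot \frac{Q}{P} \le \alpha\frac{Q}{P}$. The remaining details to verify are that the windows really tile $b$, which follows from $a(z) \le z$ so the sequence strictly decreases and terminates, and that every hot chunk lies in exactly one window, so removing $h^*$ from each window sums correctly.
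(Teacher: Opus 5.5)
Your proposal is correct and follows essentially the paper's proof: tile $b$ from the right with the windows of \cref{thm:greedy:window_nqry}, bound each window's non-hot load by $\frac{Q}{P}$, and count the windows using $\SizeOf(b) = \frac{D}{P}$. The only difference is bookkeeping. You absorb the leftover prefix $\{c_i\}_{i=s}^{l_n-1}$ as the final window and get $K \le \alpha$ directly from strictness (positive chunk size). The paper instead splits into the cases $n = \alpha$ and $n \le \alpha - 1$; its $n = \alpha$ case concludes the leftover is empty, which implicitly uses the same positivity assumption.
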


\begin{proof}
 Let $b \coloneqq \{c_i\}_{i=s}^e$ be a base partition, $h$ be
 $\FindHot(b)$, $h^*$ be $\bigcup h$.  Then, this theorem says that
 $\NQrys(b \setminus h^*) \le \alpha \frac{Q}{P}$.

 Let $r_1 \coloneqq e$ and
 \[
  l_1 \coloneqq \max\left\{ l \in [s, r_1] \relmiddle| \SizeOf\left( \{c_i\}_{i=l}^{r_1} \right) \ge \frac{1}{\alpha}\frac{D}{P} \right\}
  \]
 be defined.  Note that because
 \[
  \SizeOf\left(\{c_i\}_{i=s}^e\right) = \SizeOf(b) = \frac{D}{P} \ge
 \frac{1}{\alpha}\frac{D}{P}
 \]
 based on \lineref{algo:hot-cold:base} in
 \cref{algo:hot-cold}, such $l_1 \ge s$ certainly exists.  From
 \cref{thm:greedy:window_nqry}, we have
 $\NQrys\left(\{c_i\}_{i=l_1}^{r_1} \setminus h^*\right) < \frac{Q}{P}$.
 Similarly, if $\SizeOf\left(\{c_i\}_{i=s}^{l_1-1}\right) \ge
 \frac{1}{\alpha}\frac{D}{P}$, then $r_2 \coloneqq l_1 - 1$, $l_2
 \coloneqq \max\left\{ l \in [s, r_2] \relmiddle| \SizeOf\left(
 \{c_i\}_{i=l}^{r_2} \right) \ge \frac{1}{\alpha}\frac{D}{P} \right\}$
 can be defined, and $\NQrys\left(\{c_i\}_{i=l_2}^{r_2} \setminus
 h^*\right) < \frac{Q}{P}$ holds.

 Let us suppose that such a truncation from the end of $b$ can be
 performed $n$ times and $\{l_j\}_{j=1}^n, \{r_j\}_{j=1}^n$ are
 obtained.  Then, for $j \in [1, n]$,
 \begin{gather}
  l_j \coloneqq \max\left\{ l \in [s, r_j] \relmiddle| \SizeOf\left( \{c_i\}_{i=l}^{r_j} \right) \ge \frac{1}{\alpha}\frac{D}{P} \right\},  \label{eq:greedy:cold_nqry:cut_size} \\
  \NQrys\left(\{c_i\}_{i=l_j}^{r_j} \setminus h^*\right) < \frac{Q}{P}.  \label{eq:greedy:cold_nqry:cut_nqry}
 \end{gather}
 Since $l_{j+1} \le r_{j+1} = l_j - 1 < l_j$ for $j \in [1, n)$, the
 sequence $\{l_j\}_{j=1}^n$ is strictly monotonically decreasing, and
 thus for sufficiently large $n$,
 $\SizeOf\left(\{c_i\}_{i=s}^{l_n-1}\right) <
 \frac{1}{\alpha}\frac{D}{P}$ holds.

 It can be shown that such $n$ is less than or equal to $\alpha$.
 Suppose that $n$ is greater than $\alpha$.  Then,
 \begin{equation}
  \SizeOf\left(\{c_i\}_{i=s}^{l_\alpha-1}\right) \ge \frac{1}{\alpha}\frac{D}{P}.  \label{eq:greedy:cold_nqry:supposed}
 \end{equation}
 \begin{align*}
  \therefore \SizeOf(b) &= \SizeOf\left(\{c_i\}_{i=s}^{l_\alpha-1}\right) + \SizeOf\left(\{c_i\}_{i=l_\alpha}^e\right) \\
  &\ge \frac{1}{\alpha}\frac{D}{P} + \sum_{j=1}^\alpha \SizeOf\left(\{c_i\}_{i=l_j}^{r_j}\right)  \quad (\because \text{\cref{eq:greedy:cold_nqry:supposed}}) \\
  &\ge \frac{\alpha+1}{\alpha}\frac{D}{P}.  \quad (\because \text{\cref{eq:greedy:cold_nqry:cut_size}})
 \end{align*}
 On the other hand, from \lineref{algo:hot-cold:base} in
 \cref{algo:hot-cold}, we have
 \begin{equation}
  \SizeOf(b) = \frac{D}{P},  \label{eq:greedy:cold_nqry:base_size}
 \end{equation}
 which contradicts the above.

 Therefore, either $n = \alpha$ or $n \le \alpha - 1$ holds, and either
 assumption allows us to show that $\NQrys(b \setminus h^*) \le \alpha
 \frac{Q}{P}$.  From \cref{eq:greedy:cold_nqry:cut_size},
 \begin{equation}
  \SizeOf\left(\{c_i\}_{i=l_n}^e\right) = \sum_{j=1}^n \SizeOf\left(\{c_i\}_{i=l_j}^{r_j}\right) \ge \frac{n}{\alpha} \frac{D}{P}.  \label{eq:greedy:cold_nqry:tail_size}
 \end{equation}
 From \cref{eq:greedy:cold_nqry:cut_nqry},
 \begin{equation}
  \NQrys\left(\{c_i\}_{i=l_n}^e \setminus h^*\right) = \sum_{j=1}^n \NQrys\left(\{c_i\}_{i=l_j}^{r_j} \setminus h^*\right) < n \frac{Q}{P}.  \label{eq:greedy:cold_nqry:tail_nqry}
 \end{equation}
 When $n = \alpha$.  Since
 \begin{align*}
  \SizeOf\left(\{c_i\}_{i=s}^{l_n-1}\right) &= \SizeOf(b) - \SizeOf\left(\{c_i\}_{i=l_n}^e\right) \\
  &\le \frac{D}{P} - \frac{n}{\alpha} \frac{D}{P}  \quad (\because \text{\cref{eq:greedy:cold_nqry:base_size,eq:greedy:cold_nqry:tail_size}}) \\
  &= 0  \quad (\because n = \alpha),
 \end{align*}
 $\{c_i\}_{i=s}^{l_n-1}$ is an empty sequence of data chunks, and so\break
 $\NQrys(\{c_i\}_{i=s}^{l_n-1}) = 0$.  Therefore,
 \begin{align*}
  &\NQrys(b \setminus h^*) \\
  &= \NQrys\left(\{c_i\}_{i=s}^{l_n-1} \setminus h^*\right) + \NQrys\left(\{c_i\}_{i=l_n}^e \setminus h^*\right) \\
  &< 0 + n \frac{Q}{P}  \quad (\because \text{\cref{eq:greedy:cold_nqry:tail_nqry}}) \\
  &= \alpha \frac{Q}{P}.  \quad (\because n = \alpha)
 \end{align*}

 When $n \le \alpha - 1$.  From \cref{thm:greedy:window_nqry}, we have
 \begin{equation}
  \NQrys\left(\{c_i\}_{i=s}^{l_n-1} \setminus h^*\right) < \frac{Q}{P},  \label{eq:greedy:cold_nqry:head_nqry}
 \end{equation}
 and therefore,
 \begin{align*}
  &\NQrys(b \setminus h^*) \\
  &= \NQrys\left(\{c_i\}_{i=s}^{l_n-1} \setminus h^*\right) + \NQrys\left(\{c_i\}_{i=l_n}^e \setminus h^*\right) \\
  &< \frac{Q}{P} + n \frac{Q}{P}  \quad (\because \text{\cref{eq:greedy:cold_nqry:tail_nqry,eq:greedy:cold_nqry:head_nqry}}) \\
  &\le \alpha \frac{Q}{P}.  \quad (\because n \le \alpha - 1)
 \end{align*}
\end{proof}

\begin{theorem}
 The number of hot ranges found in all base partitions is at most $P$
 in total.
\end{theorem}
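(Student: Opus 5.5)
The plan is a counting argument based on the lower bound in \cref{thm:greedy:hot_nqry}: every hot range $\{c_i\}_{i=l}^r$ selected in \lineref{algo:hot_select:hot} satisfies $\NQrys\left(\{c_i\}_{i=l}^r\right) \ge \frac{Q}{P}$, because it passed the test in \lineref{algo:hot_select:thres}. If the hot ranges are pairwise disjoint as sets of data chunks, their query counts add up to at most $Q$. Since each contributes at least $\frac{Q}{P}$, there can be at most $P$ of them.

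The steps, in order, are as follows.

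First, I show disjointness within one base partition. After a hot range $\{c_i\}_{i=l}^r$ is selected, \lineref{algo:hot_select:next_scan} sets $l \gets r+1$. The proof of \cref{thm:greedy:hot_nqry} already notes that $l$ is nondecreasing. Hence every later hot range starts at index at least $r+1$, so hot ranges from the same call to \FindHot occupy disjoint, increasing index intervals.

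Second, I show disjointness across base partitions. Each call to \FindHot only scans chunks of its own base partition $b$, and the base partitions produced in \lineref{algo:hot-cold:base} of \cref{algo:hot-cold} are disjoint intervals of $\{c_i\}_{i=1}^N$. So all hot ranges in $H$ are pairwise disjoint.

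Third, I sum the query counts. Let $k$ be the total number of hot ranges. Then
\[
 k\frac{Q}{P} \le \sum_{h' \in H} \NQrys(h') \le \NQrys\Bigl(\{c_i\}_{i=1}^N\Bigr) = Q,
\]
which gives $k \le P$.

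The main obstacle is justifying the middle inequality, namely that $\NQrys$ is additive, or at least superadditive in the right direction, over disjoint chunk sets. \NQrys counts queries hitting a set after range queries have been converted to point queries. Under that convention each converted query hits exactly one chunk, so $\NQrys$ is a sum over chunks. This additivity is already used implicitly in the paper, for example in the decompositions in the proof of \cref{thm:greedy:cold_nqry}. I will state this convention explicitly and rely on it. Without it, a single range query spanning two hot ranges could be counted twice, and the bound would fail. The degenerate case $Q = 0$ is trivial: the threshold test is then $\NQrys \ge 0$, so I should instead observe that the intended reading is that the paper assumes $Q>0$, and note this as an assumption.
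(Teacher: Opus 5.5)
Your proposal is correct and follows the paper's argument. Each hot range carries at least $\frac{Q}{P}$ queries by \cref{thm:greedy:hot_nqry}, so at most $P$ of them fit into $Q$ queries; you additionally make explicit the pairwise disjointness of hot ranges and the additivity of the query count (via the range-to-point conversion), both of which the paper's one-line proof uses without stating. One correction to your last remark: $Q=0$ is not trivial but a genuine exception, since the threshold becomes $0$ and every chunk is selected as its own hot range (giving $N$ of them, which can exceed $P$), so the hypothesis $Q>0$, which the paper tacitly assumes when dividing by $\frac{Q}{P}$, is really needed.
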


\begin{proof}
 From \cref{thm:greedy:hot_nqry}, the number of queries targeting each
 hot range is not less than $\frac{Q}{P}$.  Hence, since the total
 number of queries is $Q$, the number of hot ranges is at most
 $\frac{Q}{\frac{Q}{P}} = P$.
\end{proof}

\begin{theorem}
 The data load and query load on each DPU are less than
 $\left(\frac{1}{\alpha} + 1\right)\frac{D}{P} + M_d$ and $(\alpha + 1
 )\frac{Q}{P} + M_q$, respectively.
\end{theorem}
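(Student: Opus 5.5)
The approach is to decompose what each DPU holds according to the assignment loop of \cref{algo:hot-cold} and then add up the bounds already proved. By \lineref{algo:hot-cold:pairing:begin}--\ref{algo:hot-cold:pairing:end}, each DPU $d$ receives exactly one cold part $c = b_d \setminus \bigcup H$ coming from its own base partition $b_d$, plus at most one hot partition $h$ (or $\emptyset$). This works because the loop iterates once per DPU and removes one hot range per iteration. The preceding theorem guarantees $|H| \le P$, so every hot range is in fact assigned and none is left over. Since the chunks of $h$ and $c$ are disjoint, the load on $d$ is additive: $\SizeOf(T(d)) = \SizeOf(h) + \SizeOf(c)$ and $\NQrys(T(d)) = \NQrys(h) + \NQrys(c)$. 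I will treat $\NQrys$ as additive over disjoint chunk sets, as the earlier proofs already do.

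For the data bound, I will use $\SizeOf(c) \le \SizeOf(b_d) = \frac{D}{P}$ from \lineref{algo:hot-cold:base}, since $c \subseteq b_d$. I will combine it with \cref{thm:greedy:hot_size}, which gives $\SizeOf(h) < \frac{1}{\alpha}\frac{D}{P} + M_d$. When $h = \emptyset$ the hot term is $0$, which is trivially smaller. Summing yields the strict bound $\left(\frac{1}{\alpha}+1\right)\frac{D}{P} + M_d$.

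For the query bound, I will use \cref{thm:greedy:cold_nqry}, which gives $\NQrys(c) = \NQrys(b_d \setminus h^*_d) < \alpha\frac{Q}{P}$. Here I should check that $b_d \setminus \bigcup H$ coincides with $b_d \setminus h^*_d$: hot ranges extracted from other base partitions are disjoint from $b_d$, because \FindHot only selects subranges of its input. I will combine this with \cref{thm:greedy:hot_nqry}, which gives $\NQrys(h) \le \frac{Q}{P} + M_q$. Adding gives $(\alpha+1)\frac{Q}{P} + M_q$, and the inequality is strict because the cold term is strict.

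The main subtlety is bookkeeping rather than analysis. One point is the disjointness argument that lets $\bigcup H$ be replaced by the local $h^*_d$. The other is confirming that $T$ assigns every DPU exactly one cold part and at most one hot part, so that no DPU accumulates two hot partitions. Neither step needs the ascending-order greedy pairing; the order in which pairs are formed is irrelevant to these worst-case bounds.
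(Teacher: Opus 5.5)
Your proposal is correct and follows the paper's proof. Both split each DPU's assignment into the cold remainder of its own base partition, bounded by $\frac{D}{P}$ in data and, via \cref{thm:greedy:cold_nqry}, by $\alpha\frac{Q}{P}$ in queries, plus at most one hot partition bounded via \cref{thm:greedy:hot_size,thm:greedy:hot_nqry}, and then add the bounds; your extra checks that $b_d \setminus \bigcup H = b_d \setminus h^*_d$ and that $|H| \le P$ places every hot range are sound details the paper leaves implicit.
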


\begin{proof}
 From \cref{thm:greedy:hot_size}, the data load on each hot partition is
 less than $\frac{1}{\alpha}\frac{D}{P} + M_d$.  The data load on the
 remaining cold partitions in each DPU is no more than that of a base
 partition, which is $\frac{D}{P}$ according to
 \lineref{algo:hot-cold:base} of \cref{algo:hot-cold}.  Because
 \cref{algo:hot-cold} assigns at most one hot partition to a single DPU,
 the data load on each DPU is less than
 $\left(\frac{1}{\alpha}\frac{D}{P} + M_d\right) + \frac{D}{P} =
 \left(\frac{1}{\alpha} + 1\right)\frac{D}{P} + M_d$.

 From \cref{thm:greedy:hot_nqry}, the query load on each hot partition
 is less than $\frac{Q}{P} + M_q$.  From \cref{thm:greedy:cold_nqry},
 the query load on the remaining cold partitions in each DPU is less than
 $\alpha\frac{Q}{P}$.  Therefore, the query load on each DPU is less
 than $\left(\frac{Q}{P} + M_q\right) + \alpha\frac{Q}{P} = (\alpha +
 1)\frac{Q}{P} + M_q$.
\end{proof}

\section{Proof of the Properties of the Double-Scan Hot Selection Algorithm}
\label{sec:proof:double}

We prove the properties satisfied by the result of
\cref{algo:double-scan} below.  Unless otherwise specified, line numbers
are used to refer to the processing and states within
\cref{algo:double-scan}.

\begin{theorem}
 \label{thm:double:hot_consecutive}
 Each hot partition consists of a sequence of consecutive data chunks in
 the key space.
\end{theorem}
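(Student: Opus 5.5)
Hot partitions produced by \cref{algo:double-scan} come from two sources, and I would treat them separately. Those returned by $\FindHotGreedy$ in \lineref{algo:double-scan:greedy} have the form $\{c_i\}_{i=l}^r$ by \lineref{algo:hot_select:hot} of \cref{algo:hot_select}, so they are consecutive. The real content concerns the ranges added in \lineref{algo:double-scan:append}. Each has the form $\{c_i\}_{i=l}^{r}\setminus h^*$ with $l_0\le l\le r\le r_0$, where $h^*$ is the union of the greedy hot ranges. The goal is to show that this set difference never splits into two or more runs.

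I would argue by contradiction, reusing the argument from \cref{thm:greedy:window_nqry}. Suppose $\{c_i\}_{i=l}^{r}\setminus h^*$ consists of two or more runs. Then a maximal sequence of greedy hot ranges lies strictly inside, within $\{c_i\}_{i=l+1}^{r-1}$. This sequence does not start at $s$, because $l\ge l_0\ge s$ and its start exceeds $l$. By \cref{thm:greedy:hot_lump}, it therefore has size at least $\frac{1}{\alpha}\frac{D}{P}$, which gives $\SizeOf(\{c_i\}_{i=l+1}^{r})\ge\frac{1}{\alpha}\frac{D}{P}$.

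Next I would use how $l$ is chosen in \lineref{algo:double-scan:divide}--\ref{algo:double-scan:small_window}. There are two cases.
\begin{itemize}
\item If the $\max$ set in \lineref{algo:double-scan:divide} is nonempty, $l$ is the largest index in $(l_0,r]$ with $\SizeOf(\{c_i\}_{i=l}^{r})\ge\frac{1}{\alpha}\frac{D}{P}$. Since $l+1\le r$ and $l+1>l_0$, the index $l+1$ would also qualify, contradicting maximality.
\item If the set is empty, then $l=l_0$ and every suffix starting in $(l_0,r]$ has size below $\frac{1}{\alpha}\frac{D}{P}$. This includes $\{c_i\}_{i=l_0+1}^{r}$, contradicting the bound above.
\end{itemize}
I should also check that $\max\{l_0,l\}$ with $l=-\infty$ is handled correctly, and that the empty-set removal in \lineref{algo:double-scan:append} does no harm.

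The main obstacle I expect is index bookkeeping: the interior hot block must lie in $(l,r)$, not touch the endpoints, and make $l+1$ lie in the feasible set $(l_0,r]$. This mirrors the proof of \cref{thm:greedy:window_nqry}, so I would follow that structure closely. Finally, since the while loop sets $r\gets l-1$, successive pieces are disjoint, so each appended range is individually consecutive, which is all the statement requires.
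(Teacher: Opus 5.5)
Your proof is correct and is essentially the paper's argument. The paper writes the window at \lineref{algo:double-scan:append} as $l=\max\{l_0,a(r)\}$ and cites the fact, established inside the proof of \cref{thm:greedy:window_nqry}, that $\{c_i\}_{i=a(z)}^{z}\setminus h^*$ is empty or consecutive, so its restriction to $[l,r]$ is as well; you instead re-run that same argument (an interior maximal hot block has size at least $\frac{1}{\alpha}\frac{D}{P}$ by \cref{thm:greedy:hot_lump}, contradicting the maximality of $l$) directly on the second-scan window, which spares you the translation between $l$ and $a(r)$.
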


\begin{proof}
 It is selfevident for hot ranges selected in the first scan in
 \lineref{algo:double-scan:greedy}.  Let $b \coloneqq \{c_i\}_{i=s}^e$ be a base partition, and consider a hot range in $b$ added to $h$
 in \lineref{algo:double-scan:append}.  When we reach
 \lineref{algo:double-scan:append}, $l$ can be expressed using the function $a$
 defined in \cref{eq:greedy:window_nqry:L,eq:greedy:window_nqry:a} as
 \begin{equation}
  l = \max\{l_0, a(r)\}.  \label{eq:double:hot_consecutive:l}
 \end{equation}
 In the proof of \cref{thm:greedy:window_nqry}, it has been shown that,
 for all $z \in [s, e]$,
 $\{c_i\}_{i=a(z)}^z \setminus h^*$ is either $\emptyset$ or a sequence
 of consecutive data chunks.  Therefore, by
 \cref{eq:double:hot_consecutive:l}, $\{c_i\}_{i=l}^r \setminus h^*$
 is also either $\emptyset$ or a sequence of consecutive data chunks.
 If it is $\emptyset$ then it will not be added to $h$ in
 \lineref{eq:double:hot_consecutive:l}, so the set of hot ranges $h$
 consists only of sequences of consecutive data chunks.
\end{proof}

\begin{theorem}
 \label{thm:double:hot_size}
 The data load on each hot partition is less than
 $\frac{1}{\alpha}\frac{D}{P} + M_d$.
\end{theorem}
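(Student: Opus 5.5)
The statement concerns \cref{algo:double-scan}, whose hot partitions come from two sources, so the proof will split accordingly. Hot ranges produced by the first scan in \lineref{algo:double-scan:greedy} are exactly the output of \cref{algo:hot_select} on $b$, and for those \cref{thm:greedy:hot_size} already gives $\SizeOf < \frac{1}{\alpha}\frac{D}{P} + M_d$. The remaining work concerns a range added in \lineref{algo:double-scan:append}. Such a range has the form $\{c_i\}_{i=l}^r \setminus h^*$, with $l$ computed in \lineref{algo:double-scan:divide}--\ref{algo:double-scan:small_window}. By \cref{thm:double:hot_consecutive} it is a contiguous run of chunks, and it is a subset of $\{c_i\}_{i=l}^r$. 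It therefore suffices to bound $\SizeOf(\{c_i\}_{i=l}^r)$, using monotonicity of $\SizeOf$ under inclusion.

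As in the proof of \cref{thm:double:hot_consecutive}, I will write $l = \max\{l_0, a(r)\}$, with $a$ defined in \cref{eq:greedy:window_nqry:L,eq:greedy:window_nqry:a}. I will then argue by contradiction, mirroring the proof of \cref{thm:greedy:hot_size}. Suppose $\SizeOf(\{c_i\}_{i=l}^r) \ge \frac{1}{\alpha}\frac{D}{P} + M_d$. Since $\SizeOf(\{c_l\}) \le M_d$, this forces $l < r$ and $\SizeOf(\{c_i\}_{i=l+1}^r) \ge \frac{1}{\alpha}\frac{D}{P}$. Hence $l+1 \in (l_0, r]$ lies in the set maximized in \lineref{algo:double-scan:divide}, because $l \ge l_0$. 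The value computed in that line is then at least $l+1$. After \lineref{algo:double-scan:small_window}, $l$ is at least that value, which contradicts the value of $l$ we started with.

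The main obstacle is the boundary case, where the maximum in \lineref{algo:double-scan:divide} is taken over $(l_0, r]$ rather than $[s, r]$ and $l$ may be clamped to $l_0$. I will handle this by noting that the contradiction argument only needs $l+1 > l_0$, which holds whenever $l \ge l_0$. So whether $l$ came from the clamp or from the maximum, any oversized range yields a strictly larger admissible index.

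I also need to confirm that removing $h^*$ cannot enlarge the range, which is immediate since the range only shrinks. Together with the first-scan case, this establishes the bound for every hot partition in $h$.
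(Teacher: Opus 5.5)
Your proposal is correct and matches the paper's proof: first-scan ranges are handled by the greedy-scan size theorem, and for a range added in the second scan you bound $\SizeOf(\{c_i\}_{i=l}^r)$ by the same contradiction used for the greedy algorithm, then finish by monotonicity of $\SizeOf$ when $h^*$ is removed. The paper only calls this step ``analogous''; your explicit treatment of the clamp to $l_0$ (using $l+1 > l_0$) supplies exactly that detail, and the appeal to contiguity of the range is harmless but not needed for the size bound.
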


\begin{proof}
 For the hot ranges selected in the first scan in
 \lineref{algo:double-scan:greedy}, refer to \cref{thm:greedy:hot_size}.
 Below, we consider a hot range selected in the second scan.  By a discussion
 analogous to the proof of \cref{thm:greedy:hot_size}, it can be derived
 that
 \[
  \SizeOf\left( \{c_i\}_{i=l}^r \right) < \frac{1}{\alpha}\frac{D}{P} + M_d
 \]
 holds when we reach \lineref{algo:double-scan:append}.  Therefore,
 \[
  \SizeOf\left( \{c_i\}_{i=l}^r \setminus h^* \right) \le \SizeOf\left( \{c_i\}_{i=l}^r \right) < \frac{1}{\alpha}\frac{D}{P} + M_d.
 \]
 Consequently, the data load of the elements added to $h$ in
 \lineref{algo:double-scan:append} is less than
 $\frac{1}{\alpha}\frac{D}{P} + M_d$.
\end{proof}

\begin{theorem}
 \label{thm:double:hot_nqry}
 The query load on each hot partition is less than
 $\frac{Q}{P} + M_q$.
\end{theorem}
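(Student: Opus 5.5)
For hot ranges found by the first scan (\lineref{algo:double-scan:greedy}), the statement follows directly from \cref{thm:greedy:hot_nqry}. Those ranges are exactly the output of \cref{algo:hot_select}, and the appended ranges do not modify them. So the plan only has to deal with ranges added in the second scan at \lineref{algo:double-scan:append}. For these, I would prove the stronger bound $\NQrys < \frac{Q}{P}$, which \cref{thm:greedy:window_nqry} delivers almost directly.

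Fix a base partition $b = \{c_i\}_{i=s}^e$ and write $h^* = \bigcup \FindHotGreedy(b)$. Note that $h^*$ is computed once, before the second scan, and is never updated afterwards. Consider an iteration of the while loop that reaches \lineref{algo:double-scan:append} with current values $l$ and $r$. The added element is $\{c_i\}_{i=l}^r \setminus h^*$.

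As in the proof of \cref{thm:double:hot_consecutive}, lines~\ref{algo:double-scan:divide}--\ref{algo:double-scan:small_window} give $l = \max\{l_0, a(r)\}$, where $a$ is the function defined in \cref{eq:greedy:window_nqry:L,eq:greedy:window_nqry:a}. The window threshold in \lineref{algo:double-scan:divide} is $\frac{1}{\alpha}\frac{D}{P}$, the same threshold used by the greedy scan and in \cref{thm:greedy:window_nqry}; only the first loop of the second scan uses the $\beta$-scaled width. I would check this identity briefly, including the conventions at the boundary:
\begin{itemize}
\item When the set in \lineref{algo:double-scan:divide} is empty, $\max$ gives $-\infty$, and the next line produces $l_0$. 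Since $l_0 \ge s$, this agrees with $\max\{l_0, a(r)\}$ whether $a(r) = s$ or $a(r) \le l_0$.
\item When the set is nonempty, its maximum coincides with $\max L(r)$ whenever that exceeds $l_0$.
\end{itemize}

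Since $l \ge a(r)$, we have $\{c_i\}_{i=l}^r \setminus h^* \subseteq \{c_i\}_{i=a(r)}^r \setminus h^*$. Also $s \le l_0 \le r \le e$, so $r$ lies in the range for which \cref{thm:greedy:window_nqry} is stated. $\NQrys$ is monotone under inclusion, because it counts the queries hitting a set of chunks. Hence
\[
\NQrys\left(\{c_i\}_{i=l}^r \setminus h^*\right) \le \NQrys\left(\{c_i\}_{i=a(r)}^r \setminus h^*\right) < \frac{Q}{P} < \frac{Q}{P} + M_q .
\]
This covers every range added in the second scan, and with the first-scan case it completes the proof.

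The main obstacle is the bookkeeping step that identifies $l$ with $\max\{l_0, a(r)\}$. This requires reconciling the restricted search interval $(l_0, r]$ with $[s, r]$ in $L(r)$, and the $-\infty$ convention with the default value $a(r) = s$. It also requires confirming that the $h^*$ used in \cref{thm:greedy:window_nqry} is exactly the first-scan union. That holds because the lemma is stated for $\FindHot$ of \cref{algo:hot_select}, which is $\FindHotGreedy$ here. Once this identification is in place, the bound is a one-line monotonicity argument.
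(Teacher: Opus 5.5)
Your proposal is correct and matches the paper's proof: first-scan ranges are handled by \cref{thm:greedy:hot_nqry}, and each second-scan range is bounded using $l = \max\{l_0, a(r)\}$, monotonicity under inclusion, and \cref{thm:greedy:window_nqry}, which give a count strictly below $\frac{Q}{P}$. Your explicit check of the identity $l = \max\{l_0, a(r)\}$, including the $-\infty$ convention, just spells out what the paper asserts as \cref{eq:double:hot_consecutive:l} in the proof of \cref{thm:double:hot_consecutive}.
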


\begin{proof}
 For the hot ranges selected in the first scan in
 \lineref{algo:double-scan:greedy}, refer to \cref{thm:greedy:hot_nqry}.
 For the ranges selected in the second scan, the proposition holds
 because the following inequality holds when we reach \lineref{algo:double-scan:append}.
 \begin{align*}
  \NQrys\left(\{c_i\}_{i=l}^r \setminus h^*\right) &\le \NQrys\left(\{c_i\}_{i=a(r)}^r \setminus h^*\right)  \quad (\because \text{\cref{eq:double:hot_consecutive:l}}) \\
  &< \frac{Q}{P}  \quad (\because \text{\cref{thm:greedy:window_nqry}}) \\
  &< \frac{Q}{P} + M_q.  \quad (\because M_q \ge 0)
 \end{align*}
\end{proof}

\begin{definition}
 \label{thm:double:L}
 Let $r, t_l$ be indices of data chunks such that $t_l \le r$.
 Let $t_s$ be a data size.
 Define the function $L(r, t_l, t_s)$ that returns an index of
 data chunk as follows: The sequence of data chunks
 $\{c_i\}_{i=L(r, t_l, t_s)}^r$ is the shortest consecutive sequence that ends at
 $c_r$, begins with $c_{t_l}$ or later, and contains data of size $t_s$
 or greater.  If no such sequence exists, return $t_l$.  $L$ is
 formalized as
 \[
  L(r, t_l, t_s) \coloneqq \begin{cases}
   t_l & (D(r, t_l, t_s) = \emptyset) \\
   \max D(r, t_l, t_s) & (\text{otherwise})
  \end{cases},
 \]
 where $D(r, t_l, t_s)$ is
 \[
  D(r, t_l, t_s) \coloneqq \left\{ l \in [t_l, s] \relmiddle| \SizeOf\left( \{c_i\}_{i=l}^r \right) \ge t_s \right\}.
 \]
\end{definition}

Using \cref{thm:double:L}, we can express that in \lineref{algo:double-scan:window}, the value of $l$ is updated to
\begin{equation}
 \label{eq:double:l_of_window}
 l = L\left( r, s, \frac{\beta}{\alpha}\frac{D}{P} \right).
\end{equation}
Also, the update of $l$ in \lineref{algo:double-scan:small_window} can
be expressed as
\begin{equation}
 \label{eq:double:l_of_hot}
 l = L\left( r, l_0, \frac{1}{\alpha}\frac{D}{P} \right).
\end{equation}

\begin{theorem}
 \label{thm:double:n_hot}
 The number of hot ranges found in all base partitions is at most $P$ in
 total.
\end{theorem}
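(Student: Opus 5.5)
We need to bound the total number of hot ranges produced by the double-scan algorithm across all base partitions by $P$. The plan is a per-partition counting argument. For each base partition $b$, let $g_b$ be the number of hot ranges found by the first scan, and let $\beta_b$ be the value computed in \lineref{algo:double-scan:beta}. We aim to show that the total number of hot ranges $|h|$ returned for $b$ is at most $g_b + \beta_b$. We then bound the sum of $g_b + \beta_b$ over all $b$ by $P$ using query counts.

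For the global summation we use \cref{thm:greedy:hot_nqry}. Each first-scan hot range receives at least $\frac{Q}{P}$ queries, so $g_b \frac{Q}{P} \le \NQrys(h^*_b)$. By the definition of $\beta_b$ as a floor, $\beta_b \frac{Q}{P} \le \NQrys(b \setminus h^*_b)$. Adding these gives $(g_b+\beta_b)\frac{Q}{P} \le \NQrys(b)$. The base partitions are disjoint and every query is counted in the reference workload (queries are converted to point ones, as assumed for \NQrys). Summing over $b$ therefore gives $\sum_b (g_b+\beta_b)\frac{Q}{P} \le Q$, hence $\sum_b (g_b+\beta_b) \le P$.

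The remaining step is to show that the second scan adds at most $\beta_b$ ranges; we expect this to be the main obstacle. If $\beta_b = 0$, the early return in \lineref{algo:double-scan:early} gives this immediately. Otherwise, we use \cref{eq:double:l_of_window} to describe the selected window $\{c_i\}_{i=l_0}^{r_0}$. Here $l_0 = L(r_0, s, \frac{\beta}{\alpha}\frac{D}{P})$, so either $l_0 = s$ and the window has size below $\frac{\beta}{\alpha}\frac{D}{P}$, or $l_0$ is maximal, so that $\SizeOf(\{c_i\}_{i=l_0+1}^{r_0}) < \frac{\beta}{\alpha}\frac{D}{P}$.

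The while loop peels pieces from the right end using \cref{eq:double:l_of_hot}. Each piece except the last (the one that reaches $l_0$) is a minimal suffix of size at least $\frac{1}{\alpha}\frac{D}{P}$. If the loop produced $k \ge \beta+1$ pieces, then the first $\beta$ pieces would lie inside $\{c_i\}_{i=l_0+1}^{r_0}$, since at least one more piece still contains $l_0$. Their total size would then be at least $\frac{\beta}{\alpha}\frac{D}{P}$. This contradicts the size bound on the window from the previous paragraph, and in the case $l_0 = s$ it contradicts the total window size. Hence $k \le \beta$.

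Finally, \lineref{algo:double-scan:append} removes empty differences, so at most $k \le \beta$ ranges are appended. We must also check that removing $h^*$ never splits one piece into several ranges. This is guaranteed by the argument in \cref{thm:double:hot_consecutive}: each $\{c_i\}_{i=l}^r \setminus h^*$ is empty or a single consecutive sequence, so each loop iteration adds at most one range. Combining these gives $|h| \le g_b + \beta_b$, which completes the count.
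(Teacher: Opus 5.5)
Your proposal is correct and takes essentially the same route as the paper. Both bound the first-scan ranges by $|h_1|\frac{Q}{P} \le \NQrys\left(\bigcup h_1\right)$ and show that the second scan appends at most $\beta$ ranges, then sum over base partitions. For the middle step, both observe that more than $\beta$ loop iterations would put $\beta$ pieces, each of size at least $\frac{1}{\alpha}\frac{D}{P}$, strictly to the right of $l_0 = L\left(r_0, s, \frac{\beta}{\alpha}\frac{D}{P}\right)$, contradicting its maximality. Your inequality $\SizeOf\left(\{c_i\}_{i=l_0+1}^{r_0}\right) < \frac{\beta}{\alpha}\frac{D}{P}$ is just a restatement of the paper's $l_0 < l_\beta \le L\left(r_0, s, \frac{\beta}{\alpha}\frac{D}{P}\right)$.
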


\begin{proof}
 Let $b \coloneqq \{c_i\}_{i=s}^e$ be a base partition and consider the
 number of hot ranges found in it.  We define $h_1$ and $h_2$ as
 \begin{align*}
  h_1 &\coloneqq \FindHotGreedy(b), \\
  h_2 &\coloneqq \FindHot(b) \setminus h_1.
 \end{align*}
 Then the number of hot ranges in
 $b$ is $n_h \coloneqq |h_1| + |h_2|$.  From \cref{thm:greedy:hot_nqry},
 \begin{equation}
  \NQrys\left( \bigcup h_1 \right) \ge |h_1| \frac{Q}{P}.  \label{eq:double:n_hot:greedy_nqry}
 \end{equation}

 From here, we show that the following inequality holds:
 \begin{equation}
  \NQrys\left( b \setminus \bigcup h_1 \right) \ge |h_2| \frac{Q}{P}.  \label{eq:double:n_hot:other_nqry}
 \end{equation}
 When we reach \lineref{algo:double-scan:beta}, $\beta$ is set to
 \begin{equation}
  \beta = \left\lfloor \frac{\NQrys\left( b \setminus \bigcup h_1 \right)}{Q/P} \right\rfloor.  \label{eq:double:n_hot:beta}
 \end{equation}
 Therefore, 
 \begin{gather}
  \beta \le \frac{\NQrys\left( b \setminus \bigcup h_1 \right)}{Q/P} \nonumber \\
  \therefore \NQrys\left( b \setminus \bigcup h_1 \right) \ge \beta \frac{Q}{P}.  \label{eq:double:n_hot:other_nqry_beta}
 \end{gather}
 If $\beta = 0$, we will exit \cref{algo:double-scan} in
 \lineref{algo:double-scan:early}, so \cref{eq:double:n_hot:other_nqry}
 holds because $|h_2| = 0$.  For the case where $\beta \ge 1$, we show
 $|h_2| \le \beta$ to produce \cref{eq:double:n_hot:other_nqry} from
 \cref{eq:double:n_hot:other_nqry_beta}.

 Suppose
 \begin{equation}
  |h_2| > \beta.  \label{eq:double:n_hot:too_many}
 \end{equation}
 This means that the addition of a hot range to $h$ was performed $|h_2| (> \beta)$ times in \lineref{algo:double-scan:append}.
 Therefore, if we denote the number of iterations of the loop
 \footnote{$I_{ter}$ can be greater than $|h_2|$. This is because nothing is added to $h$ in the iterations where $\{c_i\}_{i=1}^r \subset h^*$ occurs in \lineref{algo:double-scan:append}.}
 starting at \lineref{algo:double-scan:while} as $I_{ter}$, then
 \begin{equation}
  \label{eq:double:n_hot:iter_gt_beta}
  I_{ter} \ge |h_2| > \beta \ge 1.
 \end{equation}
 Here, we denote the values of $l$ and $r$ when we reach \lineref{algo:double-scan:append} during the $k$-th iteration ($k \in [1, I_{ter}]$) as $l_k$ and $r_k$, respectively.

 From \cref{eq:double:l_of_hot},
 \begin{equation}
  \label{eq:double:n_hot:hot_r_to_l}
  {}^\forall k \in [1, I_{iter}], l_k = L\left( r_k, l_0, \frac{1}{\alpha}\frac{D}{P} \right).
 \end{equation}
 Because $r$ is reinitialized to $r_0$ immediately before the loop begins (\lineref{algo:double-scan:init_r}),
 \begin{equation}
  \label{eq:double:n_hot:r_1}
  r_1 = r_0.
 \end{equation}
 Based on the update of $r$ in \lineref{algo:double-scan:next_r},
 \begin{equation}
  \label{eq:double:n_hot:hot_l_to_r}
  {}^\forall k \in [2, I_{ter}], r_k = l_{k-1} - 1.
 \end{equation}
 From \cref{thm:double:L} and \cref{eq:double:n_hot:hot_r_to_l}, we have
 \begin{gather}
  {}^\forall k \in [1, I_{ter}], l_k \le r_k, \text{and} \label{eq:double:n_hot:l_le_r} \\
  {}^\forall k \in [1, I_{ter}], l_k \ge l_0. \label{eq:double:lbound_lk}
 \end{gather}
 From \cref{eq:double:lbound_lk}, in \cref{algo:double-scan:next_r}, $r$ is updated to a value of $l_0 - 1$ or greater.
 Consequently, if the condition in \cref{algo:double-scan:while} immediately afterward becomes false, $r$ can only be $l_0 - 1$.
 Thus, the value of $l$ in the final iteration is $l_0$, meaning
 \begin{equation}
  \label{eq:double:n_hot:l_Iter}
  l_{I_{ter}} = l_0.
 \end{equation}
 Combining \cref{eq:double:n_hot:r_1,eq:double:n_hot:hot_l_to_r,eq:double:n_hot:l_le_r,eq:double:n_hot:l_Iter}, we have
 \begin{equation}
  \label{eq:double:n_hot:order_of_l_r}
  l_0 = l_{I_{ter}} \le r_{I_{ter}} < l_{I_{ter}-1} \le r_{I_{ter}-1} < \dots < l_1 \le r_1 = r_0
 \end{equation}

 Consider the data size of $\{c_i\}_{i=l_k}^{r_0}$ for $k \in [1, I_{ter} - 1]$.
 For all $k$ within this range, \cref{eq:double:n_hot:order_of_l_r} implies that $l_k > l_0$.
 So, from \cref{eq:double:n_hot:hot_r_to_l},
 \begin{equation}
  \label{eq:double:n_hot:size_per_iter}
  {}^\forall k \in [1, I_{ter} - 1], \SizeOf\left( \{c_i\}_{i=l_k}^{r_k} \right) \ge \frac{1}{\alpha}\frac{D}{P}
 \end{equation}
 holds. Then, for all $k \in [1, I_{ter} - 1]$, we have
 \begin{align*}
  \SizeOf\left( \{c_i\}_{i=l_k}^{r_0} \right)
  &= \sum_{j=1}^k \SizeOf\left( \{c_i\}_{i=l_j}^{r_j} \right) \quad (\because \text{\cref{eq:double:n_hot:order_of_l_r}}) \\
  &\ge \sum_{j=1}^k \frac{1}{\alpha}\frac{D}{P} \quad (\because \text{\cref{eq:double:n_hot:size_per_iter}})\\
  &= \frac{k}{\alpha}\frac{D}{P}.
 \end{align*}
 Therefore, from \cref{thm:double:L},
 \[
  {}^\forall k \in [1, I_{ter} - 1], l_k \le L\left( r_0, s, \frac{k}{\alpha}\frac{D}{P} \right).
 \]
 Because we have $\beta \in [1, I_{ter} - 1]$ from \cref{eq:double:n_hot:iter_gt_beta}, this implies
 \begin{align}
  l_\beta &\le L\left( r_0, s, \frac{\beta}{\alpha}\frac{D}{P} \right) \nonumber \\
  \therefore l_0 = l_{I_{iter}} < l_\beta &\le L\left( r_0, s, \frac{\beta}{\alpha}\frac{D}{P} \right). \quad (\because \text{\cref{eq:double:n_hot:iter_gt_beta,eq:double:n_hot:order_of_l_r}})
  \label{eq:double:n_hot:l0_ubound}
 \end{align}
 However, because \cref{eq:double:l_of_window} holds whenever we reach \lineref{algo:double-scan:update_max}, we have
 \[
  l_0 = L\left( r_0, s, \frac{\beta}{\alpha}\frac{D}{P} \right),
 \]
 which contradicts \cref{eq:double:n_hot:l0_ubound}.
 Therefore,
 \begin{equation}
  \label{eq:double:n_hot:n_warm}
  |h_2| \le \beta.
 \end{equation}

 From \cref{eq:double:n_hot:n_warm,eq:double:n_hot:other_nqry_beta},
 \cref{eq:double:n_hot:other_nqry} holds.  Then, from
 \cref{eq:double:n_hot:greedy_nqry,eq:double:n_hot:other_nqry},
 \begin{align*}
  \NQrys(b) &= \NQrys\left( \bigcup h_1 \right) + \NQrys\left( b \setminus \bigcup h_1 \right) \\
  &\ge |h_1| \frac{Q}{P} + |h_2| \frac{Q}{P} \\
  &= n_h \frac{Q}{P}.
 \end{align*}
 Because this holds for all base partitions, if we denote the total
 number of hot ranges as $n$, then
 \[
  Q \ge n \frac{Q}{P},
 \]
 proving that $n \le P$.
\end{proof}

\begin{theorem}
 \label{thm:double:cold_nqry}
 The query load on the ranges not selected as hot ranges in each base
 partition is at most $\frac{Q}{P} \max\left\{\frac{\alpha + 1}{3}, 1\right\}$.
\end{theorem}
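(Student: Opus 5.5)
The plan is to fix a base partition $b \coloneqq \{c_i\}_{i=s}^e$ and write $h_1 \coloneqq \FindHotGreedy(b)$, $h^* \coloneqq \bigcup h_1$, $q \coloneqq \NQrys(b \setminus h^*)$ and $x \coloneqq q/(Q/P)$. Then $\beta = \lfloor x \rfloor$ by \lineref{algo:double-scan:beta}. If $\beta = 0$ we return at \lineref{algo:double-scan:early}, and the cold load is $q < \frac{Q}{P}$, which already lies below the claimed bound. So assume $\beta \ge 1$. The first step is structural. The while loop starting at \lineref{algo:double-scan:while} walks $r$ down from $r_0$ to $l_0 - 1$, and by \cref{eq:double:n_hot:order_of_l_r} the pieces $\{c_i\}_{i=l_k}^{r_k}$ tile the window $W \coloneqq \{c_i\}_{i=l_0}^{r_0}$ exactly. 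Every chunk of $W \setminus h^*$ therefore becomes hot, and the final cold load is
\[
 q - \NQrys(W \setminus h^*) .
\]
Here $W$ maximizes $\NQrys(\,\cdot\, \setminus h^*)$ over all windows of the form $\{c_i\}_{i=L(r,s,\frac{\beta}{\alpha}\frac{D}{P})}^{r}$; this follows from \lineref{algo:double-scan:cmp_nqry} and \cref{eq:double:l_of_window}.

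I would then derive two independent upper bounds on the cold load and combine them.

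\emph{Bound (A), by averaging.} I would truncate $b$ from its end by $\beta$-windows, exactly as in the proof of \cref{thm:greedy:cold_nqry} but with threshold $\frac{\beta}{\alpha}\frac{D}{P}$. Each such window has size at least $\frac{\beta}{\alpha}\frac{D}{P}$, and $\SizeOf(b) = \frac{D}{P}$, so $b$ is covered by at most $m \coloneqq \lceil \alpha/\beta \rceil$ pieces; if a leftover head remains, it is contained in a window of the same form. Since $W$ is the maximizer,
\[
 \NQrys(W \setminus h^*) \ge q/m ,
\]
and hence the cold load is at most $x(1 - 1/m)\frac{Q}{P} < (\beta+1)(1-1/m)\frac{Q}{P}$. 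This bound is small when $\beta$ is small.

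\emph{Bound (B), by small windows.} Outside $W$, the remaining chunks of $b$ have total size at most $\frac{D}{P} - \frac{\beta}{\alpha}\frac{D}{P}$. This holds except for the slack from the leftmost chunk, which I would absorb via the fact that $W$ is the shortest such window. Truncating this complement into $\frac{1}{\alpha}\frac{D}{P}$-windows as in \cref{thm:greedy:cold_nqry} gives at most about $\alpha - \beta$ windows, plus possibly one head window. By \cref{thm:greedy:window_nqry}, each such window carries fewer than $\frac{Q}{P}$ queries outside $h^*$. So the cold load is less than roughly $(\alpha - \beta + 1)\frac{Q}{P}$, which is small when $\beta$ is large. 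Part of $b \setminus W$ may lie on both sides of $W$, so I would apply the truncation separately to the left and right remainders and count carefully.

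The final step is to take the minimum of (A) and (B) and maximize it over integer $\beta \ge 1$. Bound (A) increases in $\beta$ and bound (B) decreases in $\beta$, so the worst case occurs at their crossing, near $\beta \approx (\alpha-1)/2$. Evaluating there should give $\frac{\alpha+1}{3}\frac{Q}{P}$. For $\alpha \le 2$ the relevant range degenerates and the bound $\frac{Q}{P}$ takes over, which explains the $\max\{\cdot, 1\}$.

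I expect the main obstacle to be getting the constants exactly right in this last optimization. The ceiling $\lceil\alpha/\beta\rceil$ in (A) and the extra boundary window(s) in (B) each threaten to add one unit of $\frac{Q}{P}$. I would first try to sharpen (B) by showing that $W$, together with the truncated pieces on each side, forms at most $\alpha - \beta$ full $\frac1\alpha$-windows plus a single partial one. I would also use that $x < \beta + 1$ holds strictly. If the crossing still does not land on $\frac{\alpha+1}{3}$, I would replace (A) with a sharper averaging bound that uses $\frac1\alpha$-windows inside $W$, each weighted by \cref{thm:greedy:window_nqry}.
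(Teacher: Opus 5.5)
Your architecture matches the paper's. It has the trivial case $\beta=0$, an averaging bound over at most $\lceil\alpha/\beta\rceil$ candidate windows (the paper's argument for $1\le\beta<\frac{2\alpha-1}{3}$), and a tiling bound with $\frac1\alpha\frac DP$-windows via \cref{thm:greedy:window_nqry} (its argument for $\beta\ge\frac{2\alpha-1}{3}$).

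\textbf{The gap is in your bound (B).} Tiling the two-sided complement $b\setminus W$ can leave a partial window on each side of $W$. So you only get a cold load below $(\alpha-\beta+1)\frac QP$, and this extra unit is fatal. For $\alpha=10$, $\beta=7$ you have $m=\lceil 10/7\rceil=2$, so (A) gives less than $8\cdot\frac12\cdot\frac QP=4\frac QP$ and (B) gives less than $4\frac QP$, whereas the target is $\frac{11}{3}\frac QP$.

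Neither fix you propose closes this. The two partial pieces sit on opposite sides of $W$, so no covering of $b\setminus W$ by windows of the form $\{c_i\}_{i=a(z)}^{z}$ avoids the extra piece. Using $\frac1\alpha$-windows inside $W$ only bounds $\NQrys(W\setminus h^*)$ from above, which is the wrong direction.

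\textbf{The missing idea} is that you never need to bound $b\setminus W$. By maximality (\lineref{algo:double-scan:cmp_nqry}), $\NQrys(W\setminus h^*)\ge\NQrys(W_e\setminus h^*)$ for the specific candidate $W_e\coloneqq\{c_i\}_{i=l_e}^{e}$ with $l_e\coloneqq L(e,s,\frac\beta\alpha\frac DP)$. Hence the cold load is at most $\NQrys(\{c_i\}_{i=s}^{l_e-1}\setminus h^*)$, which is zero if $l_e=s$. That is a single prefix of size at most $(\alpha-\beta)\frac1\alpha\frac DP$; here $\beta<\alpha$ by \cref{thm:greedy:cold_nqry}, a fact you should state. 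Truncating the prefix from its right end yields at most $\alpha-\beta$ pieces, leftover head included, so the cold load is below $(\alpha-\beta)\frac QP$. This also removes your worry about the size of $W$, which can even be an early prefix window smaller than $\frac\beta\alpha\frac DP$.

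\textbf{Your final optimization is also misdescribed.} Bound (A), $(\beta+1)(1-\frac1m)$ with $m=\lceil\alpha/\beta\rceil$, is not increasing in $\beta$: it drops each time $m$ decreases. Its crossing with the corrected (B) is at $\beta=\frac{2\alpha-1}{3}$, where $m=2$, not near $\frac{\alpha-1}{2}$. At $\beta=\frac{\alpha-1}{2}$ (with $\alpha$ odd), (A) alone gives exactly $\frac{\alpha+1}{3}$ with $m=3$, while (B) there is $\frac{\alpha+1}{2}$.

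So split into three ranges, as the paper does:
\begin{itemize}
\item $\beta\ge\frac{2\alpha-1}{3}$: use (B).
\item $\frac\alpha2\le\beta<\frac{2\alpha-1}{3}$: use (A) with $m=2$, since $\frac{\beta+1}{2}<\frac{\alpha+1}{3}$.
\item $\beta<\frac\alpha2$: here $m\ge3$.
\end{itemize}
In the last range, do not argue from $m\ge3$ alone, because $1-\frac1m\ge\frac23$ points the wrong way; the paper's write-up of this subcase slips exactly here. Instead use $\beta(m-1)\le\alpha-1$, which follows from $m-1<\alpha/\beta$ and integrality. This gives $(\beta+1)\frac{m-1}{m}\le\frac{\alpha+m-2}{m}\le\frac{\alpha+1}{3}$, where the last step is equivalent to $(m-3)(\alpha-2)\ge0$.
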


\begin{proof}
Let us define a base partition $b$, the set of hot ranges selected in the first and second scans $h_1$ and $h_2$ in the same manner as in the proof of \cref{thm:double:n_hot}.
\begin{align*}
    b &\coloneqq \{c_i\}_{i=s}^e\ ,\\
    h_1 &\coloneqq \FindHotGreedy(b), \text{and}\\
    h_2 &\coloneqq \FindHot(b) \setminus h_1.\\
\end{align*}
Also, let us define
\begin{align*}
  q &\coloneqq \NQrys\left( b \setminus \bigcup h_1 \right), \text{and}\\
  \Delta q &\coloneqq \NQrys\left( \bigcup h_2 \right).
\end{align*}
This theorem says that $q - \Delta q \le \frac{Q}{P} \max\left\{\frac{\alpha + 1}{3}, 1\right\}$.

By the same discussion as in the proof of \cref{thm:double:n_hot}, we have
\begin{align}
  \beta & = \left\lfloor \frac{\NQrys\left( b \setminus \bigcup h_1 \right)}{Q/P} \right\rfloor \nonumber\\ 
        & = \left\lfloor \frac{q}{Q/P}\right\rfloor.  \label{eq:double:cold_nqry:beta}
\end{align}
 From \cref{thm:greedy:cold_nqry}, we have $q < \alpha \frac{Q}{P}$,
 then $0 \le \beta < \alpha$.
 If $\beta = 0$, we have
 \begin{align*}
  q - \Delta q &= q \\
  &< \frac{Q}{P}  \quad (\because \text{\cref{eq:double:cold_nqry:beta}}) \\
  &\le \frac{Q}{P} \max\left\{\frac{\alpha + 1}{3}, 1\right\}.
 \end{align*}
 Below, we prove $q - \Delta q \le \frac{\alpha + 1}{3} \frac{Q}{P}$ for
 $1 \le \beta < \alpha$. Note that $\frac{\alpha + 1}{3}$ is not less
 than $1$ since we have $\alpha \ge \beta + 1 \ge 2$ in this case.

 Consider the case where $\frac{2\alpha - 1}{3} \le \beta < \alpha$. Let
 $l_e$ be $L\left( e, s, \frac{\beta}{\alpha} \frac{D}{P} \right)$. From
 \cref{eq:double:l_of_window}, \cref{algo:double-scan} for the base
 partition $b$ examines $\{c_i\}_{i=l_e}^e$ as a candidate of
 $\{c_i\}_{i=l_0}^{r_0}$ in \lineref{algo:double-scan:cmp_nqry}.
 Because the algorithm finally selects $\{c_i\}_{i=l_0}^{r_0}$ for which
 the remaining query count $\NQrys\left( \{c_i\}_{i=l_0}^{r_0} \right)$
 reaches its maximum value $\Delta q$,
 \[
  \NQrys\left( \{c_i\}_{i=l_e}^e \setminus \bigcup h_1 \right) \le \Delta q.
 \]
 If $l_e = s$, since this left-hand side is the same as $q$, we have
 $q \le \Delta q$, then
 $q - \Delta q \le 0 \le \frac{\alpha + 1}{3} \frac{Q}{P}$. Otherwise,
 \begin{align}
  q - \Delta q &\le \NQrys\left( b \setminus \bigcup h_1 \right) - \NQrys\left( \{c_i\}_{i=l_e}^e \setminus \bigcup h_1 \right) \nonumber \\
  &= \NQrys\left( \{c_i\}_{i=s}^{l_e - 1} \setminus \bigcup h_1 \right).  \label{eq:double:cold_nqry:big_beta:step1}
 \end{align}
 Let us define
 \[
  n \coloneqq \left\lceil \frac{\SizeOf\left( \{c_i\}_{i=s}^{l_e - 1} \right)}{\frac{1}{\alpha} \frac{D}{P}} \right\rceil.
 \]
 Then, we can prove that $\{c_i\}_{i=s}^{l_e - 1}$ can be covered by $n'$
 disjoint ranges $\{c_i\}_{i=l_j}^{r_j}$ ($j \in [1, n']$) such that
 \begin{align*}
  n' &\le n \\
  r_1 &= l_e - 1, \\
  r_j &= l_{j-1} - 1 \text{ for } j \in [2, n'], \\
  l_j &= L\left( r_j, s, \frac{1}{\alpha} \frac{D}{P} \right) \text{ for } j \in [1, n'], \text{and} \\
  l_1 &= s
 \end{align*}
 by a similar discussion as in the proof of \cref{thm:greedy:cold_nqry}.
 From \cref{thm:greedy:window_nqry}, it follows for each $j \in [1, n']$
 that
 \[
  \NQrys\left( \{c_i\}_{i=l_j}^{r_j} \setminus \bigcup h_1 \right) < \frac{Q}{P}.
 \]
 Therefore, from \cref{eq:double:cold_nqry:big_beta:step1},
 \begin{align}
  q - \Delta q &\le \NQrys\left( \{c_i\}_{i=s}^{l_e - 1} \setminus \bigcup h_1 \right) \nonumber \\
  &= \sum_{j=1}^{n'} \NQrys\left( \{c_i\}_{i=l_j}^{r_j} \setminus \bigcup h_1 \right) \nonumber \\
  &< n' \frac{Q}{P} \nonumber \\
  &\le n \frac{Q}{P} \nonumber \\
  &= \left\lceil \frac{\SizeOf\left( \{c_i\}_{i=s}^{l_e - 1} \right)}{\frac{1}{\alpha} \frac{D}{P}} \right\rceil \frac{Q}{P}.  \label{eq:double:cold_nqry:big_beta:step2}
 \end{align}
 From the definitions of $l_e$ and $L$,
 \begin{align*}
  \SizeOf\left( \{c_i\}_{i=s}^{l_e - 1} \right) &= \SizeOf(b) - \SizeOf\left( \{c_i\}_{i=l_e}^r \right) \\
  &\le \frac{D}{P} - \frac{\beta}{\alpha} \frac{D}{P} \\
  &= (\alpha - \beta) \frac{1}{\alpha} \frac{D}{P}.
 \end{align*}
 Then, from \cref{eq:double:cold_nqry:big_beta:step2},
 \begin{align*}
  q - \Delta q &\le \lceil \alpha - \beta \rceil \frac{Q}{P} \\
  &= (\alpha - \beta) \frac{Q}{P} \quad (\because \alpha, \beta \in \mathbb{N}) \\
  &\le \left( \alpha - \frac{2\alpha - 1}{3} \right) \frac{Q}{P} \quad \left(\because \beta \ge \frac{2\alpha - 1}{3} \right) \\
  &= \frac{\alpha + 1}{3} \frac{Q}{P}.
 \end{align*}

 We have proved the theorem for $\beta = 0$ and
 $\frac{2\alpha - 1}{3} \le \beta < \alpha$. From here, we forcus on the
 case where $1 \le \beta < \frac{2\alpha - 1}{3}$. Let
 \begin{equation}
  m \coloneqq \left\lceil \frac{\SizeOf(b)}{\frac{\beta}{\alpha}\frac{D}{P}} \right\rceil.  \label{eq:double:cold_nqry:n}
 \end{equation}
 Then, we can prove that the base partition $b$ can be covered by $m'$ disjoint ranges $\{c_i\}_{i=l_j'}^{r_j'}$ ($j \in [1, m']$) such that
\begin{align*}
  m' &\le m, \\
  r_1' &= e, \\
  r_j' &= l_{j-1}' - 1 \text{ for } j \in [2, m'], \\
  l_j' &= L\left( r_j', s, \frac{\beta}{\alpha} \frac{D}{P} \right) \text{ for } j \in [1, m'], \text{and} \\
  l_1' &= s
\end{align*}
  by a similar discussion as in the proof of \cref{thm:greedy:cold_nqry}.
The for-loop on \lineref{algo:double-scan:for} in \cref{algo:double-scan} examines ranges $\{c_i\}_{i=l}^{r}$ including all such ranges and selects the one with the maximum number of remaining queries
\[
\NQrys\left(\{c_i\}_{i=l}^{r} \setminus \bigcup h_1\right).
\]
Because hot ranges in $h_2$ are created by dividing the selected range in this for-loop, $\Delta q = \NQrys\left( \bigcup h_2 \right)$ is the number of queries for the selected range.
Thus,
 \begin{equation}
  {}^\forall j \in [1, m'], \NQrys\left(\{c_i\}_{i=l_j'}^{r_j'} \setminus \bigcup h_1\right) \le \Delta q
 \end{equation}
holds.

 Therefore,
 \begin{align*}
  q &= \NQrys\left(b \setminus \bigcup h_1\right) \\
  &= \sum_{j=1}^{m'} \NQrys\left(\{c_i\}_{i=l_j'}^{r_j'} \setminus \bigcup h_1\right) \\
  &\le m' \Delta q \\
  &\le m \Delta q.
 \end{align*}
 Then, 
 \begin{align}
  q - \Delta q &\le q - \frac{q}{m} \nonumber \\
  &= q\left(1 - \frac{1}{m}\right) \nonumber \\
  &= q\left(1 - \frac{1}{\left\lceil \frac{\SizeOf(b)}{\frac{\beta}{\alpha}\frac{D}{P}} \right\rceil}\right) \quad (\because \text{\cref{eq:double:cold_nqry:n}}) \nonumber \\
  &= q\left(1 - \frac{1}{\left\lceil \frac{\alpha}{\beta} \right\rceil}\right)  \label{eq:double:cold_nqry:mid_beta:step1}
 \end{align}
 Here, let $r \coloneqq \frac{q}{Q/P} - \beta$. We have
 $r \in [0, 1)$ by \cref{eq:double:cold_nqry:beta}.  Using this
 to evaluate the right-hand side of \cref{eq:double:cold_nqry:mid_beta:step1},
 \[
  q - \Delta q \le (\beta + r) \frac{Q}{P} \left(1 - \frac{1}{\left\lceil \frac{\alpha}{\beta} \right\rceil}\right).
 \]
 Because this right-hand side is monotonically increasing with respect
 to $r$, from $r < 1$, we have
 \begin{equation}
  q - \Delta q < (\beta + 1) \frac{Q}{P} \left(1 - \frac{1}{\left\lceil \frac{\alpha}{\beta} \right\rceil}\right).  \label{eq:double:cold_nqry:mid_beta:step2}
 \end{equation}

 Below, we will consider two different cases:
 $1 \le \beta < \frac{\alpha}{2}$ and
 $\frac{\alpha}{2} \le \beta < \frac{2\alpha - 1}{3}$. When
 $1 \le \beta < \frac{\alpha}{2}$, we have
 $\left\lceil \frac{\alpha}{\beta} \right\rceil \ge 3$. Therefore,
 \begin{align*}
  q - \Delta q &< (\beta + 1) \frac{Q}{P} \left(1 - \frac{1}{\left\lceil \frac{\alpha}{\beta} \right\rceil}\right) \quad (\because \text{\cref{eq:double:cold_nqry:mid_beta:step2}}) \\
  &\le (\beta + 1) \frac{Q}{P} \left(1 - \frac{1}{3}\right).
 \end{align*}
 Because $\beta < \frac{\alpha}{2}$ implies
 $\beta \le \frac{\alpha - 1}{2}$ when $\alpha$ and $\beta$ are
 integers,
 \begin{align*}
  \therefore q - \Delta q &\le \left( \frac{\alpha - 1}{2} + 1 \right) \frac{Q}{P} \left(1 - \frac{1}{3}\right) \\
  &= \frac{\alpha + 1}{3} \frac{Q}{P}.
 \end{align*}

 When $\frac{\alpha}{2} \le \beta < \frac{2\alpha - 1}{3}$, we have
 $\left\lceil \frac{\alpha}{\beta} \right\rceil = 2$. Therefore,
 \begin{align*}
  q - \Delta q &< (\beta + 1) \frac{Q}{P} \left(1 - \frac{1}{\left\lceil \frac{\alpha}{\beta} \right\rceil}\right) \quad (\because \text{\cref{eq:double:cold_nqry:mid_beta:step2}}) \\
  &= (\beta + 1) \frac{Q}{P} \left(1 - \frac{1}{2}\right) \quad \left(\because \left\lceil \frac{\alpha}{\beta} \right\rceil = 2 \right) \\
  &< \left( \frac{2\alpha - 1}{3} + 1 \right) \frac{Q}{P} \left(1 - \frac{1}{2}\right) \quad \left(\because \beta < \frac{2\alpha - 1}{3} \right) \\
  &= \frac{\alpha + 1}{3} \frac{Q}{P}.
 \end{align*}
\end{proof}
\end{Proof-in-Appendix}

\end{document}